\theoremstyle{definition}
\newtheorem{theorem}{Theorem}[section]
\newtheorem{lemma}[theorem]{Lemma}
\newtheorem{corollary}[theorem]{Corollary}
\newcommand{\E}{{\rm I\kern-.3em E}}
\newcommand{\lmax}{l_{\text{max}}}
\DeclareMathOperator{\argmin}{\arg\min}
\DeclareMathOperator{\argmax}{\arg\max}
\def\dmodel{d_{\mbox{\small model}}}
\def\dtypebytes{\mbox{dtype\_bytes}}
\title{\Large\bfseries
Optimizing Resource Allocation for Geographically-Distributed Inference by Large Language Models
}
\author[1]{Tingyang Sun}
\author[1]{Ting He\thanks{This work was partly supported by the National Science Foundation under award CNS-2106294. Parimal was supported by Qualcomm Inc. under Qualcomm University Relations 6G India, Anusandhan National Research Foundation (ANRF) under Grant CRG/2023/008854, and the Office of International Relations at IISc under the IISc--PSU collaboration research grant. Corresponding author: \texttt{tinghe@psu.edu}.}}
\author[2]{Bo Ji}
\author[3]{Parimal Parag}
\affil[1]{\small Department of Computer Science and Engineering, Pennsylvania State University, University Park, PA, USA}
\affil[2]{\small Department of Computer Science, Virginia Tech, Blacksburg, VA, USA}
\affil[3]{\small Department of Electrical Communication Engineering, Indian Institute of Science, Bangalore, India}
\date{}
\begin{document}
\maketitle

\begin{abstract}
Large language models (LLMs) have demonstrated extraordinary performance in many artificial intelligence (AI) tasks but are expensive to use, even after training, due to their requirement of high-end GPUs. Recently, a distributed system called PETALS was developed to lower the barrier for deploying LLMs by splitting the model blocks across multiple servers with low-end GPUs distributed over the Internet, which was much faster than swapping the model parameters between the GPU memory and other cheaper but slower local storage media. However, the performance of such a distributed system critically depends on the resource allocation, and how to do so optimally remains unknown. In this work, we present the first systematic study of the resource allocation problem in distributed LLM inference, with focus on two important decisions: block placement and request routing. Our main results include: (i) experimentally validated performance models that can predict the inference performance under given block placement and request routing decisions, (ii) a formulation of the offline optimization of block placement and request routing as a mixed integer linear programming (MILP) problem together with the NP-hardness proof and a polynomial-complexity algorithm with  guaranteed performance, and (iii) an adaptation of the offline algorithm for the online setting with the same performance guarantee under bounded load. Through both experiments and experimentally-validated simulations, we have verified that the proposed solution can substantially reduce the inference time compared to the state-of-the-art solution in diverse settings with geographically-distributed servers. As a byproduct, we have also developed a light-weighted CPU-only simulator capable of predicting the performance of distributed LLM inference on GPU servers, which can evaluate large deployments and facilitate future research for researchers with limited GPU access. 
\end{abstract}

\noindent\textbf{Keywords:}
Large language model; model parallelism; block placement; request routing.

\section{Introduction}\label{sec:Introduction}

Large language models (LLMs) have yielded exceptional performance in many artificial intelligence (AI) tasks. Modern LLMs pre-trained on large datasets have demonstrated promising utility for diverse applications~\cite{Brown20NeurIPS}. 
However, to achieve state-of-the-art accuracy, such models often need to contain over 50 billion (50B) parameters, which makes it expensive to work with these models due to the requirement of expensive hardware such as high-end GPUs. 

To broaden the accessibility to LLMs, several research groups have open-sourced their pre-trained LLMs \cite{Zhang2022OPT,BigScience23BLOOM,Touvron2023llama,Touvron2023llama2}. However, the sheer size of these models makes it challenging to adopt them, even if no training is needed. Prior studies have shown that the bottleneck of running LLMs is not in the computation speed, as even a consumer-grade GPU like GeForce RTX 3070 has enough processing power to run a complete inference step of a 176B-parameter model within a second \cite{Nvidia2020}. Instead, the bottleneck is in the GPU memory. A 100B-parameter model will require 200 GB of GPU memory to load the model parameters at the standard half precision, and many LLMs are even larger (e.g., GPT-3 has 175B parameters and GPT-4 has over $1.7$ trillion parameters). Providing this much GPU memory is financially challenging for many  users. Straightforward approaches to address this bottleneck such as compressing the model or offloading the model parameters to cheaper storage media will introduce undesirable side effects. For example, a compressed LLM can still be too large for consumer-grade GPUs and too much compression will lower the inference accuracy~\cite{chen2024comprehensive}, while offloading the model parameters to larger but slower storage media like RAM or SSD 
will be very slow due to the limited bandwidth between such storage media and the GPU~\cite{Borzunov23NeurIPS}. 

Recently, a model-parallel approach has been proposed to address the above challenge through resource pooling across distributed devices. This approach distributes a model to multiple devices at the granularity of transformer \emph{blocks} (a.k.a. \emph{pipeline parallelism}~\cite{Huang19NeurIPS,Narayanan19SOSP,Yang21MLsys}) or neurons (a.k.a. \emph{tensor parallelism}~\cite{Krizhevsky17CommACM,Ben-Nun19ACMCompSurv,Tang20arXiv}) to run large models on devices with small GPUs. In particular, \cite{Borzunov23NeurIPS} has shown that pipeline parallelism can be used to run LLM inference tasks over \emph{geographically-distributed} servers, each with only a few GB of GPU memory, at a much faster rate than local parameter offloading. This is achieved by letting each server host a subset of consecutive blocks, and each inference \emph{request} (for autoregressive sequence generation) routed through a \emph{chain of servers} that collectively host the entire model, as illustrated in Fig.~\ref{fig:system_architecture}. To reduce the communication cost, caching is used pervasively in such systems. For example, in a state-of-the-art system called PETALS~\cite{Borzunov23NeurIPS}, the client of each request uses client-side caches to store the input history for each invoked server, so that once a server fails, the client can use the cached input to bring up a backup server without repeating the processing at other servers; meanwhile, each server uses server-side caches (a.k.a. \emph{attention caches}) to store the past computation results for each ongoing request (i.e., the key-value pairs for the past tokens~\cite{Vaswani17NeurIPS}), so that at each inference step (after the prefill phase), the client only needs to send the partially processed embedding of a single token to the next server and receive the processed embedding back, which only exchanges a few KB of data~\cite{Borzunov23NeurIPS}. 
The client-side caching implies a hub-spoke communication pattern as shown in Fig.~\ref{fig:system_architecture}, where all the communications during an {inference session} are anchored at the client, who will forward the partially processed embeddings between consecutive servers in order to update the input cache for each server.  

\begin{figure}[!t]
   \centerline{\includegraphics[width=0.55\linewidth]{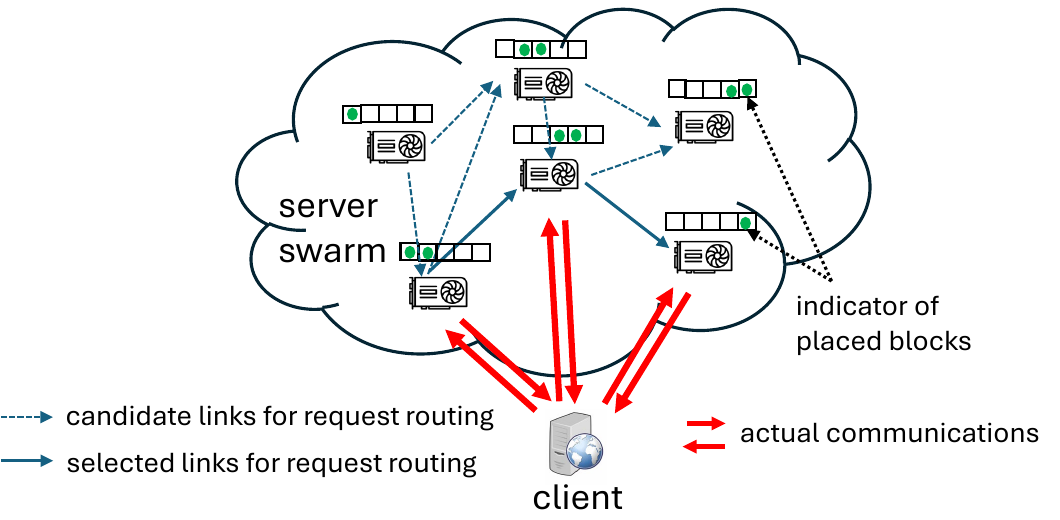}}
   \vspace{-1em}
    \caption{System architecture for pipeline-parallel LLM inference using client-centric communication. }
    \label{fig:system_architecture}
    \vspace{-.05em}
\end{figure}

In this work, we consider LLM inference over geographically-distributed servers using \emph{pipeline parallelism} and \emph{client-centric communication} as illustrated in Fig.~\ref{fig:system_architecture}. Although the feasibility of such systems has been validated in \cite{Borzunov23NeurIPS}, there is a lack of fundamental understanding of how to optimally manage their performance. 
In particular, the current solution in \cite{Borzunov23NeurIPS} relies on heuristics for resource allocation, e.g., selecting the blocks for each server based on a heuristic ``throughput'' metric and selecting the server chain for each request based on a graph with heuristic ``edge weights''~\cite{PetalsGitHub}.  
It remains open how to optimize the performance of such systems, while taking into account the unique characteristics of GPUs and LLM inference tasks. This work aims at filling this gap by rigorously formulating and tackling the \emph{block placement and request routing} problem for a distributed LLM inference system with the architecture in Fig.~\ref{fig:system_architecture}, using PETALS~\cite{Borzunov23NeurIPS} as a concrete example. 
While our system model and evaluations are based on PETALS, our approach is applicable to any pipeline-parallel LLM inference systems  as explained later. \looseness=-1  

\subsection{Related Work}\label{subsec:Related Work}

Below we will briefly review existing approaches to overcome the GPU memory limitation in running large models and related resource allocation problems. 

\textbf{Model parallelism:}
Model parallelism overcomes the GPU memory limitation by distributing the model parameters to multiple devices. 
It is further divided into \emph{pipeline parallelism} that assigns model parameters at the granularity of layers~\cite{Huang19NeurIPS,Narayanan19SOSP,Yang21MLsys}, and \emph{tensor parallelism} that assigns model parameters at the granularity of neurons~\cite{Krizhevsky17CommACM,Ben-Nun19ACMCompSurv,Tang20arXiv}. Tensor parallelism provides more flexibility in splitting the model at the cost of a higher communication overhead due to the all-to-all communication between devices hosting adjacent layers, which makes it more suitable for distributed inference within a single multi-GPU server or a cluster of highly connected servers~\cite{vLLM_distributed}. In contrast, pipeline parallelism only requires communication between pairs of devices at some loss of flexibility, which makes it more suitable for distributed inference across nodes~\cite{Borzunov23NeurIPS}.  PETALS~\cite{Borzunov23NeurIPS} uses pipeline parallelism.    
Other than PETALS, there are a few more systems that support distributed LLM inference across nodes with different system characteristics. For example, vLLM with Ray Serve~\cite{vLLM_Ray} uses tensor parallelism within each node and pipeline parallelism across nodes by letting servers processing adjacent blocks directly communicate partially processed tokens, Nvidia Dynamo~\cite{Nvidia_Dynamo} implements several LLM-specific execution capabilities such as disaggregating prefill\&decoding phases and offloading attention caches to memory hierarchies, and Amazon EKS~\cite{Amazon_EKS} simplifies the execution of multi-node inference workloads through integration with AWS services like Amazon EFS and Elastic Fabric Adapter.    
These systems are mainly designed for datacenter environments with high-bandwidth low-latency connections. 
In contrast, PETALS~\cite{Borzunov23NeurIPS} is the leading system for LLM inference in geographically-distributed environments. 
In this work, \emph{we focus on geographically-distributed LLM inference and thus build our system model after PETALS}, but our solution is generally applicable to any pipeline-parallel LLM inference systems, possibly with minor adaptations (e.g., see Remark after \eqref{eq:request processing time}). While the extension of our solution to tensor parallelism is nontrivial, we note that tensor parallelism is generally considered impractical for geographically-distributed settings.\looseness=-1
%
%

\textbf{Parameter offloading:}
Parameter offloading overcomes the GPU memory limitation by swapping model parameters between the GPU memory and a slower but larger storage, typically RAM or SSD~\cite{Pudipeddi20arXiv,Ren21UsenixATC,Rajbhandari21SC}. When using the model for inference, model parameters are loaded into the GPU just before being used, which in theory allows a server with a small GPU to process any large models that fit into its storage. The drawback of parameter offloading is the overhead in loading and unloading the model parameters, which can take a substantial amount of time (e.g., at least 11 seconds per token for a 175B-parameter model~\cite{Borzunov23NeurIPS}). For large models (with 50B+ parameters), \emph{parameter offloading was shown to be at least an order of magnitude slower than model parallelism}~\cite{Borzunov23NeurIPS}.

\textbf{Service function chaining:}
Under pipeline parallelism, each inference request is served by a chain of servers as illustrated in Fig.~\ref{fig:system_architecture}, which makes the corresponding resource allocation problem (e.g., how to place blocks and how to route requests) analogous to the problem of \emph{service function (SF)} placement and flow routing in the context of \emph{service function chaining (SFC)}~\cite{Medhat17CommMag}. SFC provides application-specific services for in-transit traffic  through a chain of SFs. 
SFC placement and routing has been actively studied in recent years, typically with an objective of minimizing operation cost, maximizing flow rate, optimizing Quality of Service (QoS), or optimizing a combination of these metrics~\cite{Chen20AIS}. The problems are usually NP-hard and solved by polynomial-time heuristics without guaranteed performance~\cite{Jang17JSAC,Chen20AIS}, except for a few special cases such as \cite{Zhang17ICDCS,Guo18INFOCOM,Shang19ICPP} with approximation guarantee. 
Despite the conceptual similarity, our problem is fundamentally different from SFC placement and routing. \emph{First of all}, the two domains have very different properties, e.g., different SFs may have heterogeneous resource requirements~\cite{Jang17JSAC,Zhang17ICDCS,Jalalitabar19NFV-SDN} and some SFs may change the flow rate~\cite{Ma17INFOCOM} or branch an input flow into multiple output flows~\cite{Jalalitabar19NFV-SDN}, while each LLM typically has identically structured transformer blocks that require the same amount of resource and input data size. 
%
\emph{Moreover}, the bottleneck resource in LLM inference 
is GPU memory, which differs from the bottleneck resource in SFC (i.e., CPU cycles) in that: (i) the need of GPU memory is not elastic, making congestion minimization approaches like \cite{Shang19ICPP} inapplicable, (ii) the GPU memory used to host a block is shared by all the requests processed by the same block, differing from the additive resource consumption for CPU cycles~\cite{Zhang17ICDCS,Guo18INFOCOM}, and (iii) nontrivial amounts of GPU memory are needed for both hosting blocks and serving inference sessions (due to the need of holding attention caches), causing resource contention between the placement decision and the routing decision that did not exist in service function chaining or other traditional ``placement and routing'' problems. Additionally, the number of blocks in an LLM can be much larger than the number of SFs in a typical SFC (e.g., BLOOM-176B has 70 blocks and GPT-3 has 96 blocks), making the formulations for SFC that enumerate the processing units \cite{Jang17JSAC,Zhang17ICDCS,Guo18INFOCOM,Shang19ICPP}  highly inefficient. Some LLM inference systems also have unique characteristics such as the hub-spoke communication pattern shown in Fig.~\ref{fig:system_architecture}. 
\emph{There is thus a need of studying resource allocation problems tailored to distributed LLM inference systems}.

\subsection{Summary of Contributions}

We perform a systematic study of the resource allocation problem in distributed LLM inference, with the following contributions: \begin{enumerate}
    \item We formulate a joint optimization of the placement of model blocks at servers (called \emph{block placement}) and the selection of server chains for inference requests (called \emph{request routing}) based on performance models experimentally validated on a real distributed LLM inference system called PETALS~\cite{Borzunov23NeurIPS}. Our formulation features a compact representation of the decision variables that allows the problem to be formulated as a mixed integer linear programming (MILP) problem with a polynomial number of variables/constraints. 
    \item We prove the NP-hardness of the formulated optimization problem, and develop a three-step algorithm (Alg.~\ref{Alg:CG-BPRR}) by decomposing the joint optimization into three subproblems, each optimizing one type of decision variables, which achieves a polynomial complexity and a guaranteed average inference time per token. 
    \item We then consider the online setting with dynamically arriving requests, for which we show that the offline algorithm can be adapted into a two-time-scale solution that solves the block placement problem via robust optimization and the (online) request routing problem via a variation of shortest-path routing, with link costs designed to approximate the total inference time (including waiting, communication, and computation) incurred at each server. The online algorithm inherits the same performance guarantee as the offline algorithm as long as the number of concurrent requests is within a design limit, which can be tuned to trade off the waiting time and the per-token inference time after waiting. 
    \item We compare our algorithms against the state-of-the-art algorithms in \cite{Borzunov23NeurIPS} through both controlled experiments and experimentally-validated simulations in a variety of settings with geographically distributed servers. The results not only show substantial performance improvements by our algorithms ($60$--$80\%$ smaller inference times), but also provide insights on the reasons for such improvements as well as the potential room for further improvements. A byproduct of our evaluation is a CPU-only simulator that is validated to produce reasonable predictions of the actual performance of distributed LLM inference on GPU servers, which can enable the evaluation of large deployments and facilitate future research for researchers with limited GPU access.  
\end{enumerate}

\textbf{Roadmap.} Section~\ref{sec:Background and Formulation} introduces the problem, Section~\ref{sec:Joint Block Placement and Request Routing} presents the optimization formulation and the proposed algorithms together with their analysis, Section~\ref{sec:Performance Evaluation} presents the evaluation results, and Section~\ref{sec:Conclusion} concludes the paper. The response to previous reviews, proofs, and other supporting materials are provided in the Appendix. 

\section{Problem Formulation}\label{sec:Background and Formulation}

\subsection{Target Application Scenario} 

We consider geographically-distributed inference by a pre-trained LLM with a decoder-only architecture~\cite{Radford2018ImprovingLU}, which is commonly adopted by state-of-the-art LLMs (e.g., GPT and its variants~\cite{Radford2018ImprovingLU,Radford2019LanguageMA,Brown20NeurIPS}). The LLM is composed of thin input/output layers to map each input token to an embedding and each output embedding to a probability distribution of the next token, as well as multiple layers of identically structured \emph{transformer blocks} (hereafter referred to as \emph{blocks}), each containing a multi-headed self-attention sub-layer, a fully connected feed-forward sub-layer, and corresponding layer normalizations~\cite{Radford2018ImprovingLU}. The input/output layers only contain a small percentage of the model parameters (e.g., $<3\%$ for BLOOM-176B~\cite{Borzunov23NeurIPS}) and can thus be hosted by the client. The blocks contain most of the model parameters and are thus delegated to the servers~\cite{Borzunov23NeurIPS}. 
Given a pre-trained LLM with $L$ blocks, we consider a distributed system with client-side and server-side caching like PETALS~\cite{Borzunov23NeurIPS}, which uses this LLM to serve autoregressive sequence generation requests (hereafter referred to as \emph{requests}) via a hub-spoke communication pattern as shown in Fig.~\ref{fig:system_architecture}. 
In this work, we focus on supporting \emph{short-prompt long-response queries}, where users ask concise questions to obtain detailed and comprehensive answers. This is a common type of queries, particularly in educational and technical contexts, and adheres to the best practices in using LLMs~\cite{Impact_prompt_length,Shorter_prompt}.  

\subsection{System Model}\label{subsec:System Model}

Suppose that the system contains a set of servers $V_s$ and a set of clients $V_c$, communicating through TCP connections according to the architecture shown in Fig.~\ref{fig:system_architecture}. Each client represents an ingress point for inference requests, which can be a host owned by end users or a proxy server submitting requests on behalf of end users. 
While it is possible for a server to directly forward its output to the next server to continue processing, PETALS~\cite{Borzunov23NeurIPS} lets the client perform such forwarding as in Fig.~\ref{fig:system_architecture}, in order to maintain client-side caches of the input history for each invoked server to achieve fault tolerance. \looseness=0 

\textbf{Request model:}
For each inference request, the client initiates an \emph{inference session}. Due to the one-one correspondence between inference requests and inference sessions, hereafter we will use ``request'' and ``session'' interchangeably. 
We will first consider the offline case with a given set of requests to understand the structure of the problem, and then address the online case with sequentially arriving requests. 
To formulate the offline case, we use $\mathcal{R}$ to denote the total set of inference requests and $\mathcal{R}_c$ to denote the set of requests from client $c\in V_c$. Each request has up to $\lmax^I$ input tokens and generates up to $\lmax$ output tokens\footnote{The actual number of generated tokens may be smaller if the end-of-sequence token is detected, but the maximum number of output tokens specified in the request is used in resource allocation.}, where $\lmax^I$ and $\lmax$ are system parameters with $\lmax^I+\lmax$ upper-bounded by the maximum sequence length supported by the LLM. For short-prompt long-response queries, 
we have $\lmax^I\ll \lmax$. 
To formulate the online case, we need to additionally model the state of each server as detailed in Section~\ref{subsubsec:Online Request Routing}. 

\textbf{Inference time model:}
Each server $j\in V_s$ has a \emph{per-block processing time} of up to $\tau^I_j(\lmax^I)$ during the prefill phase (i.e., phase for processing the input to generate the first token) and a \emph{per-token-per-block processing time} of $\tau_j$ during the decoding phase (i.e., phase for autoregressively generating the remaining tokens). Our experiments have validated $\tau^I_j(\lmax^I)$ to be independent of the output length $\lmax$ and $\tau_j$ to be independent of both the input length $\lmax^I$ and the output length $\lmax$ (see Fig.~\ref{fig:time_input_length}--\ref{fig:time_output_length}). 
Moreover, the connection between each client $c$ and each server $j$ has a \emph{per-input round trip time (RTT)} of up to $t^I_{cj}(\lmax^I)$, representing the communication time (including serialization/deserialization, propagation, queueing, and transmission delays) for transmitting one input sequence  from the client to the server and the processed input sequence from the server back to the client. Similarly, the connection also has a \emph{per-token RTT} of $t_{cj}$, representing the communication time for transmitting one token from the client to the server and the processed token from the server back to the client. 
%
This means that if a request from client $c$ with input length $\lmax^I$ and output length $\lmax$ is processed by a chain of servers $p$ with each server $j\in p$ processing $k_j$ blocks, it will incur a \emph{total inference time} (including all the client-server communication time and all the server processing time) of 
\begin{align}
 \sum_{j\in p}\left(t^I_{cj}(\lmax^I) + k_j \tau^I_j(\lmax^I) \right) + (\lmax-1) \sum_{j\in p}(t_{cj}+k_j \tau_j), \label{eq:request processing time}
\end{align}
where $\sum_{j\in p}\left(t^I_{cj}(\lmax^I) + k_j \tau^I_j(\lmax^I) \right)$ is the time to query the servers for the first token, and $\sum_{j\in p}(t_{cj}+k_j \tau_j)$ is the time to query the servers for each of the remaining tokens. In the case of exceptions (e.g., out-of-memory error), additional delays will apply; see Section~\ref{subsubsec:Online Request Routing} for details. 
In addition, each session incurs some delays at the client due to local computations such as tokenizing the input sequence and extracting the next token from each embedding processed by the servers, but since these delays are not affected by the resource allocation at servers (see Section~\ref{subsec:Resource Allocation Problem}), they only contribute a constant shift which is ignored in our model. 
\emph{Remark:} The above model assumes client-centric communication as in Fig.~\ref{fig:system_architecture}. If adjacently traversed servers directly communicate as in \cite{vLLM_Ray}, the communication times will become $\sum_{(i,j)\in p}t^I_{ij}(\lmax^I)$ for the first token and $\sum_{(i,j)\in p}t_{ij}$ for each of the subsequent tokens, where $t^I_{ij}(\lmax^I)$/$t_{ij}$ is the per-input/per-token one-way delay from node $i$ to node $j$. Our optimization formulation and solution are easily amendable to allow direct server-server communications by simply replacing the client-server RTT $t_{cj}$ with the server-server one-way delay $t_{ij}$.

\textbf{Memory consumption model:}
Each server $j\in V_s$ has a GPU memory of $M_j$, which denotes the effective memory capacity for storing model parameters and attention caches (see Remark after \eqref{eq:model of total memory consumption}). 
The \emph{size of each block} is $s_m$ (bytes), given by the number of model parameters per block times the number of bytes per parameter. In addition, each server holds an attention cache of the past key-value pairs for each ongoing request routed through the server and each block it processes for that request. For a total sequence length of $\lmax^I+\lmax$, each attention cache stores two tensors (one for the keys and one for the values), each containing $\dmodel\cdot  (\lmax^I+\lmax)$ parameters, where $\dmodel$ is the embedding dimension of the LLM. Thus, the \emph{size of each attention cache} is $s_c:=2\dmodel\cdot (\lmax^I+\lmax) \cdot \dtypebytes$ (bytes), where `$\dtypebytes$' is the number of bytes per parameter in the cache (usually $\dtypebytes=2$).  
This means that a server $j$ storing $m_j$ blocks and processing $k^r_j$ blocks for each request $r$ has a \emph{total memory consumption} of 
\begin{align}\label{eq:model of total memory consumption}
s_m m_j + s_c \sum_{r\in \mathcal{R}} k^r_j.
\end{align}

\emph{Remark:} In addition to storing the model parameters for the hosted blocks and the attention caches, the GPU memory also needs to hold the CUDA context (depending on the CUDA version and the device) as well as a small number of intermediate variables generated during inference (assuming in-place updating), but this space is invariant to the number of placed blocks and the number of hosted inference sessions, and can thus be modeled as a constant overhead in the memory consumption. In practice, there is also some waste of GPU memory due to memory fragmentation. Thus, the memory capacity $M_j$ is the effective memory for resource allocation and should be slightly smaller than the physical memory capacity 
to avoid the out-of-memory errors. 

\begin{figure}[t!]
\begin{minipage}{.495\linewidth}
\centerline{
\includegraphics[width=1\linewidth,height=2in]{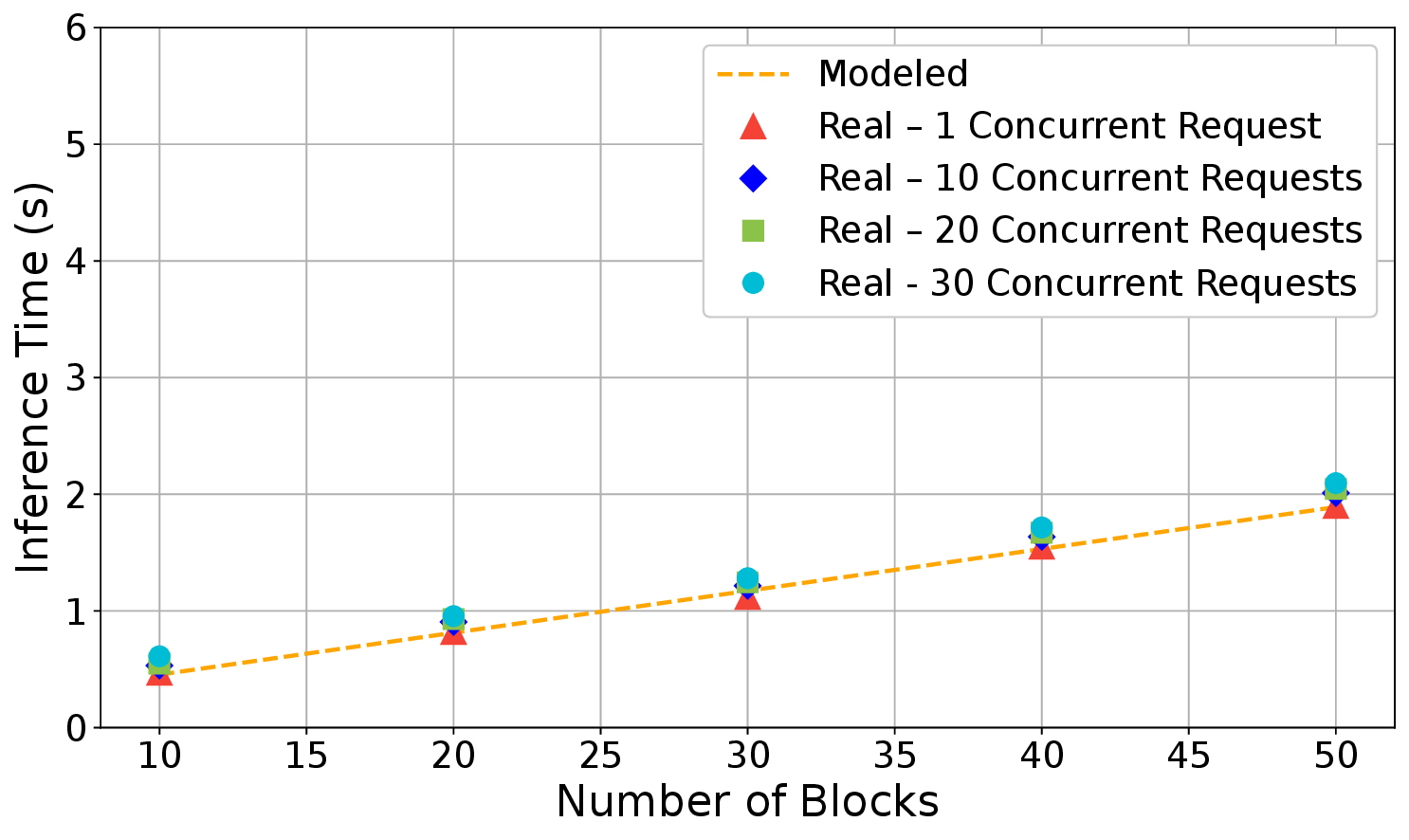}}
\centerline{\scriptsize (a) first token}
\end{minipage}
\begin{minipage}{.495\linewidth}
\centerline{
\includegraphics[width=1\linewidth,height=2in]{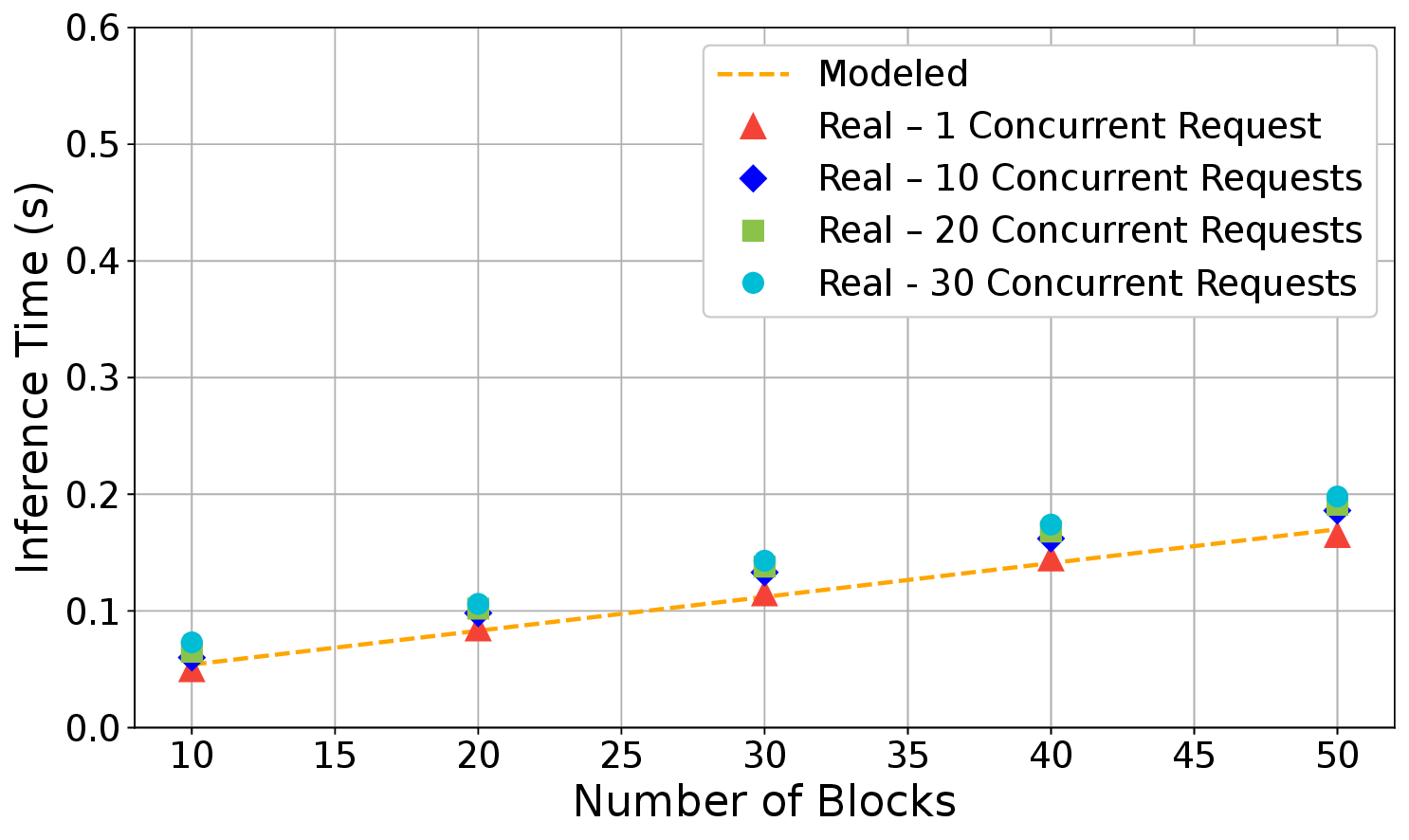}}
\centerline{\scriptsize (b) each of remaining tokens}
\end{minipage}
\vspace{-1em}
\caption{Inference time vs. \#processed blocks on A100 for: (a) first token; (b) each of remaining tokens ($\lmax^I=20,\ \lmax=128$).} \label{fig:time_blocks}
\vspace{-.05em}
\end{figure}

\begin{figure}[t!]
\begin{minipage}{.495\linewidth}
\centerline{
\includegraphics[width=1\linewidth,height=2in]{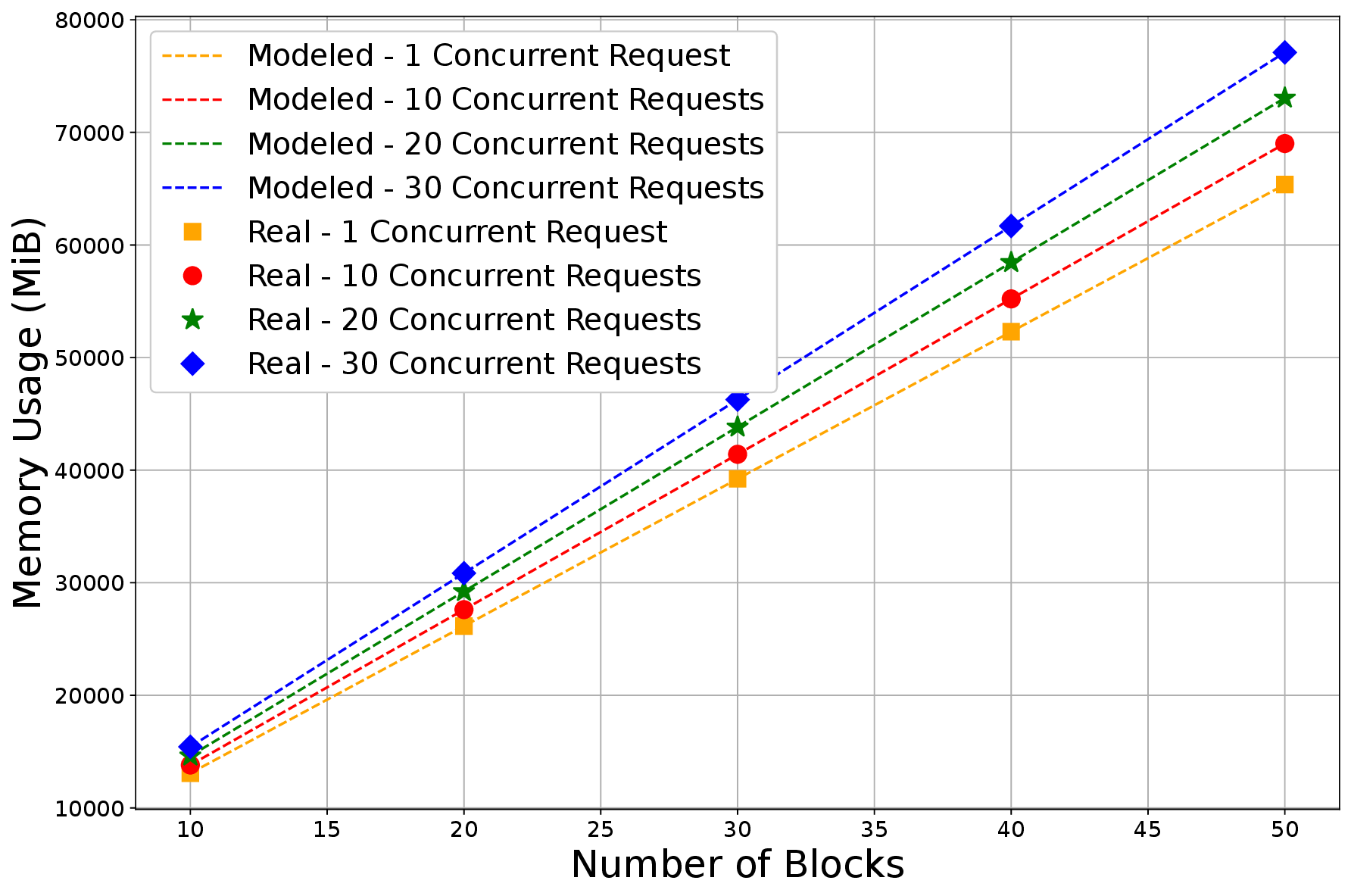}}
\centerline{\scriptsize (a) total memory usage }
\end{minipage}
\begin{minipage}{.495\linewidth}
\centerline{
\includegraphics[width=1\linewidth,height=2in]{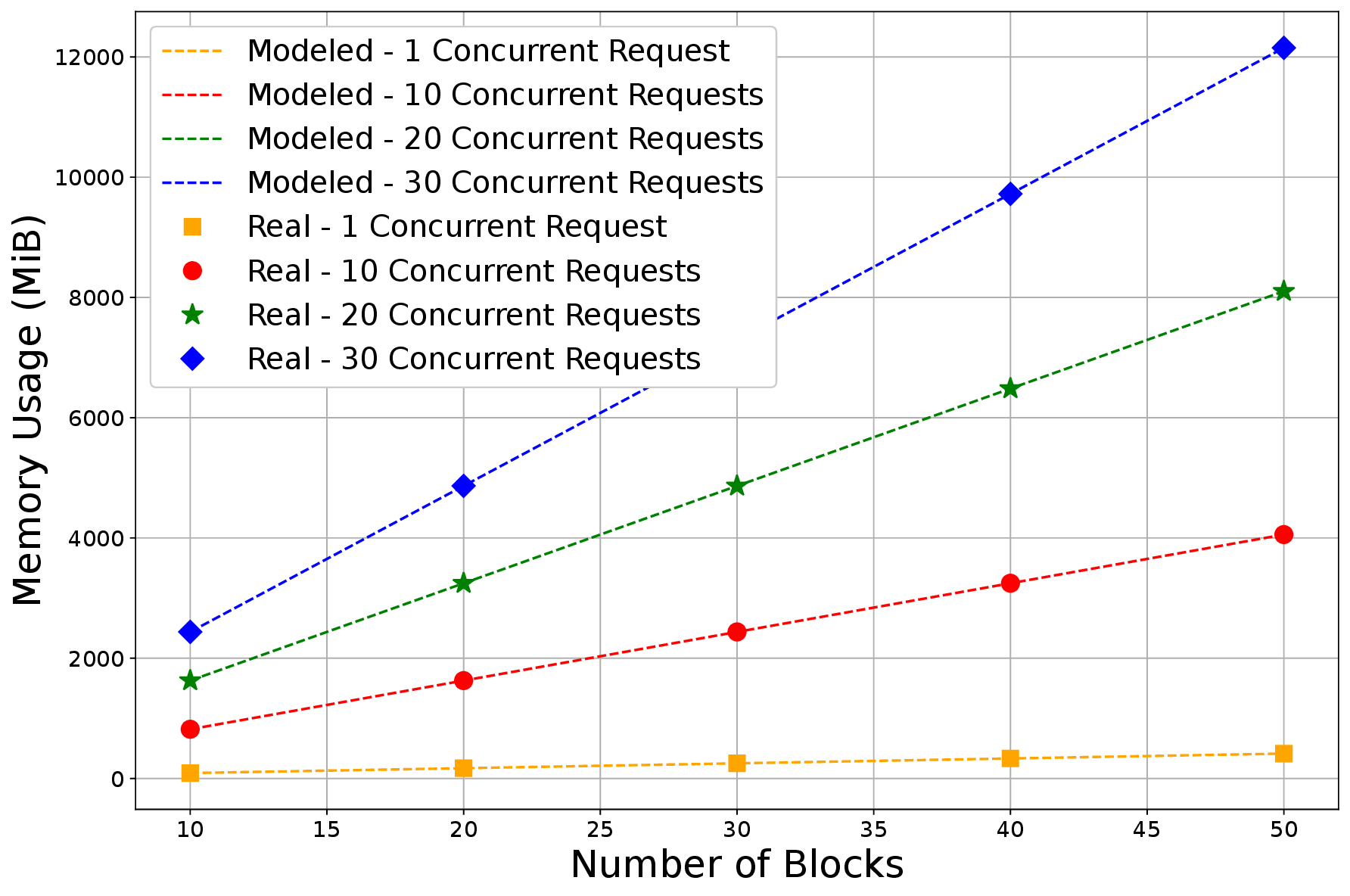}}
\centerline{\scriptsize (b) memory used by attention caches }
\end{minipage}
\vspace{-1em}
\caption{Memory consumption vs. \#blocks on A100 for: (a) total memory usage; (b) attention caches ($\lmax^I=20,\: \lmax = 128$). } \label{fig:memory_block}
\vspace{-.05em}
\end{figure}

\textbf{Experimental validation:} 
We have validated the above models through various experiments based on PETALS~\cite{Borzunov23NeurIPS} and BLOOM-176B~\cite{BigScience23BLOOM}. 
As an example, Fig.~\ref{fig:time_blocks} compares the per-token inference time incurred at a given server according to our model in \eqref{eq:request processing time} with the average time measured from 10 Monte Carlo runs of controlled experiments with various \#processed blocks and \#concurrent requests, in a basic setting of co-located server and client (where the communication time is just the time for serializing and deserializing tokens), and similar results hold under other settings with network delays. The results validate the accuracy of our proposed model 
{in capturing the linear dependency of the inference time on the number of processed blocks. The results also indicate that the inference time for one request is largely independent of the number of concurrent requests at the same server (as long as they all fit into the GPU memory), thanks to the massive parallel computing capability of GPU}. 
Fig.~\ref{fig:time_input_length} and \ref{fig:time_output_length} in \ref{appendix:Additional Model Validation} provide similar comparisons with respect to the input/output length, respectively, again validating the accuracy of our inference time model. 
%
%
Fig.~\ref{fig:memory_block}a compares the GPU memory consumption predicted by \eqref{eq:model of total memory consumption} with the actual allocated GPU memory, which  validates the accuracy of \eqref{eq:model of total memory consumption}. 
A closer examination reveals that the first term in \eqref{eq:model of total memory consumption} (memory used to store blocks) is much larger than the second term (memory used to store attention caches), and only the second term depends on the number of concurrent requests as shown in Fig.~\ref{fig:memory_block}b. 
Nevertheless, it is crucial to model the impact of concurrent requests on memory consumption, as the GPU memory is typically the bottleneck resource in LLM inference (see Remark in Section~\ref{subsec:Resource Allocation Problem}).

\subsection{Resource Allocation Problem}\label{subsec:Resource Allocation Problem}

\begin{table}[]
\small
    \centering
    \begin{tabular}{l|l}
       Notation  &  Description  \\
         \hline
$a_j, m_j$ & index of first block \& \#blocks placed on server $j$ \\
$f^r_{ij}, f^r_p$ & indicator for request $r$ to be routed on link $(i,j)$ or path $p$ \\
\hline
$L$ & total \#blocks in the LLM under consideration\\
$V_c, V_s$ & set of clients/servers \\
$(V, E)$ & logical topology for joint block placement and request routing \\
$(V^c, E^c_{\bm{a},\bm{m}})$ & logical topology for request routing for client $c$ under block placement $(\bm{a},\bm{m})$ (see Alg.~\ref{Alg:CG-BPRR}) \\
$\mathcal{R}, \mathcal{R}_c$ & set of all the requests or requests from client $c$ \\
$\lmax^I, \lmax$ & maximum \#tokens in an input/output sequence \\
$s_m, s_c$ & size per block and size per attention cache \\
$M_j, \overline{f}_j$ & total memory and maximum \#parallel sessions at server $j$ \\
$\tau^I_j(\lmax^I), \tau_j$ & per-block processing time at server $j$ during prefill or decoding phase \\
$t^I_{cj}(\lmax^I), t_{cj}$ & per-input/per-token RTT between client $c$ and server $j$ \\
$t^{c,I}_{ij}(\lmax^I), t^c_{ij}$ & per-token inference time of a first/later token  at server $j$ for a request from client $c$ routed through $(i,j)$ \\
$\widetilde{t}_j, t_{*j}$ & amortized inference time and maximum per-token RTT for server $j$ (see \eqref{eq:amortized inference time}) \\
$C_b, T_b$ & total capacity (in terms of \#sessions) and total amortized inference time for block $b$ (see Alg.~\ref{Alg:CG-BPRR}) \\
$(T^j_r(t), M^j_r(t))_{r=1}^{R_j(t)}$ & state of server $j$ at time $t$ (see Section~\ref{subsubsec:Online Request Routing}) \\
$t^W_{ij}(t)$ & waiting time for link $(i,j)$ at time $t$ (see \eqref{eq:waiting time on (i,j)})
    \end{tabular}
    \vspace{-.5em}
    \caption{Main notations (first two rows: free decision variables; rest: input parameters or dependent variables).
    }
    \label{tab:main_notations}
    \vspace{-.0em}
\end{table} 

We study the joint optimization of how to place the blocks at the servers, referred to as ``\emph{block placement}'', and how to select the chain of servers for each request, referred to as ``\emph{request routing}''. Our objective is to minimize the average time in serving each request within the resource constraints, with focus on the GPU memory constraint. 
Table~\ref{tab:main_notations} lists the main notations used in our presentation. 

\emph{Remark~1:} The objective of this work is to optimize the inference performance in terms of inference time for a given set of available GPU resources. Other performance measures, e.g., cost of renting GPUs or robustness in the face of unreliable nodes, are left for future work.

\emph{Remark~2:} In theory, there are other resource constraints besides GPU memory that can limit how the inference requests can be served. For example, each server has a limited processing capacity, 
and each client-server connection has a limited throughput. 
In practice, however, GPU memory is usually the bottleneck resource that will be saturated before other resources. 
For example, for BLOOM-176B \cite{BigScience23BLOOM}, a server with an A100 (80 GB) GPU and $100$ Mbits/s bandwidth can process over 700 tokens/second when hosting the maximum number of blocks according to \cite{Borzunov23NeurIPS} (53 blocks) and transfer over 400 tokens/second, which should be enough for over 80 concurrent  sessions without causing notable queueing delays, but the available GPU memory only allows 
21 concurrent sessions for $\lmax^I=20$ and $\lmax = 128$, and even fewer for longer sequences. 
We will thus focus on the GPU memory constraint. \looseness=-1

\section{Joint Block Placement and Request Routing (BPRR)}\label{sec:Joint Block Placement and Request Routing}

Based on the experimentally validated model (Section~\ref{subsec:System Model}),  we will first study BPRR in an offline setting to understand the properties of the problem (Section~\ref{subsec:Offline Setting}), and then use the understanding to develop a solution for the online setting (Section~\ref{subsec:Online Setting}). 

\subsection{Preliminaries}\label{subsec:Performance Modeling}

\begin{figure}[!t]
   \centerline{\includegraphics[width=0.45\linewidth]{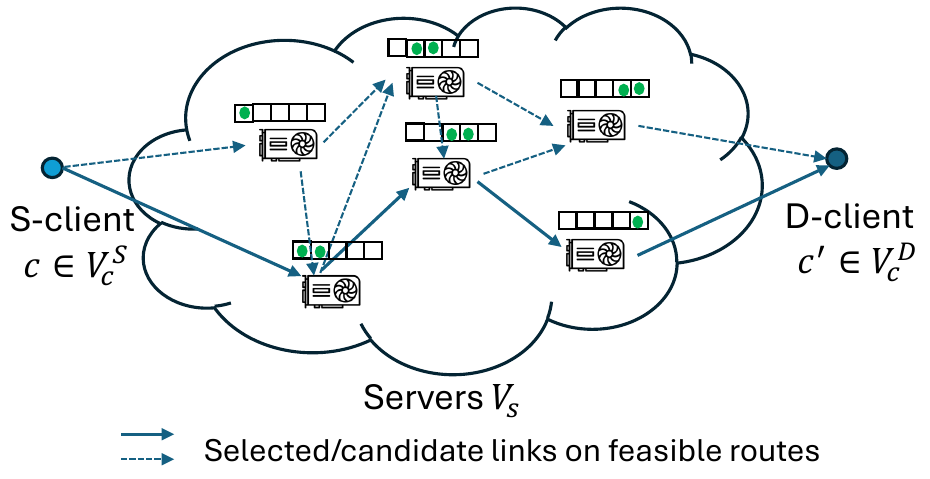}}
   \vspace{-1em}
    \caption{Logical topology $G$ for request routing and route feasibility condition.  
    }
    \label{fig:logical_topology}
    \vspace{-.05em}
\end{figure}

\textbf{Logical topology for routing:} 
To facilitate the optimization formulation, we construct a \emph{logical topology} $G=(V, E)$ as a directed graph illustrated in Fig.~\ref{fig:logical_topology}, where the node set $V := V^S_c\cup V_s\cup V^D_c$ contains the servers $V_s$, the copies of clients $V^S_c$ viewed as sources of requests (called \emph{S-clients}), and the copies of clients $V^D_c$ viewed as destinations of requests (called \emph{D-clients}). We will use $c\in V^S_c$ and $c'\in V^D_c$ to denote the S-client and the D-client corresponding to a real client $c\in V_c$. Define $t_{c c'}:= 0$ and $\tau_{c'}:=0$. The link set $E$ includes the links connecting each S-client with the servers allowed to host the first block, the links between servers allowed to be adjacent on a chain, and the links connecting the servers allowed to host the last block to each D-client. Typically, this leads to full connectivity within $V_s$ and complete bipartite connectivity between $V_s$ and $V^S_c$/$V^D_c$. The split of clients into S-clients and D-clients will facilitate the formulation for request routing.

\textbf{Route feasibility condition:} 
It is easy to see that an optimal block placement should only place \emph{consecutive blocks} at each server, as non-consecutive blocks will increase the number of communications between the client and the server that introduces unnecessary delay. This type of block placement implies that each server can be traversed at most once by each inference session, implying a simple routing path. A \emph{path $p$ in $G$ is feasible} for routing the requests from client $c$ if it leads from the corresponding S-client to the corresponding D-client and traverses a chain of servers that collectively host all the blocks of the LLM in order, as illustrated in Fig.~\ref{fig:logical_topology}. 

Since each server should host consecutive blocks, we can represent the block placement at each server $j\in V_s$ by the \emph{first block $a_j$} and the \emph{number of blocks $m_j$}, i.e., server $j$ stores the blocks in $\{a_j,\ldots,a_j+m_j-1\}$, where\footnote{Throughout this work, we use $[k]$ for a positive integer $k$ to denote the set $\{1,\ldots,k\}$.} $a_j, m_j\in [L]$ and $a_j+m_j-1\leq L$. It is possible for the same block to be placed at multiple servers traversed by an inference session. In this case, we assume that the first server hosting the block will process it according to \cite{Borzunov23petals}. That is, if a session traverses server $i$ immediately before server $j$, the blocks processed at server $j$ will be $\{\max(a_j,a_i+m_i),\ldots,a_j+m_j-1\}$. A chain of servers is feasible for request routing if and only if the block placement satisfies the following condition.  

\begin{lemma}\label{lem:chain feasibility constraint}
Assume that each node $j\in V$ stores all the blocks in $\{a_j,\ldots,a_j+m_j-1\}$, where each S-client $c\in V^S_c$ stores a dummy block $0$ (i.e., $a_c:=0$, $m_c:=1$), and each D-client $c'\in V^D_c$ stores another dummy block $L+1$ (i.e., $a_{c'}:=L+1$, $m_{c'}:=1$). Then a $c$-to-$c'$ path $p$ in $G$ is a feasible routing path for client $c$ if and only if
\begin{align}\label{eq:chain feasibility}
a_j\leq a_i+m_i\leq a_j+m_j-1,~~~\forall (i,j)\in p.
\end{align}
\end{lemma}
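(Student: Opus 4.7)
The plan is to prove the biconditional by induction along the path. Label the path as $p = (v_0, v_1, \ldots, v_K)$ with $v_0 = c \in V^S_c$, $v_K = c' \in V^D_c$, and $v_1, \ldots, v_{K-1} \in V_s$, and let $B_i$ denote the highest block index processed after the first $i$ hops, where $v_i$'s contribution is $\{\max(a_{v_i}, a_{v_{i-1}} + m_{v_{i-1}}), \ldots, a_{v_i} + m_{v_i} - 1\}$ as prescribed by the processing rule stated in the paper. The dummy-block convention gives $B_0 = 0$, and feasibility of $p$ amounts to $B_{K-1} = L$ with no block skipped or reprocessed along the way.

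For the ``if'' direction, assume \eqref{eq:chain feasibility} holds on every edge, and induct on $i$ to show $B_i = a_{v_i} + m_{v_i} - 1$. The base case $i = 0$ holds by the dummy values $a_c = 0,\ m_c = 1$. For the inductive step, \eqref{eq:chain feasibility} applied to $(v_{i-1}, v_i)$ together with the inductive hypothesis $B_{i-1} = a_{v_{i-1}} + m_{v_{i-1}} - 1$ gives $a_{v_i} \leq B_{i-1} + 1 \leq a_{v_i} + m_{v_i} - 1$, so $v_i$ processes the contiguous range $\{B_{i-1}+1, \ldots, a_{v_i} + m_{v_i} - 1\}$ with no gap. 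Applying this to the last edge $(v_{K-1}, c')$ and using $a_{c'} = L+1,\ m_{c'} = 1$ forces $a_{v_{K-1}} + m_{v_{K-1}} = L+1$, hence $B_{K-1} = L$ and $p$ is feasible.

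For the ``only if'' direction, assume $p$ is feasible and suppose for contradiction that \eqref{eq:chain feasibility} fails on some edge; pick the smallest such index $i$. By minimality and the argument just used, $B_{i-1} = a_{v_{i-1}} + m_{v_{i-1}} - 1$. The violation splits into: (a) $a_{v_i} > a_{v_{i-1}} + m_{v_{i-1}}$, whereupon $v_i$ begins processing at $a_{v_i} > B_{i-1}+1$ and skips the blocks in $[B_{i-1}+1,\ a_{v_i})$; or (b) $a_{v_i} + m_{v_i} - 1 < a_{v_{i-1}} + m_{v_{i-1}}$, whereupon $v_i$'s range lies entirely below $B_{i-1} + 1$ and $v_i$ contributes nothing. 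In case (a), every downstream server's starting reference $a_{v_j} + m_{v_j}$ is at least $a_{v_i}$, so the skipped range is never recovered. In case (b), the rule at $v_{i+1}$ uses $a_{v_i} + m_{v_i} \leq B_{i-1}$ as its next-block reference, which either creates a permanent gap ahead of $B_{i-1}+1$ (if $a_{v_{i+1}} > B_{i-1}$) or forces out-of-order reprocessing of blocks already output (otherwise). In both cases the blocks $1,\ldots,L$ cannot be produced in order, contradicting feasibility.

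The main obstacle is case (b) of the ``only if'' direction, which requires careful bookkeeping of how the ``next block'' reference cascades when an intermediate server contributes nothing; the key observation is that this reference is determined locally by the immediately preceding server's $a+m$, so a single under-contributing server irreversibly corrupts the downstream processing order regardless of subsequent block placements. The boundary edges incident to $v_0 = c$ and $v_K = c'$ then reduce to direct substitution of the dummy values, so no additional argument is needed there.
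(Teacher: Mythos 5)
Your ``if'' direction is correct and is essentially the paper's own proof: the same induction along the path showing that after each hop the processed prefix extends gaplessly to $a_{v_i}+m_{v_i}-1$, with the dummy blocks $0$ and $L+1$ disposing of the two boundary hops. Be aware, though, that the paper's proof stops there --- it establishes only the ``if'' direction and treats the converse as immediate from its (informal) definition of feasibility --- so everything you add for ``only if'' is beyond what the paper argues, and that is exactly where the problem lies.

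The gap is in case (b) of your ``only if'' argument. When $v_i$ contributes nothing because $a_{v_i}+m_{v_i}-1 < B_{i-1}+1$, you claim that $a_{v_{i+1}} > B_{i-1}$ ``creates a permanent gap ahead of $B_{i-1}+1$''; this fails in the subcase $a_{v_{i+1}} = B_{i-1}+1$, where the rule at $v_{i+1}$ gives starting block $\max\left(a_{v_{i+1}},\, a_{v_i}+m_{v_i}\right) = B_{i-1}+1$, so the chain resumes exactly where it left off with no gap and no reprocessing. Concretely, take $L=10$ and the path $c \to v_1 \to v_2 \to v_3 \to c'$ with $(a_{v_1},m_{v_1})=(1,5)$, $(a_{v_2},m_{v_2})=(2,2)$, $(a_{v_3},m_{v_3})=(6,5)$: condition \eqref{eq:chain feasibility} fails on $(v_1,v_2)$ since $a_{v_1}+m_{v_1}=6 > a_{v_2}+m_{v_2}-1=3$, yet $v_1$ processes blocks $1$--$5$, $v_2$ processes nothing, and $v_3$ processes blocks $6$--$10$, so all blocks are produced exactly once and in order. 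Under your own reading of feasibility ($B_{K-1}=L$ with nothing skipped or reprocessed) this path is feasible, so the ``only if'' direction as you interpret it is false; it holds only under the stricter reading suggested by the paper's remark, namely that each traversed server must itself host the next needed block (equivalently, no server on the path contributes zero blocks), in which case the converse is nearly definitional and needs no cascading argument. Separately, your case (a) conclusion is right but the stated justification is not: the downstream reference $a_{v_j}+m_{v_j}$ can decrease below $a_{v_i}$ at later hops, so the skipped blocks can in principle be processed later; the correct reason for infeasibility is that block $a_{v_i}$ is emitted before block $B_{i-1}+1 < a_{v_i}$, which already violates the in-order requirement regardless of what happens downstream.
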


\emph{Remark:} Intuitively, the condition \eqref{eq:chain feasibility} states that a path is feasible for request routing if and only if after processing all the blocks at the previous servers, the next block can always be found at the next server on the path.

\textbf{Performance models:}
Lemma~\ref{lem:chain feasibility constraint} allows us to explicitly model how the inference performance depends on block placement and request routing. 
Specifically, according to Section~\ref{subsec:System Model}, Lemma~\ref{lem:chain feasibility constraint} implies that a request from client $c$ routed through a feasible path $p$ traversing link $(i,j)$ will incur a \emph{per-token inference time} of
\begin{align}\label{eq:per-token inference time}
t^c_{ij} := t_{cj} + \tau_j(a_j+m_j-a_i-m_i)
\end{align}
at server $j$ after the first token generation (including both computation and communication), where $a_j+m_j-a_i-m_i$ is the number of blocks that are processed for this request at server $j$. The first token will incur a larger {per-token inference time} of 
$t^{c,I}_{ij}(\lmax^I) := t^I_{cj}(\lmax^I) + \tau^I_j(\lmax^I)(a_j+m_j-a_i-m_i)$
at server $j$. 

Meanwhile, 
by Lemma~\ref{lem:chain feasibility constraint}, each request routed through link $(i,j)\in E$ will require $s_c(a_j+m_j-a_i-m_i)$ of dedicated memory at server $j$ for the attention caches. This together with the memory for storing $m_j$ blocks leads to a total {memory consumption} of
\begin{align}\label{eq:GPU memory}
s_m m_j + s_c \sum_{i:(i,j)\in E}f_{ij}(a_j+m_j-a_i-m_i)
\end{align}
at server $j$, where $f_{ij}$ is the number of requests routed (concurrently) over link $(i,j)$. 


\subsection{Offline Setting}\label{subsec:Offline Setting}

We first consider the offline setting with a given set of requests to understand the  properties of the problem. 

\subsubsection{Vanilla Formulation}

As explained in Section~\ref{subsec:Performance Modeling}, the block placement can be concisely encoded by $\bm{a}:=(a_j)_{j\in V_s}$ and $\bm{m}:=(m_j)_{j\in V_s}$, which jointly determine which paths are feasible for request routing according to Lemma~\ref{lem:chain feasibility constraint}. 
Let $P_c(\bm{a},\bm{m})$ denote the set of feasible routing paths for client $c$ under the block placement $(\bm{a},\bm{m})$. Due to the server-side attention caches, the chain of servers should remain the same for all the tokens of the same sequence. We ensure this by controlling routing at the granularity of requests, represented by $f^r_p\in \{0,1\}$ that indicates the selection of path $p\in P_c(\bm{a},\bm{m})$ for request $r\in \mathcal{R}_c$. Then we formulate the \emph{joint block placement and request routing problem (BPRR)} as 
\begin{subequations}\label{eq:BPRR - direct}
\begin{align}
\min_{\bm{f},\bm{a},\bm{m}} \quad & \sum_{c\in V_c}  \sum_{r\in \mathcal{R}_c} \sum_{p\in P_c(\bm{a},\bm{m})} f^r_p \sum_{(i,j)\in p} t^c_{ij} \label{direct:obj} \\
\mbox{s.t.}\quad & s_m m_j + s_c \sum_{c\in V_c}\sum_{r\in \mathcal{R}_c}\mathop{\sum_{p\in P_c(\bm{a},\bm{m}):}}_{(i,j)\in p, \exists i} \hspace{-1em} f^r_p (a_j+m_j-a_i-m_i)  \leq M_j,~~\forall j\in V_s, \label{direct:memory} \\
& \sum_{p\in P_c(\bm{a},\bm{m})} f^r_p = 1,~~\forall c\in V_c, r\in \mathcal{R}_c, \label{direct:flow conservation} \\
& a_j+m_j-1 \leq L,~~\forall j\in V_s, \label{direct:placement} \\
& f^r_p\in \{0,1\},\: a_j,m_j\in [L], \label{direct:variable constraint}
\end{align}
\end{subequations}
where the objective \eqref{direct:obj} is to minimize the total (and hence the average) per-token inference time over all the requests when ignoring the first token, constraint \eqref{direct:memory} ensures that the requests can be served within the GPU memory at each server (where $\sum_{c\in V_c}\sum_{r\in \mathcal{R}_c}\sum_{p\in P_c(\bm{a},\bm{m}): j\in p} \hspace{-0em} f^r_p$ is the number of requests routed to server $j$), 
constraint \eqref{direct:flow conservation} ensures that each request is routed to a feasible path, and constraints \eqref{direct:placement}--\eqref{direct:variable constraint} ensure the feasibility of the block placement. 

\emph{Remark:} Strictly speaking, to minimize the average time in serving each request, we should minimize
\begin{align}\label{eq:true total inference time}
\sum_{c\in V_c}  \sum_{r\in \mathcal{R}_c} \sum_{p\in P_c(\bm{a},\bm{m})} f^r_p \left( \sum_{(i,j)\in p}t^{c,I}_{ij}(\lmax^I) + (\lmax-1) \sum_{(i,j)\in p} t^c_{ij} \right),
\end{align}
which is the total time to complete all the requests. Minimizing \eqref{eq:true total inference time} is equivalent to minimizing an objective function of the form \eqref{direct:obj}, except that $t^c_{ij}$ needs to be redefined as follows: 
\begin{align}\label{eq:true avg per-token time}
\left({1\over \lmax}t^I_{cj}(\lmax^I)+{\lmax-1\over \lmax}t_{cj}\right) + \left( {1\over \lmax}\tau^I_j(\lmax^I)+{\lmax-1\over \lmax}\tau_j\right) (a_j+m_j-a_i-m_i),
\end{align}
which denotes the average per-token inference time incurred by a request from client $c$ at link $(i,j)$ over \emph{all the tokens}. 
In the case of $\lmax^I\ll\lmax$, \eqref{eq:true avg per-token time} reduces to \eqref{eq:per-token inference time}. Even in the general case, all our results remain applicable when redefining $t^c_{ij}$ as in \eqref{eq:true avg per-token time}. 
We will thus focus on solving \eqref{eq:BPRR - direct}. 
While we have used the maximum input/output length to simplify our formulation, it is easily extensible to the case of heterogeneous input/output lengths as explained in \ref{appendix:Heterogeneous Lengths}. \looseness=-1


\subsubsection{MILP Formulation}\label{subsubsec:MILP Formulation}

The vanilla formulation \eqref{eq:BPRR - direct} is not computationally tractable because of the implicit and nonlinear dependency of the routing variable $\bm{f}$ on the block placement variables $\bm{a}$ and $\bm{m}$. Moreover, given a block placement $(\bm{a},\bm{m})$, the number of feasible paths in $P_c(\bm{a},\bm{m})$ can be exponential. These limitations motivate us to seek a more tractable formulation. 
Below we will show that by introducing appropriate auxiliary variables, we can convert the nonlinear exponential-sized optimization \eqref{eq:BPRR - direct} into a MILP with polynomial numbers of variables and constraints. 

To avoid an exponential number of routing variables, we replace the path-level routing variable $f^r_p$ by a link-level routing variable $f^r_{ij}\in \{0,1\}$, which indicates if request $r$ is routed over link $(i,j)\in E$ (meaning that the request will be processed by server $i$ and server $j$ consecutively). 
This simplifies the objective function \eqref{direct:obj} into
\begin{align}\label{eq:link-level obj}
\sum_{c\in V_c} \sum_{r\in \mathcal{R}_c}\sum_{(i,j)\in E} f^r_{ij} \Big( t_{cj} + \tau_j(a_j+m_j-a_i-m_i)\Big). 
\end{align}
We can guarantee all the requests to be properly routed by imposing a flow conservation constraint 
\begin{align}\label{eq:flow conservation}
\sum_{i\in V} f^r_{ji} = \sum_{i\in V}f^r_{ij} + d^c_j,~~~\forall c\in V_c, r\in \mathcal{R}_c, j\in V, 
\end{align}
where $d^c_j$ is a constant, defined as $1$ if $j=c$ (S-client for client $c$), $-1$ if $j=c'$ (D-client for client $c$), and $0$ otherwise. This ensures that each request will be routed from its S-client to its D-client in the logical topology $G$ (Fig.~\ref{fig:logical_topology}).  

To model the dependency between request routing and block placement, we leverage Lemma~\ref{lem:chain feasibility constraint}, which states that it is feasible to route a request from node $i$ to node $j$ if and only if $a_j\leq a_i+m_i\leq a_j+m_j-1$. Thus, we can ensure route feasibility by requiring
\begin{align}
&\hspace{-1em}a_j f^r_{ij} \leq a_i+m_i, ~~~\forall r\in \mathcal{R}, (i,j)\in E, \label{eq:bilinear - 1} \\
&\hspace{-1em}(a_i+m_i)f^r_{ij} \leq a_j+m_j-1,~\forall r\in \mathcal{R}, (i,j)\in E, \label{eq:bilinear - 2}
\end{align}
where $\mathcal{R}:= \bigcup_{c\in V_c}\mathcal{R}_c$ denotes the total set of requests. This ensures that a request is routed over $(i,j)$ only if the  condition in Lemma~\ref{lem:chain feasibility constraint} is satisfied. \looseness=-1

However, using the above objective and constraints directly will lead to a nonlinear optimization due to the bilinear terms $a_j f^r_{ij}$, $a_i f^r_{ij}$, $m_j f^r_{ij}$, and $m_i f^r_{ij}$ in the objective function \eqref{eq:link-level obj} and constraints like \eqref{eq:bilinear - 1}--\eqref{eq:bilinear - 2}. Fortunately, because $f^r_{ij}$ is binary, we can convert them into linear forms by introducing auxiliary variables $\alpha^r_{ij}, \beta^r_{ij}, \gamma^r_{ij}, \delta^r_{ij}$ and the corresponding linear constraints \eqref{eq:define alpha}\mbox{--}\eqref{eq:define delta} as detailed in \ref{appendix:Linearization}. 

Combining all the above allows us to rewrite the BPRR problem in \eqref{eq:BPRR - direct} as follows:
\begin{subequations}\label{eq:BPRR - MILP}
\begin{align}
\mathop{\min_{\bm{f},\bm{a},\bm{m},}}_{\bm{\alpha},\bm{\beta},\bm{\gamma},\bm{\delta}} & \sum_{c\in V_c} \sum_{r\in \mathcal{R}_c} \sum_{(i,j)\in E} \Big( t_{cj}f^r_{ij} + \tau_j (\alpha^r_{ij}+\gamma^r_{ij} -\beta^r_{ij}-\delta^r_{ij}) \Big) \label{MILP:obj} \\
\mbox{s.t.} \quad & s_m m_j + s_c \sum_{c\in V_c}\sum_{r\in \mathcal{R}_c} \sum_{i: (i,j)\in E} (\alpha^r_{ij}+\gamma^r_{ij}-\beta^r_{ij}-\delta^r_{ij}) \leq M_j,~~\forall j\in V_s, \label{MILP:memory} \\
&\hspace{-1em} \sum_{i\in V}f^r_{ji} = \sum_{i\in V}f^r_{ij} + d^c_j,~\forall c\in V_c, r\in \mathcal{R}_c, j\in V, \label{MILP:flow conservation} \\
&\hspace{-1em} a_j+m_j-1\leq L,~~\forall j\in V_s, \label{MILP:placement} \\
&\hspace{-1em} \alpha^r_{ij}\leq a_i + m_i,~~\forall r\in \mathcal{R}, (i,j)\in E, \label{MILP:chain feasibility 1} \\
&\hspace{-1em} \beta^r_{ij}+\delta^r_{ij}\leq a_j+m_j-1,~\forall r \in \mathcal{R}, (i,j)\in E, \label{MILP:chain feasibility 2} \\
&\hspace{-1em} \eqref{eq:define alpha}\mbox{--}\eqref{eq:define delta}, \label{MILP:auxiliary}\\
&\hspace{-1em} f^r_{ij}\in \{0,1\},\: a_j,m_j\in [L],\: \alpha^r_{ij}, \beta^r_{ij}, \gamma^r_{ij}, \delta^r_{ij}\geq 0, \label{MILP:variable}
\end{align}
\end{subequations}
where \eqref{MILP:obj} is the linearized expression of the total inference time, \eqref{MILP:memory} models the GPU memory capacity, \eqref{MILP:flow conservation} ensures flow conservation, \eqref{MILP:placement} ensures block placement feasibility, \eqref{MILP:chain feasibility 1}--\eqref{MILP:chain feasibility 2} ensure route feasibility, and \eqref{MILP:auxiliary} defines the auxiliary variables used to linearize the objective function and the constraints. 
As explained in \ref{appendix:Linearization}, $\alpha^r_{ij} = a_j f^r_{ij}$, $\beta^r_{ij} = a_i f^r_{ij}$, $\gamma^r_{ij} = m_j f^r_{ij}$, and $\delta^r_{ij} = m_i f^r_{ij}$. Plugging these into \eqref{MILP:obj} and \eqref{MILP:memory} recovers the original objective \eqref{direct:obj} and memory constraint \eqref{direct:memory}, except that the path-level routing variable $f^r_p$ is replaced by the link-level routing variable $f^r_{ij}$. The flow conservation constraint \eqref{MILP:flow conservation} together with the route feasibility constraints \eqref{MILP:chain feasibility 1}--\eqref{MILP:chain feasibility 2} and the integer constraint on $f^r_{ij}$ \eqref{MILP:variable} ensures that each request is routed on a single feasible path as in \eqref{direct:flow conservation}.
The overall optimization is a MILP, where $\bm{a}, \bm{m}$ (block placement) and $\bm{f}$ (request routing) are free variables, and the others are dependent variables. 

\emph{Remark:} 
The MILP \eqref{eq:BPRR - MILP} has $O(|\mathcal{R}|\cdot |E|)$ variables, dominated by the routing variable $\bm{f}$ and the auxiliary variables $\bm{\alpha}$, $\bm{\beta}$, $\bm{\gamma}$, and $\bm{\delta}$. As the number of links $|E|$ is bounded by $O(|V_s|(|V_c|+|V_s|))$, the number of variables is in $O(|\mathcal{R}|\cdot |V_s| (|V_c|+|V_s|))$. Similarly, the number of constraints is also in $O(|\mathcal{R}|\cdot |E|) = O(|\mathcal{R}|\cdot |V_s| (|V_c|+|V_s|))$, dominated by \eqref{MILP:chain feasibility 1}--\eqref{MILP:auxiliary}. Thus, even if BPRR can be formulated as a MILP, the size of the MILP will grow linearly in the number of requests/clients and quadratically in the number of servers, making it challenging to solve.

Formally, we have shown via a reduction from the optimization version of the \emph{partition problem}~\cite{Hayes02AS} that the BPRR problem is hard to solve to optimality as stated below. 

\begin{theorem}\label{thm:NP-hardness of BPRR}
The BPRR problem as formulated in \eqref{eq:BPRR - direct} or \eqref{eq:BPRR - MILP} is NP-hard. 
\end{theorem}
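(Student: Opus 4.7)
The plan is to show NP-hardness via a polynomial-time reduction from the optimization version of PARTITION, which given positive integers $s_1,\ldots,s_n$ summing to $2S$ asks for $I \subseteq [n]$ minimizing $\bigl|\sum_{i\in I} s_i - S\bigr|$. Since PARTITION is NP-hard, any polynomial-time reduction $\text{PARTITION} \leq_p \text{BPRR}$ establishes the claim for both \eqref{eq:BPRR - direct} and the equivalent MILP \eqref{eq:BPRR - MILP}.

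The BPRR instance I would construct pushes all combinatorial difficulty into the routing--memory coupling while keeping block placement trivial. A natural choice uses two symmetric servers $j_1,j_2$ and an LLM with $L=1$ block that both servers can host, so that each request is routed through exactly one server (the placement is essentially fixed and the only nontrivial decision is the routing split). I would then introduce $n$ clients $c_1,\ldots,c_n$, with client $c_i$ issuing one request whose attention-cache footprint is $s_i$ units. In the base formulation this per-request heterogeneity can be obtained either through the heterogeneous-length extension of \ref{appendix:Heterogeneous Lengths} (setting the sequence length of $r_i$ proportional to $s_i$ so that $s_c$ for request $r_i$ scales with $s_i$) or by letting $c_i$ issue $s_i$ identical unit-weight requests. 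Setting the memory capacity to $M_{j_1} = M_{j_2} = s_m + s_c \cdot S$ then turns the memory constraint \eqref{direct:memory} at each $j$ into $\sum_{i:\, r_i \text{ routed to } j} s_i \leq S$; because the weights sum to $2S$, any feasible BPRR routing induces a partition $I \cup ([n]\setminus I)$ whose two halves sum to exactly $S$ precisely when PARTITION admits a balanced solution.

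To handle the optimization version as well, I would choose $t_{c_i j_1}, t_{c_i j_2}$ (or $\tau_{j_1}, \tau_{j_2}$) so that the objective \eqref{direct:obj} becomes an affine function $c_0 + c_1 \cdot \bigl|\sum_{i\in I} s_i - S\bigr|$ of the partition imbalance, and correspondingly relax the memory bound by a slack that permits unbalanced splits at strictly higher cost. Optimal BPRR solutions are then in one-to-one correspondence with optimal PARTITION solutions, transferring NP-hardness from PARTITION to BPRR.

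The main obstacle is the strict uniformity of blocks and attention caches in the base model of Section~\ref{subsec:System Model}: every block consumes $s_m$ and every cache consumes $s_c$, which makes it impossible to encode variable item weights $s_i$ through block placement alone. Pushing the weights through the request side --- either via heterogeneous lengths or via replicated requests --- sidesteps this while preserving the link-level structure of \eqref{eq:BPRR - MILP}. The remaining technicalities are two standard correctness steps: (i) verifying that any feasible BPRR solution induces a partition whose imbalance determines its objective value up to the affine transformation, and (ii) verifying that every partition lifts to a feasible BPRR solution of the matching objective value. Both directions follow from the symmetric design of $M_{j_1}=M_{j_2}$ and the linearity of the cost in the routing split, but must be checked with care to conclude the reduction.
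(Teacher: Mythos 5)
Your high-level plan (reduce from the optimization version of PARTITION) matches the paper's, but the encoding you choose has a genuine gap that the paper's construction is specifically designed to avoid. You push the item weights $s_i$ onto the \emph{request} side, so that routing request $r_i$ to a server consumes $s_i$ units of memory. But the problem as formulated in \eqref{eq:BPRR - direct}/\eqref{eq:BPRR - MILP} has a single, uniform cache size $s_c$ for every request; proving hardness of the heterogeneous-length generalization of \ref{appendix:Heterogeneous Lengths} does not prove hardness of the homogeneous special case that the theorem is actually about. Your fallback --- replacing the weight-$s_i$ request by $s_i$ identical unit-weight requests --- destroys the reduction: each of those unit requests carries its own independent routing variable $f^r_p$, so nothing forces them to go to the same server, and with two symmetric servers and $L=1$ the instance degenerates to splitting $2S$ unit items across two bins of capacity $S$, which is trivially feasible and trivially optimized. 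A further unfinished step is your claim that the objective can be made an affine function of the partition imbalance $\bigl|\sum_{i\in I}s_i - S\bigr|$: the objective \eqref{direct:obj} is linear in the routing split, so the absolute-value structure must come from capacity constraints forcing overflow onto a strictly more expensive resource, and you assert this without constructing it.

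The paper's reduction sidesteps both issues by encoding the weights in the \emph{servers} rather than the requests: it uses $L=2$ blocks, one server $j$ per item $w_j$ with memory $M_j = s_m + s_c w_j < 2s_m$ (so each server holds exactly one block and supports exactly $w_j$ concurrent sessions, all with uniform $s_c$), a single client issuing $\Delta$ requests, plus one slow overflow server $0$ of capacity $\Delta$ hosting block $2$ at double the per-token cost. The partition is then the binary block-placement decision (which of the two blocks each server hosts), and the conditional optimum under a placement inducing imbalance $\epsilon$ is $2\Delta + \epsilon$ because exactly $\epsilon$ requests must overflow to the slow server; this realizes the affine-in-imbalance objective concretely and stays entirely within the homogeneous formulation. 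If you want to salvage your routing-side encoding, you would need a gadget that enforces atomicity of each weighted item using only uniform caches --- which is essentially what the paper's server-capacity construction provides.
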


\emph{Remark:} In fact, the reduction in the proof of Theorem~\ref{thm:NP-hardness of BPRR} leads to a special case of BPRR with a single client ($|V_c|=1$). This case represents a common deployment scenario where end users query the system through a proxy (e.g., a Flask web server as in \cite{Borzunov23petals}). We have proved that the BPRR problem remains NP-hard even in this special case. \looseness=-1

\subsubsection{Algorithm Design}\label{subsubsec:Algorithm Design - general case}

The NP-hardness of BPRR implies the need of efficient suboptimal algorithms. Although the MILP formulation in \eqref{eq:BPRR - MILP} allows us to apply existing heuristics for MILP, e.g., LP relaxation plus rounding, such a heuristic still has limited scalability due to the large size of the LP, 
and more importantly, it does not even guarantee a feasible solution\footnote{Specifically, the  fractional solution $(\tilde{\bm{f}},\tilde{\bm{a}},\tilde{\bm{m}})$ to the LP relaxation of \eqref{eq:BPRR - MILP} may not lead to a feasible integer solution after rounding, as the rounded block placement $(\bm{a},\bm{m})$ may not place each block on at least one server. This is because the auxiliary variables $\bm{\alpha},\bm{\beta},\bm{\gamma},\bm{\delta}$ used to linearize the problem can only enforce feasibility when $\bm{f}$ is binary.}. 
To efficiently solve large instances of the BPRR problem, we propose a three-step algorithm called \emph{Conservative Greedy BPRR (CG-BPRR)} as shown in Alg.~\ref{Alg:CG-BPRR}, by  
decomposing the BPRR problem into the three subproblems of optimizing $\bm{m}$, $\bm{a}$, and $\bm{f}$ sequentially, as explained below. 

\begin{algorithm}[tb]
\small
\SetKwInOut{Input}{input}\SetKwInOut{Output}{output}
\Input{set of clients $V_c$, set of requests $\bigcup_{c\in V_c}\mathcal{R}_c$, \#blocks $L$, size per block $s_m$, size per cache $s_c$, set of servers $V_s$, parameters for each $j\in V_s$ including GPU memory $M_j$, processing time $\tau_j$, and per-token RTTs $(t_{cj})_{c\in V_c}$}
\Output{Block placement $(\bm{a},\bm{m})$ and request routing $\bm{f}$}
\tcp{conservative assignment of \#blocks per server:}
$m_j \leftarrow \min(\lfloor M_j/(s_m+s_c |\mathcal{R}|) \rfloor,\: L)$, $\forall j\in V_s$, where $|\mathcal{R}|= \sum_{c\in V_c} |\mathcal{R}_c|$\nl\label{CG:1}
\tcp{greedy block placement:}
$C_b\leftarrow 0,\: T_b\leftarrow \widetilde{t}_0|\mathcal{R}|$, $\forall b\in [L]$\nl\label{CG:2}
\For{each server $j\in V_s$ in increasing order of $\widetilde{t}_j$ \nl\label{CG:3}}
{
\If{$\exists b\in [L]$ with $C_b<|\mathcal{R}|$ \nl\label{CG:4}}
{$a_j \leftarrow \argmax_{a\in [L-m_j+1]:\:C_b<|\mathcal{R}|,\: \exists b\in \{a,\ldots,a+m_j-1\}} \sum_{b'=a}^{a+m_j-1} T_{b'}$\nl\label{CG:5}}
\Else
{$a_j\leftarrow \argmin_{a\in [L-m_j+1]} (C_a,\ldots,C_{a+m_j-1})$\nl\label{CG:7}}
$T_b\leftarrow T_b - (\widetilde{t}_0-\widetilde{t}_j) \min\left(\max(|\mathcal{R}|-C_b, 0), \overline{f}_j\right)$, $\forall b\in \{a_j,\ldots,a_j+m_j-1\}$\nl\label{CG:8}
$C_b\leftarrow C_b+\overline{f}_j$, $\forall b\in \{a_j,\ldots,a_j+m_j-1\}$\nl\label{CG:9}
}
\tcp{shortest-path request routing:}
\For{each client $c\in V_c$ \label{CG:10}}
{
$G^c_{\bm{a},\bm{m}}\leftarrow$ the feasible routing topology for client $c$ under block placement $(\bm{a}, \bm{m})$, with a node/link set $(V^c, E^c_{\bm{a},\bm{m}})$ and a cost of $t^c_{ij}$ for each $(i,j)\in E^c_{\bm{a},\bm{m}}$\nl\label{CG:11}
$p_c\leftarrow$ shortest path from the S-client to the D-client in $G^c_{\bm{a},\bm{m}}$\nl\label{CG:12}
$f^r_{ij}\leftarrow \mathbb{1}((i,j)\in p_c)$\footnotemark, $\forall r\in \mathcal{R}_c,\: (i,j)\in E$\nl\label{CG:13}
}
\caption{Conservative Greedy BPRR (CG-BPRR)}
\vspace{-.0em}
\label{Alg:CG-BPRR}
\end{algorithm}
\normalsize
\footnotetext{We use $\mathbb{1}(\cdot)$ to denote the indicator function.}

\textbf{Step~1:} First, we set the number of blocks per server conservatively (line~\ref{CG:1}) to make sure that each server will have enough remaining GPU memory to hold the attention caches even if all the requests are routed through it. This is because according to the memory consumption model in \eqref{eq:GPU memory}, the memory consumption at server $j$ is upper-bounded by $s_m m_j + s_c |\mathcal{R}| m_j$ (achieved if all the requests are routed to it and all the hosted blocks are processed), and thus storing 
$\min(\lfloor M_j/(s_m+s_c |\mathcal{R}|) \rfloor,\: L)$
blocks on server $j$ will guarantee the maximum memory consumption to be feasible. 

\textbf{Step~2:} Next, we greedily place a set of continuous blocks $\{a_j,\ldots,a_j+m_j-1\}$ at each server $j$ in the descending order of ``server speeds'' so that each placement improves the performance for the worst-performing blocks (lines~\ref{CG:2}--\ref{CG:9}). Specifically, we measure the speed of server $j$ by the \emph{amortized inference time} defined as
\begin{align}\label{eq:amortized inference time}
\widetilde{t}_j := {1\over m_j} \max_{c\in V_c} (t_{cj}+\tau_j m_j) = \tau_j + {t_{*j}\over m_j},
\end{align}
where $t_{*j}:= \max_{c\in V_c} t_{cj}$ is the \emph{maximum per-token RTT} between any client and server $j$. Since $\max_{c\in V_c} (t_{cj}+\tau_j m_j)$ is an upper bound on the per-token inference time incurred at server $j$, $\widetilde{t}_j$ denotes the amortized maximum inference time per block at server $j$, amortized over the blocks hosted by this server. The amortization allows us to compare the speeds of servers with different memory capacities. 

Under such amortization, we consider a \emph{relaxed request routing problem}, where each request is routed among servers on a block-by-block basis and incurs a per-block inference time $\widetilde{t}_j$ at server $j$. It is clear that the relaxed request routing must use up the capacity of a faster server (with a smaller amortized inference time) before going to a slower server for each block. Meanwhile, under $m_j$ computed in Step~1, each server $j$ can guarantee to process up to  
\begin{align}
\overline{f}_j := \left\lfloor {M_j-s_m m_j\over s_c m_j} \right\rfloor \geq |\mathcal{R}|
\end{align}
requests concurrently without running out of memory\footnote{The actual number of requests server $j$ runs concurrently may be larger when not all the $m_j$ blocks are processed at this server for some requests.\looseness=-1}, which defines a \emph{capacity} of the server. It means that under the relaxed request routing, only the fastest server hosting block $b$ will be used to process the block for all the requests. 

Thus, to make the best use of fast servers, we place blocks on servers in the increasing order of $\widetilde{t}_j$, where each server $j$ receives the set of continuous blocks that ``need service the most''. To measure this, we use a variable $C_b$ to track the \emph{total capacity of servers hosting block $b$}, and another variable $T_b$ to track the \emph{total amortized inference time all the requests spend on block $b$} under the optimal relaxed routing. That is, $C_b$ is the sum of $\overline{f}_j$'s over all the servers satisfying $a_j\leq b \leq a_j+m_j-1$ (line~\ref{CG:9}), and $T_b = \widetilde{t}_{j}|\mathcal{R}|$ if server $j$ is the fastest server hosting block $b$. 
To ensure that $T_b$ is well-defined before block $b$ is placed on any server, we introduce a \emph{dummy server $0$} (assuming $0\not\in V_s$) with a large capacity $\overline{f}_0:= |\mathcal{R}|$ and a large amortized inference time $\widetilde{t}_0>\widetilde{t}_j$ ($\forall j\in V_s$), which initially hosts all the blocks. Introducing this dummy server allows us to meaningfully initialize $T_b$ before any real block placement (line~\ref{CG:2}), and its large inference time ensures that $T_b$ (as updated in line~\ref{CG:8}) will be equal to the actual total amortized inference time for block $b$ when $C_b\geq |\mathcal{R}|$. Our idea is to measure the ``need of service'' by the total amortized inference time, i.e., the next fastest server will receive the set of continuous blocks with the maximum $\sum_{b=a_j}^{a_j+m_j-1} T_b$. 
\emph{One subtle point} in applying this idea is that since the relaxed request routing always uses the fastest server for each block, $T_b$ will stop changing after block $b$ is placed, which can cause subsequent servers to be assigned the same blocks over and over again. To avoid such waste of servers, we will only select the blocks according to $\sum_{b=a_j}^{a_j+m_j-1} T_b$ if this set contains at least one unserved block (line~\ref{CG:5}); otherwise, we will select the set of continuous blocks with the minimum capacities (line~\ref{CG:7}), where ``$\argmin$'' selects the first value of $a$ in the lexicographical order of sorted $(C_a,\ldots,C_{a+m_j-1})$. 

The rationale of Step~2 is that it minimizes the average inference time under the relaxed request routing. 

\begin{lemma}\label{lem:minimize inference time under relaxed routing}
The block placement in lines~\ref{CG:2}--\ref{CG:9} of Alg.~\ref{Alg:CG-BPRR} minimizes the average per-token inference time over all the requests under the relaxed request routing, assuming that ``$\argmax$'' in line~\ref{CG:5} breaks ties in favor of the smallest index. 
\end{lemma}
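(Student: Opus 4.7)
The plan is to reduce the claim to a combinatorial optimality statement about maximum interval coverage and verify it using a structural invariant maintained by greedy. First I would formalize the objective: since Step~1 ensures $\overline{f}_j\geq|\mathcal{R}|$ for every real server, under the relaxed routing the fastest server hosting block $b$ alone serves all $|\mathcal{R}|$ requests, so the average per-token inference time equals $\sum_{b=1}^{L}\widetilde{t}_{j^\ast(b)}$, where $j^\ast(b)$ is the fastest server hosting $b$ (the dummy~$0$ if none). A short induction on the processing step $k$ shows the state variable satisfies $T_b=|\mathcal{R}|\,\widetilde{t}_{j^\ast_{(k)}(b)}$, with $j^\ast_{(k)}(b)$ the fastest among $\{j_1,\ldots,j_k,0\}$ hosting~$b$: the update in line~\ref{CG:8} is nonzero only when $b$ is uncovered before step~$k$, in which case $T_b$ drops from $\widetilde{t}_0|\mathcal{R}|$ to $\widetilde{t}_{j_k}|\mathcal{R}|$, while any subsequent slower covering leaves $T_b$ unchanged because $\max(|\mathcal{R}|-C_b,0)=0$.

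Letting $N_k$ denote the set of blocks first covered by $j_k$, the objective decomposes as
\[
\sum_{b=1}^L \widetilde{t}_{j^\ast(b)} \;=\; \widetilde{t}_0 L \;-\; \sum_{k}\bigl(\widetilde{t}_0-\widetilde{t}_{j_k}\bigr)\,|N_k|.
\]
Because $\widetilde{t}_0-\widetilde{t}_{j_k}>0$ is non-increasing in $k$ by the processing order, Abel summation reduces minimizing this to maximizing every prefix $X_K:=\sum_{k\leq K}|N_k|=\bigl|\bigcup_{k\leq K}\{a_{j_k},\ldots,a_{j_k}+m_{j_k}-1\}\bigr|$. Since trivially $X_K\leq\min\bigl(L,\sum_{k\leq K}m_{j_k}\bigr)$ for any placement, it suffices to show that greedy attains this upper bound at every~$K$.

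To do so I would establish by induction on $k$ the structural invariant that after greedy processes $j_1,\ldots,j_{k-1}$, the uncovered set is a suffix $[X_{k-1}+1,L]$ of $[L]$ and the sequence $(T_b)_{b=1}^{X_{k-1}}$ is non-decreasing in~$b$. The base case is trivial, and the induction step uses the non-decreasing processing order of $\widetilde{t}_{j_k}$ together with the smallest-index tiebreak in line~\ref{CG:5}. Under this invariant, line~\ref{CG:5}'s $\sum T_{b'}$-maximizer always realizes $|N_k|=\min(m_{j_k},L-X_{k-1})$: if $X_{k-1}+m_{j_k}\leq L$, then any window contained in the uncovered suffix achieves $\sum T_{b'}=m_{j_k}\widetilde{t}_0|\mathcal{R}|$ and strictly dominates every overlapping window (because each covered $T_b$ is strictly less than $\widetilde{t}_0|\mathcal{R}|$), so the tiebreak picks $a_{j_k}=X_{k-1}+1$; if $X_{k-1}+m_{j_k}>L$, every eligible window overlaps the covered prefix, and shifting the window one block rightward replaces a covered contribution $T_b\leq\widetilde{t}_{j_{k-1}}|\mathcal{R}|<\widetilde{t}_0|\mathcal{R}|$ by an uncovered contribution $\widetilde{t}_0|\mathcal{R}|$, making $\sum T_{b'}$ strictly increasing in~$a$ with maximizer $a_{j_k}=L-m_{j_k}+1$. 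In either case $X_k$ attains the upper bound, and since the newly covered blocks are appended to the right of the covered prefix with the common value $\widetilde{t}_{j_k}|\mathcal{R}|\geq \widetilde{t}_{j_{k-1}}|\mathcal{R}|$, the non-decreasing structure of $(T_b)$ is preserved.

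The main obstacle is this last step: sustaining the ``uncovered-is-suffix and covered-$T_b$-non-decreasing'' invariant through every iteration and carefully handling the borderline case in which the eligible-window set collapses to overlapping ones. Once the invariant is established, Abel summation delivers the optimality of the prefix sums and hence of the weighted objective, while the observation that line~\ref{CG:7} is only entered once every block is already covered (and therefore leaves every $T_b$ unchanged) shows that the subsequent placements cannot affect the objective, completing the proof.
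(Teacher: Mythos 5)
Your proof is correct and follows essentially the same route as the paper's: both arguments establish that the greedy assigns consecutive blocks to the servers in increasing order of $\widetilde{t}_j$ (so servers $1,\ldots,K$ tile $[L]$ left to right, with server $K$ right-aligned) and then conclude optimality from the fact that no placement can cover the blocks with faster servers. Your Abel-summation/prefix-coverage step and the explicit ``uncovered set is a suffix'' invariant simply make rigorous what the paper asserts more tersely (``since these are the fastest servers that can host the entire model, the minimum \ldots is lower-bounded by \eqref{eq:proof CG-BP, min per-token time}''), so this is a welcome elaboration rather than a different approach.
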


\emph{Remark:} As shown in the proof of Lemma~\ref{lem:minimize inference time under relaxed routing}, Step~2 of CG-BPRR generates an intuitive block placement that uses the ``fastest servers'' to cover all the blocks sequentially. That is, if $j\in V_s$ denotes the $j$-th fastest server in terms of $\widetilde{t}_j$, then $a_k = \sum_{j=1}^{k-1}m_j+1$ for all $k=1,\ldots,K-1$ and $a_K = L-m_K+1$, where $K:=\min\{k:\: \sum_{j=1}^k m_j\geq L\}$.

\textbf{Step~3:} Finally, we compute the request routing under the block placement $(\bm{a}, \bm{m})$ given by the previous steps, assuming the block placement to be feasible (i.e., every block is hosted by at least one server). 

Given a feasible block placement $(\bm{a},\bm{m})$, the (conditionally) optimal request routing that minimizes the average inference time is given by a subproblem of \eqref{eq:BPRR - MILP} as follows
\begin{subequations}\label{eq:RR - given BP}
\begin{align}
\min_{\bm{f}}\quad & \sum_{c\in V_c} \sum_{r\in \mathcal{R}_c}\sum_{(i,j)\in E} t^c_{ij} f^r_{ij} \label{old RR:obj}\\
\mbox{s.t. } & s_c  \sum_{c\in V_c} \sum_{r\in \mathcal{R}_c} \sum_{i:(i,j)\in E} f^r_{ij} (a_j + m_j - a_i - m_i) \leq M_j - s_m m_j,~~\forall j\in V_s, \label{old RR:memory} \\
&\eqref{MILP:flow conservation}, \eqref{eq:bilinear - 1}\mbox{--}\eqref{eq:bilinear - 2}, \\
& f^r_{ij}\in \{0,1\},~~\forall r\in \mathcal{R}, (i,j)\in E,
\end{align}
\end{subequations}
where we have plugged in $t^c_{ij}$ as defined in \eqref{eq:per-token inference time} as a constant. Generally, this is an integer linear programming (ILP) problem with $O(|\mathcal{R}|\cdot |E|)$ variables and $O(|\mathcal{R}|\cdot |E|)$ constraints, which is no easier to solve than \eqref{eq:BPRR - MILP}. However, under the block placement computed by CG-BPRR, we can convert the problem into a simple shortest-path routing problem as stated below. 

\begin{lemma}\label{lem:optimality of shortest path routing - CG-BPRR}
Suppose that the block placement computed by lines~\ref{CG:1}--\ref{CG:9} of Alg.~\ref{Alg:CG-BPRR} is feasible. Then under this block placement, the shortest-path request routing in lines~\ref{CG:10}--\ref{CG:13} is optimal for \eqref{eq:RR - given BP}. 
\end{lemma}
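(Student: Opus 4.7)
The plan is to show that the block placement produced by lines~\ref{CG:1}--\ref{CG:9} of Alg.~\ref{Alg:CG-BPRR} makes the memory constraint \eqref{old RR:memory} trivially satisfied, so that \eqref{eq:RR - given BP} decouples across requests into independent shortest-path problems on the feasible routing topology $G^c_{\bm{a},\bm{m}}$.

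First, I would verify that the memory constraint is non-binding for \emph{any} feasible $\bm{f}$. By the chain-feasibility condition \eqref{eq:chain feasibility} in Lemma~\ref{lem:chain feasibility constraint}, for every link $(i,j)$ with $f^r_{ij}=1$ on a feasible path we have $a_i+m_i\geq a_j$, so the per-request attention-cache footprint at server $j$ satisfies $a_j+m_j-a_i-m_i\leq m_j$. Combined with $\sum_{c}\sum_{r\in\mathcal{R}_c}\sum_{i:(i,j)\in E} f^r_{ij}\leq |\mathcal{R}|$ (each request is routed on a simple path and enters server $j$ at most once by flow conservation), the LHS of \eqref{old RR:memory} is bounded by $s_c |\mathcal{R}|\, m_j$. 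Line~\ref{CG:1} enforces $m_j\leq M_j/(s_m+s_c|\mathcal{R}|)$, hence $s_c |\mathcal{R}|\, m_j\leq M_j - s_m m_j$, and \eqref{old RR:memory} holds automatically.

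With the memory constraint dropped, the remaining constraints of \eqref{eq:RR - given BP}---flow conservation \eqref{MILP:flow conservation} and route feasibility \eqref{eq:bilinear - 1}--\eqref{eq:bilinear - 2}---only couple variables within a single request $r$, and the objective \eqref{old RR:obj} is additively separable over $r$. Thus \eqref{eq:RR - given BP} decomposes into $|\mathcal{R}|$ independent single-commodity integer programs of the form: find binary $\bm{f}^r$ satisfying the per-request feasibility constraints that minimizes $\sum_{(i,j)\in E} t^c_{ij} f^r_{ij}$. By Lemma~\ref{lem:chain feasibility constraint}, the feasible $\bm{f}^r$ are precisely the indicator vectors of S-client-to-D-client paths in $G^c_{\bm{a},\bm{m}}$, and the objective coincides with the total edge cost of such a path under weights $t^c_{ij}$. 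Because the edge costs $t^c_{ij}=t_{cj}+\tau_j(a_j+m_j-a_i-m_i)$ are nonnegative, a single shortest path is an integer optimum of this single-commodity subproblem, so $f^r_{ij}=\mathbb{1}((i,j)\in p_c)$ is optimal. Since all requests in $\mathcal{R}_c$ share the same cost vector and the same feasible topology, the optimal path is common to them and equals the $p_c$ computed in line~\ref{CG:12}, matching lines~\ref{CG:10}--\ref{CG:13}.

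The main obstacle is the first step: establishing that the conservative choice of $m_j$ in line~\ref{CG:1} suffices to render \eqref{old RR:memory} non-binding \emph{uniformly} in $\bm{f}$, which crucially relies on the bound $a_j+m_j-a_i-m_i\leq m_j$ derived from Lemma~\ref{lem:chain feasibility constraint} and on the simple-path structure implied by flow conservation on the logical topology $G$. Once this is settled, the rest is the standard decomposition of uncoupled single-commodity flows into per-commodity shortest paths on a directed graph with nonnegative weights.
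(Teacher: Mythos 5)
Your proposal is correct and follows essentially the same route as the paper's proof: show that the conservative choice of $m_j$ in line~\ref{CG:1} makes the memory constraint \eqref{old RR:memory} non-binding, then decouple \eqref{eq:RR - given BP} into per-client shortest-path problems on the feasible routing subgraph $G^c_{\bm{a},\bm{m}}$ with nonnegative link costs $t^c_{ij}$. Your first step merely spells out in more detail (via the bound $a_j+m_j-a_i-m_i\leq m_j$ from Lemma~\ref{lem:chain feasibility constraint}) what the paper asserts in one sentence.
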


\emph{Remark:} As explained in Section~\ref{subsec:Online Setting}, when applying CG-BPRR, the number of requests $|\mathcal{R}|$ is actually a design parameter that can be tuned to ensure the feasibility of the block placement.

\subsubsection{Performance Analysis}

We now analyze the overall performance of CG-BPRR (Alg.~\ref{Alg:CG-BPRR}) in terms of both the complexity and the average inference time. 

\emph{Complexity:} The complexity can be dominated by either greedy block placement (lines~\ref{CG:2}--\ref{CG:9}) or shortest-path request routing (lines~\ref{CG:10}--\ref{CG:13}). Each \textbf{for} loop in lines~\ref{CG:4}--\ref{CG:9} takes time $O(L\overline{m})$ if line~\ref{CG:5} is executed (where $\overline{m}:= \max_{j\in V_s}m_j$) or $O(L\overline{m}\log{\overline{m}})$ if line~\ref{CG:7} is executed, both in $O(L^2\log{L})$. Each \textbf{for} loop in lines~\ref{CG:11}--\ref{CG:13} takes time $O(|\mathcal{R}_c|\cdot |E|)$, dominated by line~\ref{CG:13}. The overall complexity of Alg.~\ref{Alg:CG-BPRR} is thus $O(|V_s| L^2\log{L} + |\mathcal{R}|\cdot |E|) = O\left(|V_s|\left(L^2\log{L} + |\mathcal{R}|(|V_c|+|V_s|) \right)\right)$ (recall that $|E|=O(|V_s|(|V_c|+|V_s|))$), which is polynomial in the problem size.

\begin{figure}[!t]
   \centerline{\includegraphics[width=0.75\linewidth]{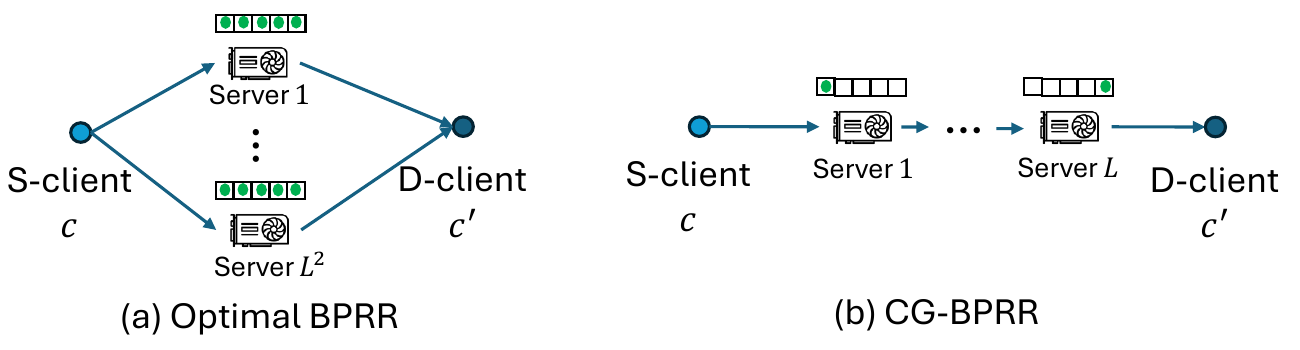}}
   \vspace{-1em}
    \caption{Example for suboptimality of CG-BPRR (Alg.~\ref{Alg:CG-BPRR}). }
    \label{fig:CG_suboptimality}
    \vspace{-.05em}
\end{figure}

\emph{Suboptimality:} First of all, due to its conservative assignment of blocks, CG-BPRR may not give a feasible solution when a feasible solution exists. For example, we may have a large number of servers ($|V_s|\geq L |\mathcal{R}|$) but also a large number of requests, such that $s_m+s_c|\mathcal{R}|>M_j$ and $s_m+s_c\leq M_j$ ($\forall j\in V_s$). Then CG-BPRR will be unable to place any block, but it is still feasible to satisfy all the requests by placing one block per server and routing each request to a disjoint set of $L$ servers. 

Moreover, CG-BPRR can be suboptimal in comparison to the optimal solution even when it is feasible. Consider an example with $|V_c|=1$ client and $|V_s|=L^2$ servers, where each $j\in V_s$ has a per-token RTT of $t_{cj}\equiv t$ to the client, a per-block processing time of $\tau_j\equiv \tau$, and a GPU memory of $M_j\equiv(L+1)s_m$. Suppose that $s_m = L s_c$, and the number of requests is $|\mathcal{R}|=L s_m/s_c = L^2$. Then CG-BPRR will place only one block per server and route each request through a chain of $L$ servers as illustrated in Fig.~\ref{fig:CG_suboptimality}b, incurring an average per-token inference time of $T^g:=L(t+\tau)$. However, the optimal solution is to place all the blocks on each server and route only one request  to each server as illustrated in Fig.~\ref{fig:CG_suboptimality}a, incurring an average per-token inference time of $T^o := t+\tau L < T^g$. 

\emph{Performance guarantee:}
Nevertheless, CG-BPRR provides a guaranteed average inference time as long as its block placement is feasible. \looseness=-1

\begin{theorem}\label{thm:CG-BPRR}
Without loss of generality, assume that the servers have been sorted into an increasing order of amortized inference times (i.e., $\widetilde{t}_1\leq \ldots \leq \widetilde{t}_{|V_s|}$). Let $K:=\min\{k:\: \sum_{j=1}^k m_j\geq L\}$ be the number of iterations of lines~\ref{CG:3}--\ref{CG:9} until all the blocks are placed on at least one server. If CG-BPRR gives a feasible solution (i.e., $K<\infty$), then its average per-token inference time $T^g$ is bounded by
    \begin{align}\label{eq:CG-BPRR bound}
T^g \leq \sum_{j=1}^K \widetilde{t}_j m_j - \tau_K \left(\sum_{j=1}^K m_j -L\right),
    \end{align}
where $\widetilde{t}_j$ is defined as in \eqref{eq:amortized inference time} and $m_j$ is computed as in line~\ref{CG:1} of Alg.~\ref{Alg:CG-BPRR}.  
\end{theorem}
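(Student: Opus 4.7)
My plan is to exhibit one concrete feasible routing path whose total cost is easy to bound and then invoke Lemma~\ref{lem:optimality of shortest path routing - CG-BPRR} to transfer that bound to the shortest-path cost $SP_c$ actually produced by CG-BPRR. By the remark after Lemma~\ref{lem:minimize inference time under relaxed routing}, once servers are indexed in increasing order of $\widetilde{t}_j$, Step~2 of Alg.~\ref{Alg:CG-BPRR} yields the placement $a_j=\sum_{i<j}m_i+1$ for $j<K$ and $a_K=L-m_K+1$. I will work with the \emph{canonical path} $P$ that sends a request from its S-client through servers $1,2,\ldots,K$ in this order and on to its D-client.

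First, I would verify via Lemma~\ref{lem:chain feasibility constraint} that $P$ is feasible. Writing $M_k:=\sum_{i=1}^k m_i$, for any two consecutive real servers $j-1,j$ with $j<K$ one has $a_j=a_{j-1}+m_{j-1}$, so \eqref{eq:chain feasibility} holds trivially; for the last jump $(K-1,K)$, the required inequalities $a_K\leq a_{K-1}+m_{K-1}\leq a_K+m_K-1$ reduce exactly to $M_{K-1}<L\leq M_K$, which is the minimality of $K$. The boundary links involving the dummy S/D-clients are absorbed by the convention $a_c=0,\,m_c=1$ and $a_{c'}=L+1,\,m_{c'}=1$ from Lemma~\ref{lem:chain feasibility constraint}. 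Under the ``$a_j+m_j-a_i-m_i$'' accounting of the number of blocks processed at the downstream end of an edge, server $j$ on $P$ processes $m_j$ blocks for $j<K$ but only $L-M_{K-1}$ blocks for $j=K$ (the discrepancy being exactly the overlap with earlier servers). Summing the per-link costs from \eqref{eq:per-token inference time} for a request from client $c$ then gives
\begin{align*}
T_c(P) = \sum_{j=1}^{K-1}\bigl(t_{cj}+\tau_j m_j\bigr) + \bigl(t_{cK}+\tau_K(L-M_{K-1})\bigr),
\end{align*}
which I would upper-bound by replacing each $t_{cj}$ with $t_{*j}=\max_{c'\in V_c}t_{c'j}$ to make it client-independent.

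Finally, Lemma~\ref{lem:optimality of shortest path routing - CG-BPRR} gives $SP_c\leq T_c(P)$ for every client $c$, and since the resulting bound is independent of $c$, averaging $SP_c$ over all requests preserves it:
\begin{align*}
T^g \leq \sum_{j=1}^{K} t_{*j} + \sum_{j=1}^{K-1}\tau_j m_j + \tau_K(L-M_{K-1}).
\end{align*}
Using $\widetilde{t}_j m_j=\tau_j m_j+t_{*j}$ from \eqref{eq:amortized inference time}, a direct rearrangement rewrites the right-hand side as $\sum_{j=1}^{K}\widetilde{t}_j m_j-\tau_K(M_K-L)$, which is exactly \eqref{eq:CG-BPRR bound}. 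The main obstacle I expect is the bookkeeping in the feasibility/overlap analysis: one has to justify not only that the canonical path is feasible but also that server $K$ processes only $L-M_{K-1}$ blocks rather than $m_K$, since it is this overcoverage that produces the $-\tau_K(M_K-L)$ correction term. The boundary case $K=1$ (where the cap in line~\ref{CG:1} forces $m_1=L$, the correction vanishes, and the bound collapses to $\widetilde{t}_1 L$) should be checked separately but is immediate.
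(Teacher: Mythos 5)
Your proposal is correct and follows essentially the same route as the paper's own proof: both bound $T^g$ by the cost of the canonical chain through servers $1,\ldots,K$ (justified by the optimality of the shortest-path routing in Lemma~\ref{lem:optimality of shortest path routing - CG-BPRR} and the placement structure from Lemma~\ref{lem:minimize inference time under relaxed routing}), account for server $K$ processing only $L-\sum_{j=1}^{K-1}m_j$ blocks, replace $t_{cj}$ by $t_{*j}$, and rearrange via $\widetilde{t}_j m_j=\tau_j m_j+t_{*j}$. The algebra checks out and no step is missing.
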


\emph{Remark:} The upper bound in \eqref{eq:CG-BPRR bound} has an intuitive meaning that it is the per-token inference time under the \emph{worst-case demand}, where all the requests come from a client $c$ that simultaneously achieves $t_{cj} = t_{*j}$ for all $j\in V_s$ (i.e., farthest away from all the servers), and are routed through the chain of servers $1,\ldots,K$. This worst-case guarantee will facilitate the adaptation of CG-BPRR for the online setting as explained below.  \looseness=0

In addition to the upper bound in \eqref{eq:CG-BPRR bound}, we also provide a lower bound on the minimum average per-token inference time in \ref{appendix:CG-BPRR Analysis}, which together with the upper bound yields a bounded approximation ratio for CG-BPRR.

\subsection{Online Setting}\label{subsec:Online Setting}

In practice, requests usually arrive dynamically, and block placement and request routing are usually performed at different time scales: block placement should be at a \emph{large time scale} to avoid excessive overhead due to frequently loading/unloading blocks, and request routing should be at a \emph{small time scale} to provide timely service to dynamically arriving requests. Nevertheless, the key ideas in CG-BPRR can be applied in the online setting through a two-time-scale solution as follows.

\subsubsection{Block Placement via Robust Optimization}

Due to its large overhead, one block placement solution should be able to serve a variety of request scenarios with reasonable performance. We address this requirement by treating the block placement problem as a \emph{robust optimization problem}, with the goal of minimizing the \emph{worst-case} average per-token inference time for a target range of scenarios. To this end, we note that the block placement steps in CG-BPRR (i.e., lines~\ref{CG:1}--\ref{CG:9}), referred to as \emph{Conservative Greedy Block Placement (CG-BP)}, are already solving a robust optimization. Specifically, given a maximum number of concurrent requests $|\mathcal{R}|$ that the system aims to support, CG-BP places blocks according to the worst case that all the requests are from a client that is farthest away from all the servers (i.e., with a per-token RTT of $t_{*j},\: \forall j\in V_s$). Thus, the performance bound in Theorem~\ref{thm:CG-BPRR} still applies as stated below. 

\begin{corollary}\label{cor:CG-BPRR}
If CG-BP (i.e., lines~\ref{CG:1}--\ref{CG:9} of Alg.~\ref{Alg:CG-BPRR}) gives a feasible block placement for $|\mathcal{R}|$ requests, then during online inference, the average per-token inference time under the optimal request routing will always be bounded by \eqref{eq:CG-BPRR bound} as long as the number of concurrent requests\footnote{In the online setting, ``concurrent requests'' refer to requests that are simultaneously active (i.e., submitted but unfinished) but do not necessarily arrive simultaneously.} is no more than $|\mathcal{R}|$.
\end{corollary}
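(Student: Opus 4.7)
The strategy is to reduce the online statement to the offline guarantee already established in Theorem~\ref{thm:CG-BPRR}, by exploiting two properties of CG-BP (lines~\ref{CG:1}--\ref{CG:9} of Alg.~\ref{Alg:CG-BPRR}): its memory reservation depends only on the target $|\mathcal{R}|$ (not on the identities of the requests or their client origins), and the bound in \eqref{eq:CG-BPRR bound} is a worst-case bound expressed purely in terms of server-side quantities $\widetilde{t}_j, \tau_K, m_j$.

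First, I would verify that feasibility carries over to the online setting. Line~\ref{CG:1} sets $m_j \leq \lfloor M_j/(s_m + s_c|\mathcal{R}|)\rfloor$, so the memory consumption \eqref{eq:GPU memory} at each server $j$ is at most $s_m m_j + s_c|\mathcal{R}| m_j \leq M_j$ whenever at most $|\mathcal{R}|$ requests are simultaneously routed through $j$. This conservative reservation is oblivious to client identity or arrival pattern, hence any instantaneous set of up to $|\mathcal{R}|$ concurrent requests can be accommodated under the placement $(\bm{a},\bm{m})$ produced by CG-BP. Moreover, because CG-BP is assumed feasible, every block $b\in[L]$ is hosted by at least one server, so each client $c$ has a feasible routing topology $G^c_{\bm{a},\bm{m}}$ containing at least one $c$-to-$c'$ path covering all $L$ blocks (cf. Lemma~\ref{lem:optimality of shortest path routing - CG-BPRR}).

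Next, I would invoke the worst-case interpretation of \eqref{eq:CG-BPRR bound} noted in the remark after Theorem~\ref{thm:CG-BPRR}. That bound was obtained by routing every request along the chain $1,\ldots,K$ of servers (sorted in increasing $\widetilde{t}_j$) for a hypothetical client simultaneously attaining $t_{cj}=t_{*j}$ for all $j\in V_s$, and therefore it already dominates the per-token inference time that any real client $c\in V_c$ would incur by routing through the same chain. Consequently, for any realized set of at most $|\mathcal{R}|$ concurrent requests, the shortest-path routing over each $G^c_{\bm{a},\bm{m}}$ achieves per-token inference time no greater than \eqref{eq:CG-BPRR bound} (shortest-path being at most the chain $1,\ldots,K$). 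Since the optimal online routing is, by definition, no worse than any other feasible routing decision under the same placement, and since the feasibility argument above certifies that CG-BPRR's shortest-path rule is itself a legal online choice at every instant, the optimal online routing inherits the same bound in the average sense.

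The main obstacle is making the decoupling between memory feasibility and per-request latency precise in a dynamic setting: in principle, online arrivals could attempt routing choices that exceed per-server capacity if the reservation is tight, and shortest-path optimality (Lemma~\ref{lem:optimality of shortest path routing - CG-BPRR}) was established offline under a fixed $\mathcal{R}$. The key observation that resolves this is that CG-BP's memory reservation uses $|\mathcal{R}|$ as a uniform per-server capacity, so the offline feasibility and cost bounds apply snapshot by snapshot to any set of at most $|\mathcal{R}|$ concurrent requests, regardless of how those snapshots evolve over time.
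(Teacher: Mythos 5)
Your proposal is correct and follows essentially the same route as the paper's proof: establish that the conservative memory reservation in line~\ref{CG:1} eliminates memory contention for any set of at most $|\mathcal{R}|$ concurrent requests, observe that routing every request through the chain of servers $1,\ldots,K$ is then feasible with per-token cost bounded by \eqref{eq:CG-BPRR bound} (via the worst-case $t_{*j}$ argument from the proof of Theorem~\ref{thm:CG-BPRR}), and conclude that the optimal routing can only do better. The additional snapshot-by-snapshot framing you give is a harmless elaboration of the same idea.
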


\emph{Remark:} 
The parameter $|\mathcal{R}|$ here is a design parameter that denotes the number of requests the system \emph{guarantees} to serve concurrently and can be tuned to ensure the feasibility of CG-BP. 
Specifically, given the set of servers and their memory capacities, it is easy to see that the block placement given by lines~\ref{CG:1}--\ref{CG:9} of Alg.~\ref{Alg:CG-BPRR} is feasible if and only if \looseness=-1
\begin{align}\label{eq:CG-BP feasibility,iff}
 \sum_{j\in V_s} \min(\lfloor M_j/(s_m+s_c |\mathcal{R}|) \rfloor,\: L) \geq L.
\end{align}
To satisfy the condition \eqref{eq:CG-BP feasibility,iff}, it suffices for $|\mathcal{R}|$ to satisfy
\begin{align}\label{eq:|R| upper bound}
|\mathcal{R}| \leq \left\lfloor {\sum_{j\in V_s}M_j-s_m(L+|V_s|)\over s_c(L+|V_s|)} \right\rfloor,
\end{align}
which provides an upper bound on the number of requests that the system can guarantee to serve concurrently under CG-BP. 
Meanwhile, increasing $|\mathcal{R}|$ has a negative effect of decreasing $m_j$ ($\forall j\in V_s$), which can increase the path length and hence the inference time. Thus, the parameter $|\mathcal{R}|$ effectively controls the ``throughput-delay tradeoff'' of the system. When the number of concurrent requests exceeds $|\mathcal{R}|$, the new request may have to wait for some of the existing requests to finish (as detailed in Section~\ref{subsubsec:Online Request Routing}). Therefore, $|\mathcal{R}|$ could be tuned to optimize the tradeoff between the waiting time and the inference time after waiting. 
A configuration method that empirically works well is as follows: (i) estimate the mean and the standard deviation (std) of the number of new requests during the processing of a given request, and (ii) set $|\mathcal{R}|$ to the minimum of the mean plus the std and the upper bound in \eqref{eq:|R| upper bound}. 
Such estimation can be further adjusted, on a time scale suitable for block placement, to accommodate time-varying demands.

\subsubsection{Request Routing via Individually Optimal Scheduling}\label{subsubsec:Online Request Routing}

Given the block placement by CG-BP, online request routing strives to schedule each incoming request to a feasible path to complete the request as soon as possible. If the number of concurrent requests is within $|\mathcal{R}|$ (the number of concurrent requests CG-BP plans for), then there will be no memory contention, and the new request can be routed optimally as in lines~\ref{CG:10}--\ref{CG:13} of Alg.~\ref{Alg:CG-BPRR}. If the number of concurrent requests exceeds $|\mathcal{R}|$, however, then the new request may have to wait for some existing requests to finish, in which case it is necessary to know the system state in handling the existing requests. 

To this end, we track the state of each server as follows. Let $(T^j_r(t), M^j_r(t))_{r=1}^{R_j(t)}$ denote the \emph{state of server $j$ at time $t$}, where $R_j(t)$ is the number of existing (and unfinished) requests routed through server $j$, $T^j_r(t)$ is the remaining time\footnote{We can estimate the remaining time for an existing request $r$ using its completion time by \eqref{Online RR:obj} when scheduling $r$ minus the elapsed time. 
} for a request $r\in \mathcal{R}_j(t)$, and $M^j_r(t)$ is the number of attention caches hosted on server $j$ for request $r$ (i.e., \#blocks $j$ processes for $r$). Without loss of generality, suppose that the existing requests have been sorted into increasing order of $T^j_r(t)$. Then for a new request arriving at time $t$, the \emph{waiting time} for link $(i,j)\in E$ to be available for routing the request (i.e., the time until server $j$ has enough cache space for a new request routed through $(i,j)$) is given by
\begin{align}\label{eq:waiting time on (i,j)}
t^W_{ij}(t) := \min \left\{T^j_k(t):\: \left\lfloor{M_j-s_m m_j\over s_c} \right\rfloor -\sum_{r=k+1}^{R_j(t)} M^j_r(t) \geq a_j+m_j-a_i-m_i \right\},
\end{align}
where $T^j_0(t) := 0$. That is, $t+t^W_{ij}(t)$ is the earliest time for server $j$ to have enough memory for the new request if it is routed from server $i$. The earliest time a path $p$ is available for the request is then given by $t+\max_{(i,j)\in p} t^W_{ij}(t)$. 

Given the current system state at time $t$, we formulate the \emph{individually optimal} (i.e., myopic) scheduling of a new request arriving from client $c$ at time $t$ as 
\begin{subequations}\label{eq:online RR}
\begin{align}
\min_{\bm{f}, t^W}\quad & t^W + \lmax \sum_{(i,j)\in E^c_{\bm{a},\bm{m}}} t^c_{ij}f_{ij} \label{Online RR:obj} \\
\mbox{s.t.} \quad & t^W_{ij}(t) f_{ij} \leq t^W,~~\forall (i,j)\in E^c_{\bm{a},\bm{m}}, \label{Online RR:t^W} \\
& \sum_{i\in \mathcal{N}^+(j; \bm{a},\bm{m})} f_{ji}-\sum_{i\in \mathcal{N}^-(j;\bm{a},\bm{m})}f_{ij} = d^c_j,~~\forall j\in V^c, \label{Online RR:flow conservation} \\
& f_{ij}\in \{0,1\},~~~\forall (i,j)\in E^c_{\bm{a},\bm{m}},
\end{align}
\end{subequations}
where the main variable is $f_{ij}\in \{0,1\}$ that indicates whether to route the request through link $(i,j)$. The objective \eqref{Online RR:obj} is to minimize \emph{the total waiting plus inference time} for this request\footnote{In PETALS implementation~\cite{PetalsGitHub}, there is another delay due to retrying inference at each server according to binary exponential backoff with a maximum backoff time (by default 60 s), causing the actual time from the request arrival to the start of inference to be nonlinear and mildly larger than $t^W$. We have ignored the retry delay to linearize the formulation \eqref{eq:online RR} but considered such delay in our simulation (see Section~\ref{subsec:Evaluation Setup}). 
}, \eqref{Online RR:t^W} ensures that all the invoked servers have enough memory when the session starts, and \eqref{Online RR:flow conservation} is the flow conservation constraint, with $\mathcal{N}^+(j; \bm{a},\bm{m})$ and $\mathcal{N}^-(j;\bm{a},\bm{m})$ denoting the outgoing/incoming neighbors of $j$ in the feasible routing topology $G^c_{\bm{a},\bm{m}}$ for client $c$ under the given block placement $(\bm{a},\bm{m})$, and $d^c_j$ denoting the constant defined in \eqref{eq:flow conservation}. 
While the second term of \eqref{Online RR:obj} has approximated the total inference time by \#output tokens multiplied by the time between tokens, it can be refined to model the actual total inference time by redefining $t^c_{ij}$ as in \eqref{eq:true avg per-token time}. 

While \eqref{eq:online RR} is a MILP that is hard to solve directly, we can simplify it by relaxing $t^W$ into its upper bound $\sum_{(i,j)\in E^c_{\bm{a},\bm{m}}} t^W_{ij}(t) f_{ij}$, which reduces \eqref{eq:online RR} into a shortest-path routing problem with \emph{waiting-penalized link cost} $t^W_{ij}(t) + \lmax t^c_{ij}$ for each $(i,j)\in E^c_{\bm{a},\bm{m}}$. We refer to this solution as \emph{Waiting-penalized Shortest-path Request Routing (WS-RR)}.  

\begin{corollary}\label{cor:Online RR}
Let $p_c(t)$ denote the path selected by WS-RR for client $c$ at time $t$. Then the  cost of this path $\sum_{(i,j)\in p_c(t)} \left(t^W_{ij}(t) + \lmax t^c_{ij}\right)$ is an upper bound on the completion time of a request $r^*$ arriving from client $c$ at time $t$. Moreover, if the number of concurrent requests at time $t$ is within $|\mathcal{R}|$ (the number of requests used by CG-BP in computing the block placement), then $p_c(t)$ is optimal under the given block placement. 
\end{corollary}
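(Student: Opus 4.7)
The plan is to establish the two claims of Corollary~\ref{cor:Online RR} separately, as both follow quickly from the construction of WS-RR combined with the conservative memory provisioning of CG-BP.

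First I would prove the upper-bound claim. The key observation is that request $r^*$ routed along $p_c(t)$ can begin inference only after \emph{every} server on the path has freed enough attention-cache memory, so its actual waiting time equals $\max_{(i,j)\in p_c(t)} t^W_{ij}(t)$, which is upper bounded by $\sum_{(i,j)\in p_c(t)} t^W_{ij}(t)$ because each $t^W_{ij}(t)\geq 0$. The inference time once the request starts is given exactly by \eqref{eq:request processing time}, and --- using the reinterpretation of $t^c_{ij}$ via \eqref{eq:true avg per-token time}, as justified in the remark after \eqref{eq:true total inference time} --- this equals $\lmax\sum_{(i,j)\in p_c(t)}t^c_{ij}$. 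Adding the two bounds yields the claimed inequality $\sum_{(i,j)\in p_c(t)}(t^W_{ij}(t)+\lmax t^c_{ij})$.

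Next I would prove optimality under the load condition. If at most $|\mathcal{R}|$ requests are concurrent at time $t$, the conservative assignment in line~\ref{CG:1} of Alg.~\ref{Alg:CG-BPRR} guarantees $s_m m_j + s_c|\mathcal{R}|m_j\leq M_j$ at every server, so each server still has enough free attention-cache memory to admit the new request immediately. By the definition in \eqref{eq:waiting time on (i,j)} this forces $t^W_{ij}(t)=0$ for every $(i,j)\in E^c_{\bm{a},\bm{m}}$. Under this condition, the MILP \eqref{eq:online RR} collapses (constraint \eqref{Online RR:t^W} is trivially satisfied by $t^W=0$) to minimizing $\lmax\sum_{(i,j)\in E^c_{\bm{a},\bm{m}}} t^c_{ij}f_{ij}$ subject to flow conservation and binary $f_{ij}$, which is exactly a shortest-path problem on $G^c_{\bm{a},\bm{m}}$ with edge cost $t^c_{ij}$ (the positive constant $\lmax$ being irrelevant to the $\arg\min$). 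WS-RR computes precisely this shortest path since $t^W_{ij}(t)+\lmax t^c_{ij}$ reduces to $\lmax t^c_{ij}$, and Lemma~\ref{lem:optimality of shortest path routing - CG-BPRR} certifies its optimality under the given block placement.

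The only delicate point, and the one I expect to require the most care in the writeup, is reconciling the surrogate cost $\lmax\sum t^c_{ij}$ in \eqref{Online RR:obj} with the true completion time, since using the decoding-only $t^c_{ij}$ from \eqref{eq:per-token inference time} would leave a gap due to prefill. The clean resolution, already signaled in the text following \eqref{eq:true total inference time}, is to adopt the refined definition \eqref{eq:true avg per-token time} so that $\lmax t^c_{ij}$ exactly captures the aggregate prefill-plus-decoding contribution of link $(i,j)$ over all $\lmax$ tokens; under this interpretation no approximation is introduced in the second term and both parts of the corollary follow as above.
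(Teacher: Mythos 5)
Your proposal is correct and follows essentially the same route as the paper's own proof: you bound the waiting time $\max_{(i,j)\in p_c(t)} t^W_{ij}(t)$ by the sum over links, add the inference term $\lmax\sum_{(i,j)\in p_c(t)} t^c_{ij}$, and for the second claim observe that bounded load forces $t^W_{ij}(t)\equiv 0$ so the bound becomes tight and the shortest path is optimal. Your extra care about the prefill contribution (reconciling \eqref{eq:per-token inference time} with \eqref{eq:true avg per-token time}) is a welcome clarification that the paper only gestures at, but it does not change the argument.
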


\subsubsection{Summary of Online BPRR}

 Alg.~\ref{Alg:Online BPRR} in \ref{appendix:Online BPRR} summarizes our proposed two-time-scale solution for the online setting that combines CG-BP and WS-RR. Together, Corollaries~\ref{cor:CG-BPRR} and \ref{cor:Online RR} imply that this combined solution 
can guarantee a request completion time that is bounded by 
\begin{align}
\left\{ \begin{array}{ll}
\lmax \left(\sum_{j=1}^K \widetilde{t}_j m_j - \tau_K \left(\sum_{j=1}^K m_j -L\right)\right) & \mbox{if \#concurrent requests }\leq |\mathcal{R}|, \\
\sum_{(i,j)\in p_c(t)} \left(t^W_{ij}(t) + \lmax t^c_{ij}\right) & \mbox{o.w.,}
\end{array}\right.
\end{align}
where the first bound is independent of the system state while the second bound is dependent. 
We note that while individually optimal scheduling may not be globally optimal, in our solution it is just used to provide a feasible solution in the exceptional case when \#concurrent requests $>|\mathcal{R}|$ and thus suffices under properly set $|\mathcal{R}|$. 

\section{Performance Evaluation}\label{sec:Performance Evaluation}

We evaluate the performance of the proposed algorithms against the state of the art through both controlled experiments based on the PETALS system~\cite{Borzunov23NeurIPS} and data-driven simulations based on a simulator we have developed that has been cross-validated with the experiment results. 

\subsection{Evaluation Setup}\label{subsec:Evaluation Setup}

\noindent\emph{\bf Evaluation environment:} We employ two cross-validated evaluation environments: \\
    \noindent 1) \emph{PETALS-based distributed system:} This environment runs real LLM inference workloads on a modified version of PETALS~\cite{Borzunov23NeurIPS} that can take any block placement and request routing decisions as inputs. {We deploy two copies of this system: (i) a \emph{smaller deployment} on a server with 3 A100 (80 GB) GPUs, where we leverage the multi-instance GPU technology~\cite{NvidiaMIG} to partition one of the A100 GPUs into 7 smaller virtual GPUs (referred to as MIGs) that together with the remaining A100s provide 9 servers with one GPU each, and (ii) a \emph{larger deployment} on a server with 8 A100 (80 GB) GPUs, where we partition three of the A100s into 21 MIGs that together with the remaining A100s provide 26 servers with one GPU each.} We use the CPUs to emulate the clients. We use the namespace and the traffic control features of Linux~\cite{schubert2019network} to simulate network latency and bandwidth between the nodes as specified in ``System configuration'' below. \\    
    \noindent 2) \emph{MATLAB-based simulator:} To overcome the limited scale of the experiments due to limited GPU resources, we develop a customized simulator in MATLAB that implements the block placement and request routing logic of both the original algorithm in PETALS~\cite{Borzunov23NeurIPS} and the proposed algorithm. The simulator has been engineered to replicate the decision of the real system under the same system state, and validated to approximate the overall experiment results over multiple requests (see Section~\ref{subsec:Experiment Results and Simulator Validation}). We have open-sourced our simulator code\footnote{Our simulator code is available at: \href{https://github.com/TingyangSunJeff/LLM_inference_simulator/tree/main}{https://github.com/TingyangSunJeff/LLM\_inference\_simulator/tree/main}.
    } to facilitate future research on distributed LLM inference for researchers with limited GPU access..  

\begin{table}[]
\small
    \centering
    \begin{tabular}{c|c|c|c}
         &  Cluster0 (CPU) & Cluster1 (2 A100s) & Cluster2 (7 MIGs)  \\
         \hline
Cluster0 (CPU) & 5 ms, 1 Gbit/s   & 100 ms, 100 Mbit/s  & 100 ms, 100 Mbit/s   \\       
Cluster1 (2 A100s) & 100 ms, 100 Mbit/s & 5 ms, 1 Gbit/s & 100 ms, 100 Mbit/s   \\    
Cluster2 (7 MIGs) & 100 ms, 100 Mbit/s & 100 ms, 100 Mbit/s  & 5 ms, 1 Gbit/s 
    \end{tabular}
     \vspace{-0.5em}
    \caption{{System configuration for experimentation (entries denote RTT and bandwidth).}
    }
    \label{tab: network_link_property}
    \vspace{-.0em}
\end{table} 

\noindent\emph{\bf System configuration:} 
We employ BLOOM-176B~\cite{BigScience23BLOOM} as the LLM, which is one of the largest open-source LLMs, and configure our evaluation environment to mimic a clustered or scattered deployment scenario as follows.    

 For the \emph{clustered scenario}, we configure the smaller-scale deployment into three clusters with heterogeneous hardware capabilities: 
    \begin{itemize}
        \item Cluster0: a cluster containing clients remote to all the servers; 
        \item Cluster1: a cluster containing two high-performance servers represented by the A100 GPUs as well as clients local to these servers;
        \item Cluster2: a cluster containing seven low-performance servers represented by the MIGs as well as clients local to these servers.
    \end{itemize}
We add the servers to the system in a random order, and follow the prior work~\cite{Borzunov23NeurIPS} in configuring the networking parameters for intra/inter-cluster communications as presented in Table~\ref{tab: network_link_property}. \looseness=-1

\begin{table}[]
\small
    \centering
    \begin{tabular}{c|c|c|c}
         &   AboveNet  & BellCanada & GTS-CE\\
         \hline
 \#nodes &   23 & 48  & 149\\
 \#links &   62  & 130 & 386 \\
 link capacities (Gbps) & 1  & 1 & 1 \\
 link delays (ms) & $[0.100, 13.800]$   & $[0.078, 6.160]$ & $[0.005,1.081]$ 
    \end{tabular}
    \vspace{-.5em}
    \caption{Topologies used in simulation.     
    }
    \label{tab:topo_info}
    \vspace{-.0em}
\end{table}

For the \emph{scattered scenario}, we simulate three topologies of different sizes from the Internet Topology Zoo~\cite{knight2011internet} with link capacities and delays from \cite{gay2017repetita}, as presented in Table~\ref{tab:topo_info}. We compute the RTTs between nodes according to the cumulative delays along the delay-based shortest paths. In each simulated network, we randomly select $C$ nodes as the locations of servers, $\eta$ fraction of which (randomly selected) are equipped with high-performance servers represented by A100s and the rest are equipped with MIGs. In the experiments, we use AboveNet for the smaller-scale deployment and BellCanada/GTS-CE for the larger-scale deployment, where $C$ and $\eta$ are set based on \#GPUs of each type in each deployment; in the simulations, we vary $C$ and $\eta$ to evaluate their impacts. 

We generate $N_R$ requests {according to a Poisson process with rate $\lambda$} from one of the clusters (in clustered scenario) or a randomly selected node not hosting any server (in scattered scenario), which represents a proxy that queries the system on behalf of end users (e.g., a Flask web server as in \cite{Borzunov23petals}). 
Unless stated otherwise, we set the design parameter $|\mathcal{R}|$ as discussed after Corollary~\ref{cor:CG-BPRR}.
Our experiment results are averaged over 5 Monte Carlo runs and our simulation results are averaged over $20$ Monte Carlo runs. 

\noindent\emph{\bf Benchmark:}
We compare the proposed two-time-scale algorithm (Alg.~\ref{Alg:Online BPRR}) with the original block placement and request routing algorithm in PETALS~\cite{Borzunov23NeurIPS}, where block placement is performed sequentially by letting each newly-added server choose a consecutive range of the most under-served blocks as measured by a heuristic throughput metric, and request routing is performed by letting each client build a graph with heuristic edge weights based on network latency and processing time\footnote{Unlike $G^c_{\bm{a},\bm{m}}(t)$ in Alg.~\ref{Alg:Online BPRR} that uses our validated performance models to compute edge weights, the routing algorithm in PETALS uses heuristic weights, which causes the sum weight along a path to differ from the actual inference time; see \cite{PetalsGitHub} for details.}, and then use a Dijkstra-like algorithm to find the shortest path that traverses the required blocks in order. 
We set the target number of concurrent requests to the expected number of arrivals during the first inference session plus one standard deviation.  

\noindent\emph{\bf Metrics:} Our primary performance metric is the average inference time per token for \emph{all the tokens}, excluding the local processing times at the client (as they are not affected by the resource allocation at servers). 
In addition, we also measure the average inference times for \emph{the first token} and \emph{each of the remaining tokens}, respectively, for additional insights, as well as the average running time of each algorithm. 

\subsection{Experiment Results and Simulator Validation}\label{subsec:Experiment Results and Simulator Validation}

\begin{table}[]
\centering
\renewcommand{\arraystretch}{0.95} 
\setlength{\tabcolsep}{4pt} 
\small 

\vspace{3pt}

\resizebox{\textwidth}{!}{
\begin{tabular}{
>{\centering\arraybackslash}m{2cm} 
>{\centering\arraybackslash}m{2cm} 
>{\centering\arraybackslash}m{2cm} 
>{\centering\arraybackslash}m{2cm} 
>{\centering\arraybackslash}m{2cm} 
>{\centering\arraybackslash}m{2cm} 
}
\toprule
\multirow{2}{*}{\textbf{Client Location}} 
& \multirow{2}{*}{\textbf{Algorithm}} & \multicolumn{2}{c}{\textbf{0.1 requests/s}} & \multicolumn{2}{c}{\textbf{0.5 requests/s}} \\
\cmidrule(lr){3-4} \cmidrule(lr){5-6}
& & \textbf{$\lmax=64$} & \textbf{$\lmax=128$} & \textbf{$\lmax=64$} & \textbf{$\lmax=128$} \\
\midrule
\multirow{2}{*}{Cluster0} & PETALS & $6.23\ (5.33)$ & $4.76\ (4.74)$ & $6.28\ (5.33)$ & $5.14\ (4.74)$ \\
& Proposed                         & $1.92\ (1.59)$ & $1.43\ (0.92)$ & $2.00\ (1.59)$ & $1.34\ (0.92)$ \\
\midrule
\multirow{2}{*}{Cluster1} & PETALS & $5.44\ (5.17)$ & $4.60\ (4.58)$ & $5.56\ (5.17)$ & $4.79\ (4.58)$ \\
& Proposed                         & $1.78\ (1.65)$ & $1.04\ (0.83)$ & $1.88\ (1.65)$ & $1.11\ (0.83)$ \\
\midrule
\multirow{2}{*}{Cluster2} & PETALS & $5.30\ (4.85)$ & $4.85\ (4.07)$ & $5.34\ (4.85)$ & $5.25\ (4.07)$ \\
& Proposed                         & $1.79\ (1.59)$ & $1.31\ (0.92)$ & $1.94\ (1.59)$ & $1.37\ (0.92)$ \\
\bottomrule
\end{tabular}
}
\vspace{-0.5em}
\caption{{Average per-token inference time (s) under the  configuration in Table~\ref{tab: network_link_property} (\textit{$\lmax^I=20$}; 100 requests; MATLAB results shown in parentheses). }}
\label{tab: time_per_token_clustered}
\vspace{-.0em}
\end{table}

\subsubsection{Inference Time}

\textbf{Clustered scenario:} Tables~\ref{tab: time_per_token_clustered}, \ref{tab: TTFT_clustered}, and \ref{tab: time_per_remaining_token_clustered} present the average inference times for all the tokens, the first token in each sequence, and the remaining tokens, respectively, 
under various client locations (Cluster 0, 1, or 2), BPRR algorithms (PETALS~\cite{Borzunov23NeurIPS} or the proposed), request rates, and sequence lengths. The results show that: (i) the proposed algorithm (Alg.~\ref{Alg:Online BPRR}) achieves significantly ($60$--$70\%+$) smaller average inference times compared to the original algorithm in PETALS in all the tested cases, and (ii) our MATLAB simulator produces results that are roughly consistent with the actual experiments. A closer look shows that the performance improvement mainly comes from the time for the first token (Table~\ref{tab: TTFT_clustered}), for which our algorithm achieves an order-of-magnitude reduction. Although the improvement in the inference time for the remaining tokens is smaller (Table~\ref{tab: time_per_remaining_token_clustered}), the overall improvement is dominated by the time reduction for the first token. 
Comparing the results across different settings, we further see that: (i) increasing the output length can reduce the average per-token time due to amortizing the first token's inference time over more tokens, and 
(ii) the client location can affect the performance, e.g., a client in Cluster1 has relatively smaller inference times than a client in Cluster0 under the same algorithm due to its proximity to high-performance servers, but the difference is small as the communication time is only a small fraction of the total inference time (e.g., sending one embedding for BLOOM-176B across two clusters only takes about $0.12$ s). \looseness=-1

\begin{table}[]
\centering
\renewcommand{\arraystretch}{0.95} 
\setlength{\tabcolsep}{4pt}      
\small                            

\vspace{3pt}

\resizebox{\textwidth}{!}{%
\begin{tabular}{
  >{\centering\arraybackslash}m{2cm} 
  >{\centering\arraybackslash}m{2cm} 
  >{\centering\arraybackslash}m{2cm} 
  >{\centering\arraybackslash}m{2cm} 
  >{\centering\arraybackslash}m{2cm} 
  >{\centering\arraybackslash}m{2cm} 
} 
\toprule
\multirow{2}{*}{\textbf{Topology}} 
  & \multirow{2}{*}{\textbf{Algorithm}} 
    & \multicolumn{2}{c}{\textbf{0.1 requests/s}} 
    & \multicolumn{2}{c}{\textbf{0.5 requests/s}} \\ 
\cmidrule(lr){3-6}
  & & \textbf{$\lmax=64$} & \textbf{$\lmax=128$} 
    & \textbf{$\lmax=64$} & \textbf{$\lmax=128$} \\
\midrule
\multirow{2}{*}{AboveNet} 
  & PETALS    & $4.98\ (4.75)$ & $4.03\ (3.88)$  & $5.26\ (5.11)$ & $4.58\ (4.10)$ \\
  & Proposed  & $1.86\ (1.63)$ & $1.44\ (1.36)$ & $1.97\ (1.83)$ & $1.35\ (1.05)$ \\
\midrule
\multirow{2}{*}{BellCanada} 
  & PETALS    & $6.31\ (6.03)$ & $3.82\ (3.49)$ & $6.74\ (6.19)$  & $4.16\ (3.41)$  \\
  & Proposed  & $1.33\ (1.41)$ & $1.26\ (0.92)$  & $1.49\ (1.41)$  & $1.11\ (0.92)$  \\
\midrule
\multirow{2}{*}{GTS-CE} 
  & PETALS    & $7.05\ (6.12)$  & $4.69\ (3.47)$ & $6.89\ (5.97)$  & $4.89\ (3.37)$ \\
  & Proposed  & $1.38\ (1.41)$ & $0.95\ (0.91)$ & $1.35\ (1.40)$ & $1.07\ (0.91)$  \\
\bottomrule
\end{tabular}%
}
\vspace{-0.5em}
\caption{{Average per‐token inference time (s) under the topologies in Table~\ref{tab:topo_info} ($\lmax^I = 20$; 100 requests; MATLAB results shown in parentheses).}}
\label{tab:time_per_token_scattered}
\vspace{-.0em}
\end{table}

\textbf{Scattered scenario:} Tables~\ref{tab:time_per_token_scattered}, \ref{tab:TTFT_scattered}, and \ref{tab:time_per_remaining_token_scattered} present the corresponding results under the network topologies in Table~\ref{tab:topo_info}. The results show qualitatively similar observations as before, e.g., the proposed algorithm significantly reduces the inference times in a diverse set of cases with different numbers of servers, different topologies, and different loads (controlled by request rate and sequence length). This validates the generalizability of our observations. We also see bigger improvements for the larger networks (e.g., $60$--$70\%$ improvement for AboveNet and around $80\%$ improvement for GTS-CE), suggesting promising performance improvement for large deployments. 

\emph{Remark:} We have examined the specific block placement and request routing decisions made by the two algorithms to understand the causes of the observed performance difference. The primary cause turns out to be the difference in how the GPU memory is allocated between model blocks and attention caches: PETALS uses a fixed allocation of attention cache space without considering concurrent sessions~\cite{PetalsGitHub}, which causes it to frequently run out of memory and incur waiting times for incoming requests; in contrast, our solution is designed to handle a certain number of concurrent sessions, which allows it to avoid waiting when properly configured, leading to the significant reduction in the time for the first token. These different memory allocations manifest in the number of blocks placed at each server, where PETALS places 53 blocks on A100 and 4 blocks on MIG, whereas our algorithm (CG-BP) only places 41 blocks on A100 and 3 blocks on MIG. 
We also note that although our simulator implements the same decision making logic as the real system, it cannot perfectly predict the time for executing the decisions due to complex runtime factors (e.g., memory fragmentation and low-level scheduling) not captured by our system model, which is the cause of the mildly different results. Nevertheless, the simulated performance is overall predictive of the actual performance observed in our experiments.

\subsubsection{Algorithm Running Time}

\begin{table}[]
\centering
\renewcommand{\arraystretch}{0.95} 
\setlength{\tabcolsep}{4pt} 
\scriptsize  
\vspace{-0.5em}

\resizebox{0.8\textwidth}{!}{%
\begin{tabular}{
 >{\centering\arraybackslash}m{2.5cm}
 >{\centering\arraybackslash}m{3cm}
 >{\centering\arraybackslash}m{3cm}
}
\toprule
\textbf{Scenario} & \textbf{PETALS} & \textbf{Proposed} \\
\midrule
{Clustered}  & $0.0186 \pm 0.0013$ & $0.0216 \pm 0.0004$ \\
{AboveNet}   & $0.0190 \pm 0.0081$ & $0.0333 \pm 0.0128$ \\
{BellCanada} & $0.0291 \pm 0.0011$ & $0.0287 \pm 0.0018$ \\
{GTS-CE}      & $0.0350 \pm 0.0020$ & $0.0320 \pm 0.0012$ \\
\bottomrule
\end{tabular}
} 
\vspace{-0.5em}
\caption{Algorithm running time (s) (algorithm running time is independent of client location, request rate, and sequence length).}
\label{tab:running_time}
\vspace{-0.0em}
\end{table}

Table~\ref{tab:running_time} shows the average running time of each algorithm evaluated in Tables~\ref{tab: time_per_token_clustered}--\ref{tab:time_per_token_scattered}. To ensure a fair comparison and separate the decision making time from the time of implementing the decisions, we evaluate the running times of both our algorithm and the original algorithm in PETALS~\cite{Borzunov23NeurIPS} based on their MATLAB implementation. Both solutions are fast enough so that the decision making time is negligible compared to the actual inference time. 

\subsection{Results of Experimentally-validated Simulations}

\begin{figure}[t!]
\centering

\begin{subfigure}[t]{0.31\textwidth}
  \centering
  \includegraphics[width=\textwidth]{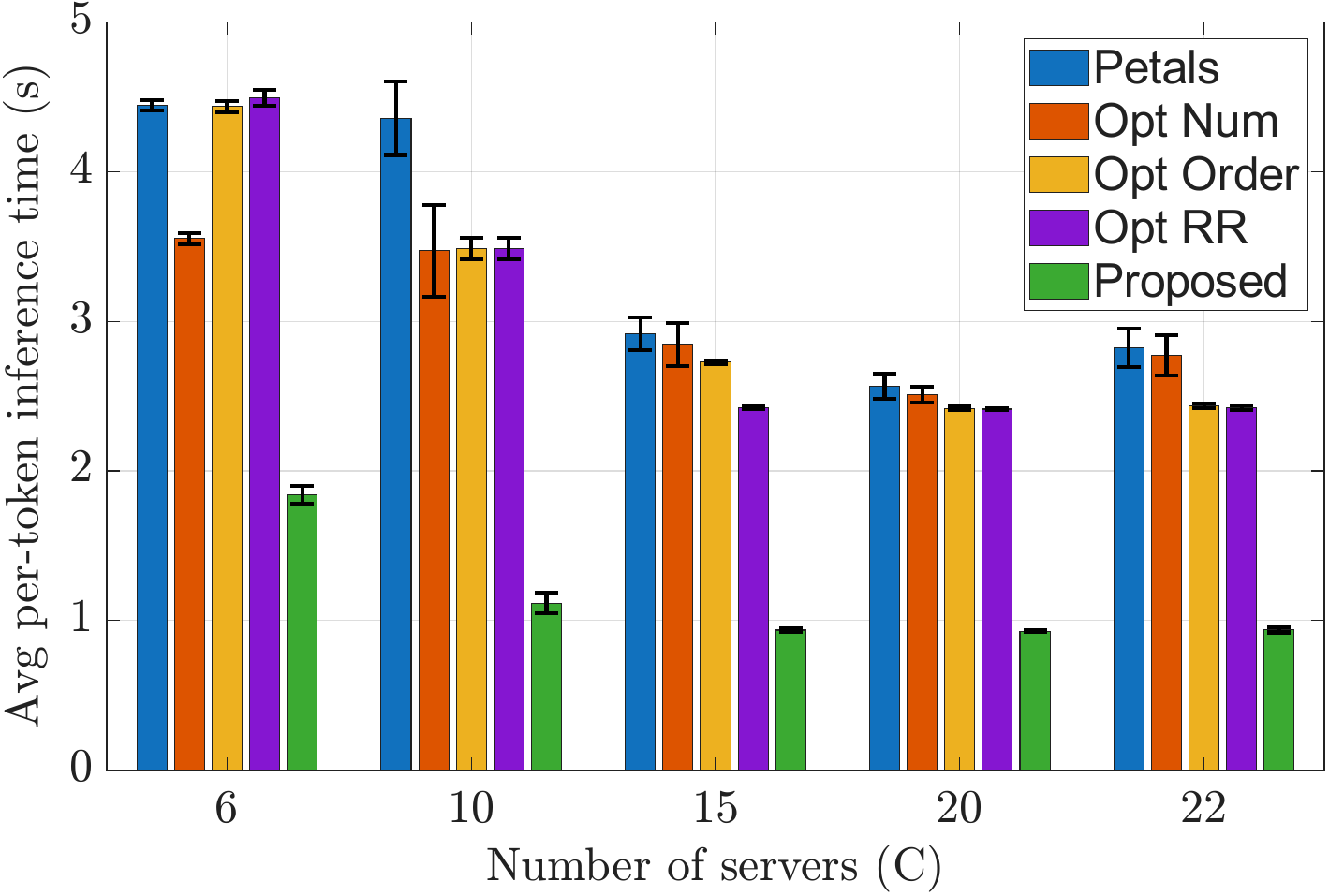}
  \caption{AboveNet}
\end{subfigure}
\begin{subfigure}[t]{0.31\textwidth}
  \centering
  \includegraphics[width=\textwidth]{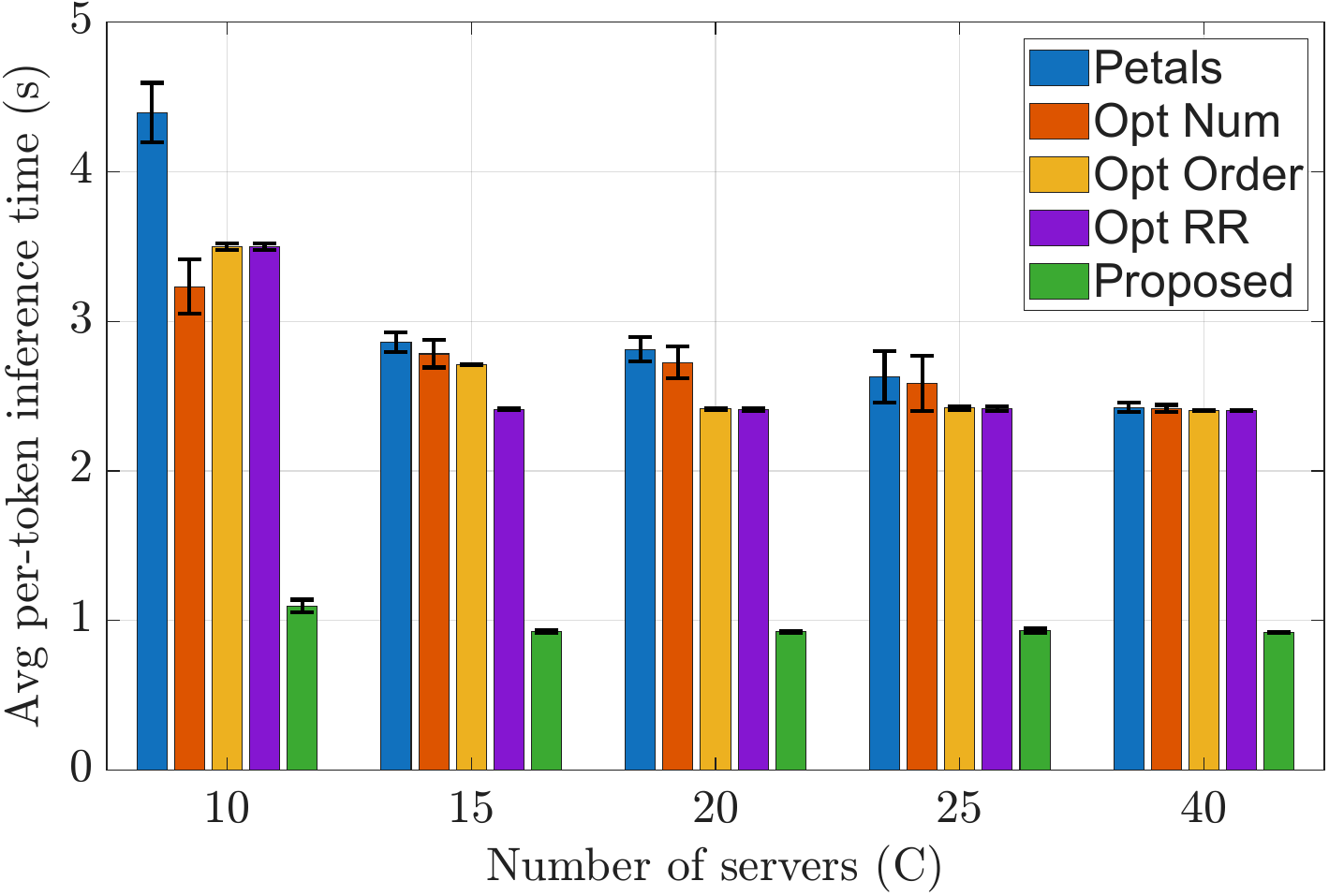}
  \caption{BellCanada}
\end{subfigure}
\begin{subfigure}[t]{0.31\textwidth}
  \centering
  \includegraphics[width=\textwidth]{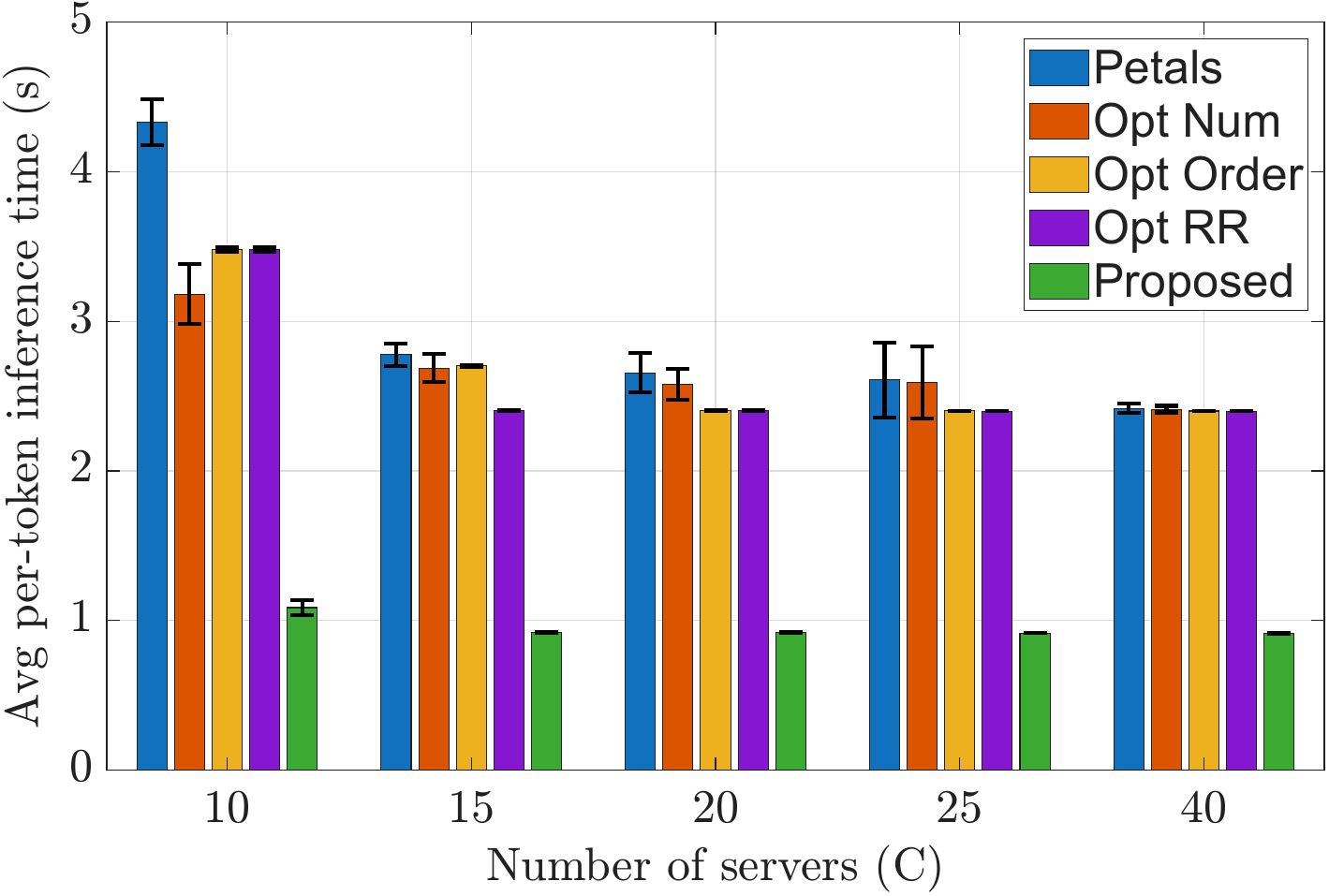}
  \caption{GTS-CE}
\end{subfigure}

\vspace{-1em}
\caption{Inference time per token when varying \#servers $C$ ($\eta = 0.2$, $\lambda=0.5$, $N_R = 100$, $\lmax^I=20$, $\lmax = 128$).}
\label{fig:inference_time_vary_C}
\end{figure}

\begin{figure}[t!]
\centering

\begin{subfigure}[t]{0.31\textwidth}
  \centering
  \includegraphics[width=\textwidth]{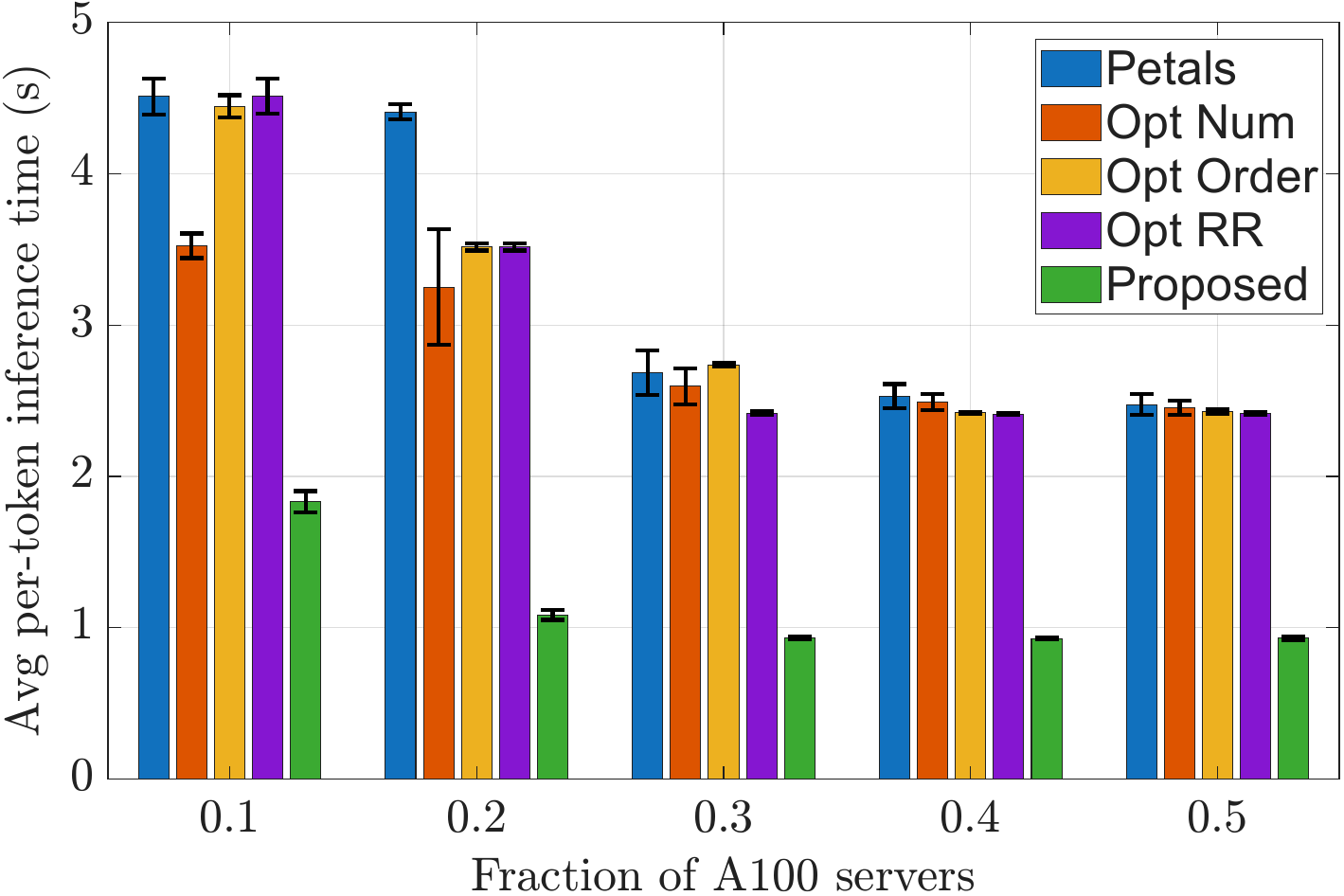}
  \caption{AboveNet}
\end{subfigure}
\begin{subfigure}[t]{0.31\textwidth}
  \centering
  \includegraphics[width=\textwidth]{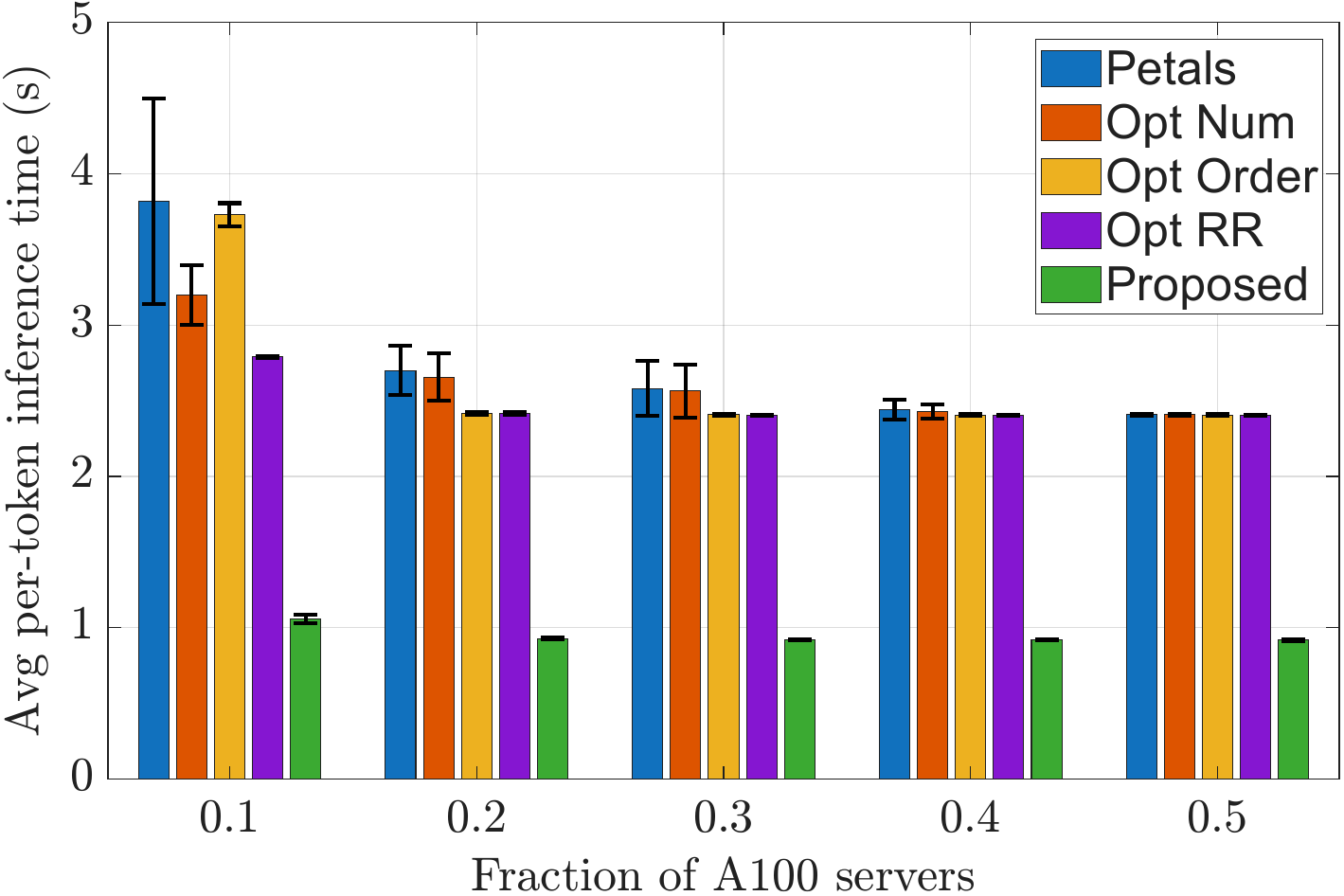}
  \caption{BellCanada}
\end{subfigure}
\begin{subfigure}[t]{0.31\textwidth}
  \centering
  \includegraphics[width=\textwidth]{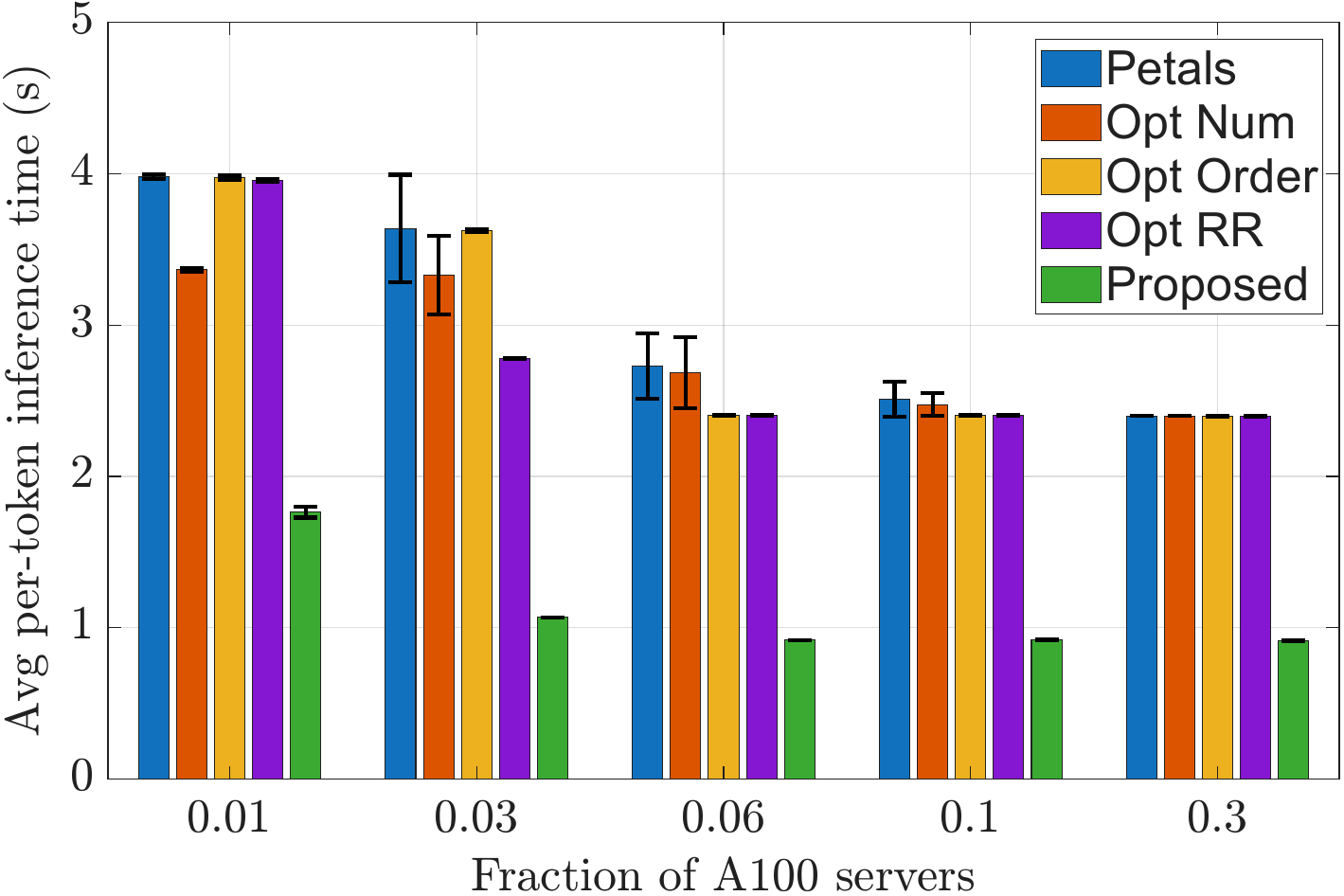}
  \caption{GTS-CE}
\end{subfigure}

\vspace{-1em}
\caption{Inference time per token when varying the fraction of high-performance servers $\eta$ 
($C = 0.4 \cdot \text{\#nodes}$, $\lambda = 0.5$, $N_R = 100$, $\lmax^I = 20$, $\lmax = 128$).}
\label{fig:inference_time_vary_eta}
\end{figure}

\begin{figure}[t!]
\centering

\begin{subfigure}[t]{0.31\textwidth}
  \centering
  \includegraphics[width=\textwidth]{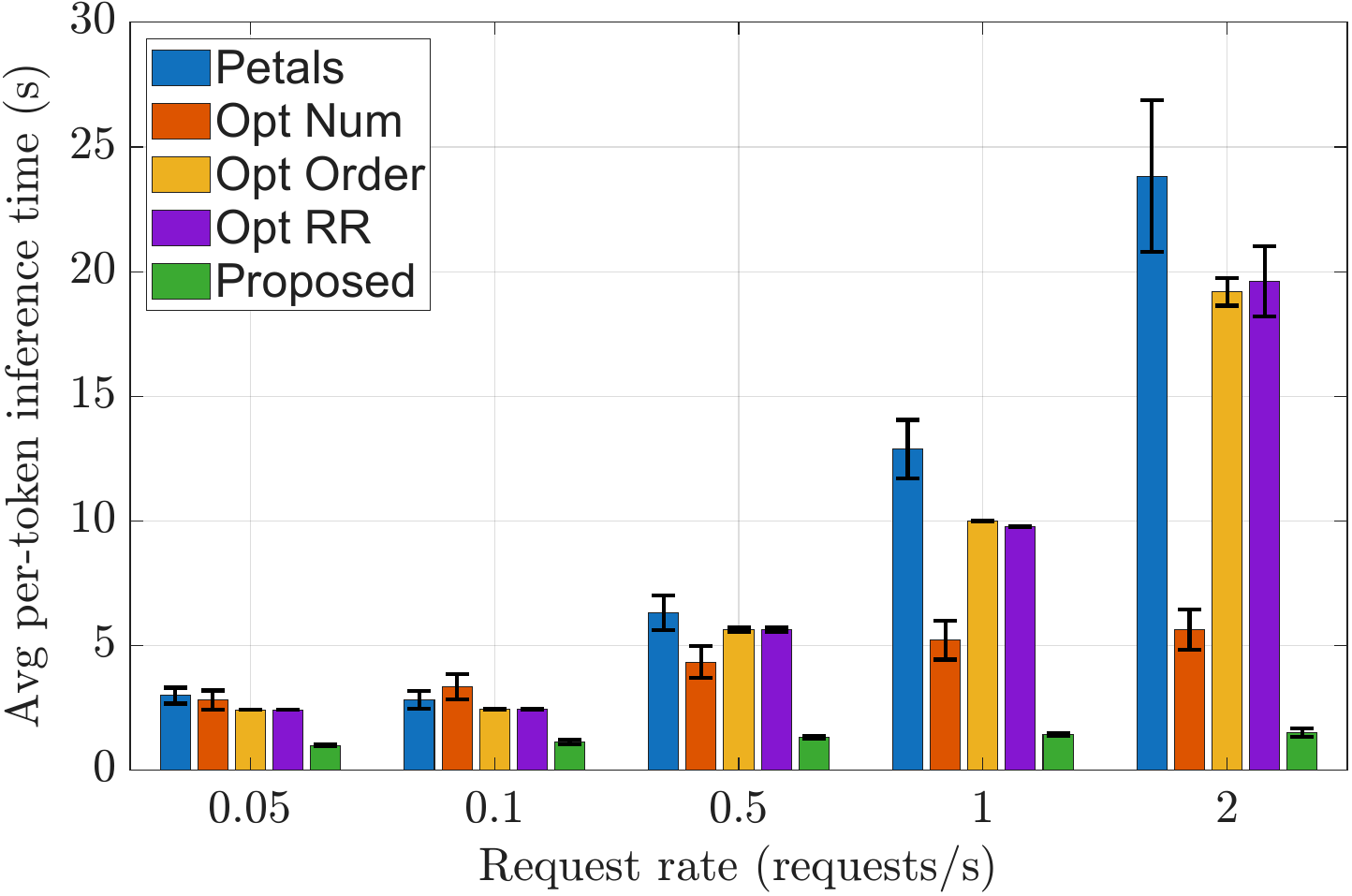}
  \caption{AboveNet}
\end{subfigure}
\begin{subfigure}[t]{0.31\textwidth}
  \centering
  \includegraphics[width=\textwidth]{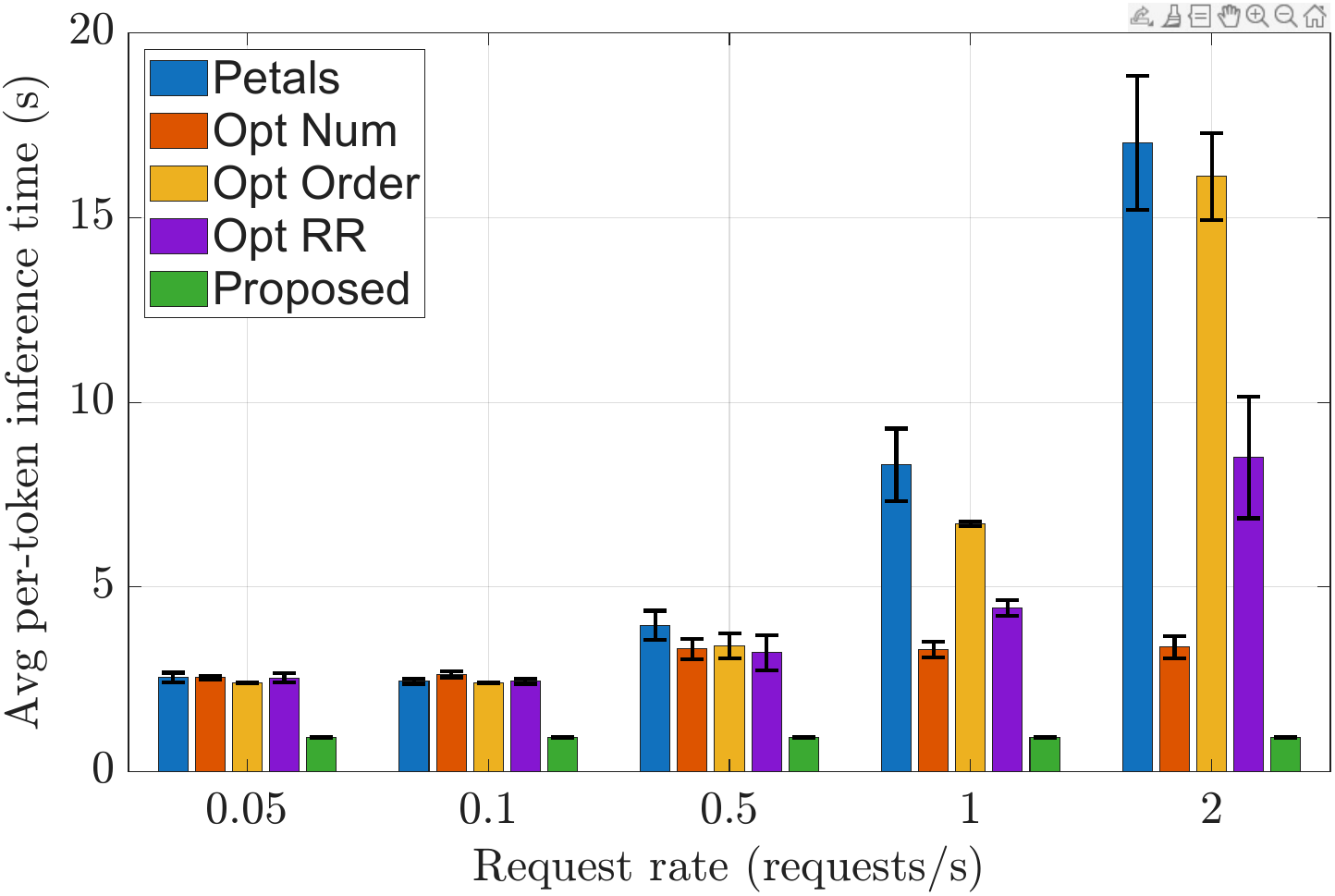}
  \caption{BellCanada}
\end{subfigure}
\begin{subfigure}[t]{0.31\textwidth}
  \centering
  \includegraphics[width=\textwidth]{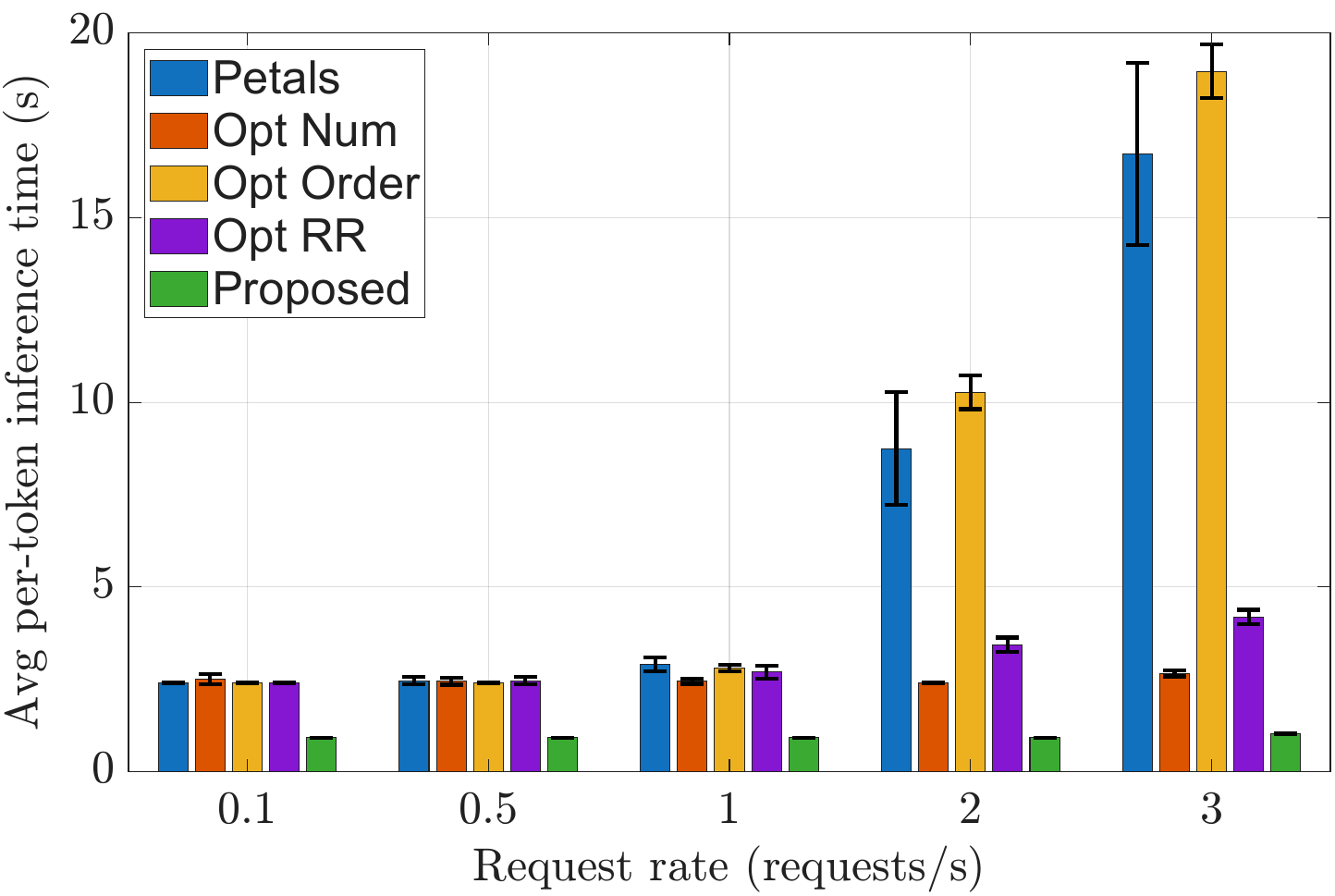}
  \caption{GTS-CE}
\end{subfigure}

\vspace{-1em}
\caption{Inference time per token when varying request rate $\lambda$ 
($C = 0.4 \cdot \text{\#nodes}$, $N_R = 200\lambda$, $\eta = 0.2$, $\lmax^I = 20$, $\lmax = 128$).}
\label{fig:inference_time_vary_lambda}
\end{figure}

\begin{figure}[t!]
\centering

\begin{subfigure}[t]{0.31\textwidth}
  \centering
  \includegraphics[width=\textwidth]{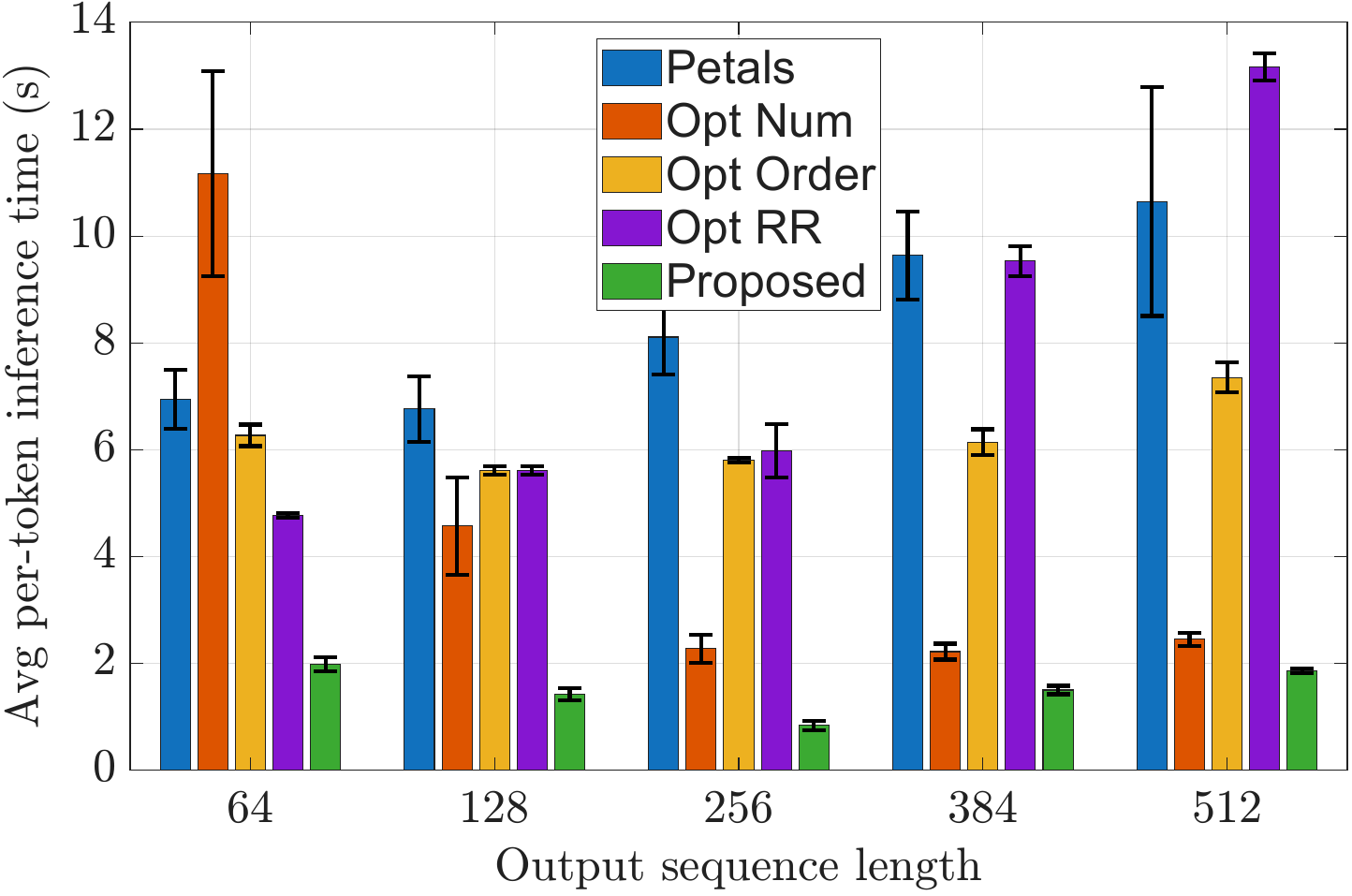}
  \caption{AboveNet}
\end{subfigure}
\begin{subfigure}[t]{0.31\textwidth}
  \centering
  \includegraphics[width=\textwidth]{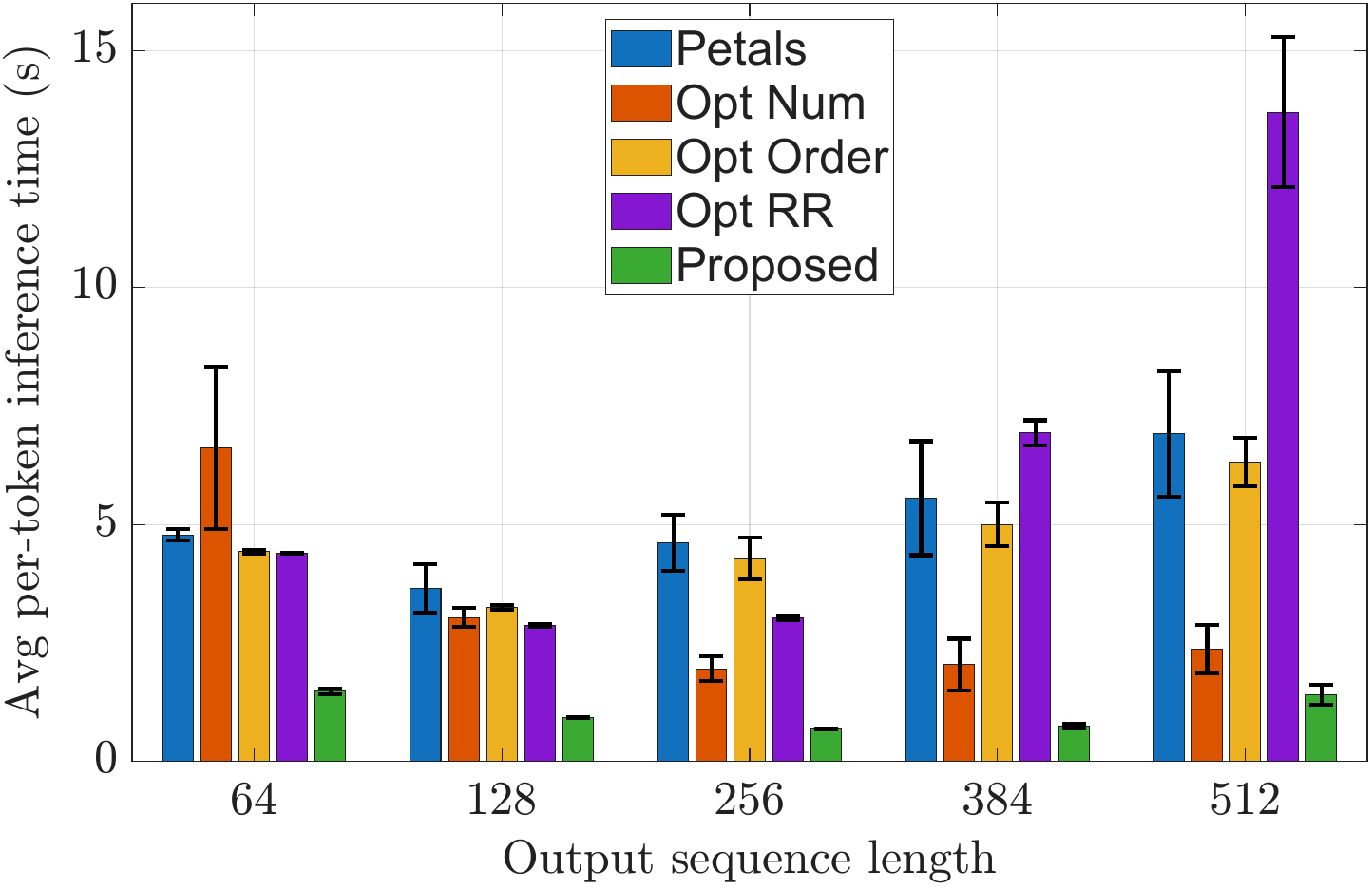}
  \caption{BellCanada}
\end{subfigure}
\begin{subfigure}[t]{0.31\textwidth}
  \centering
  \includegraphics[width=\textwidth]{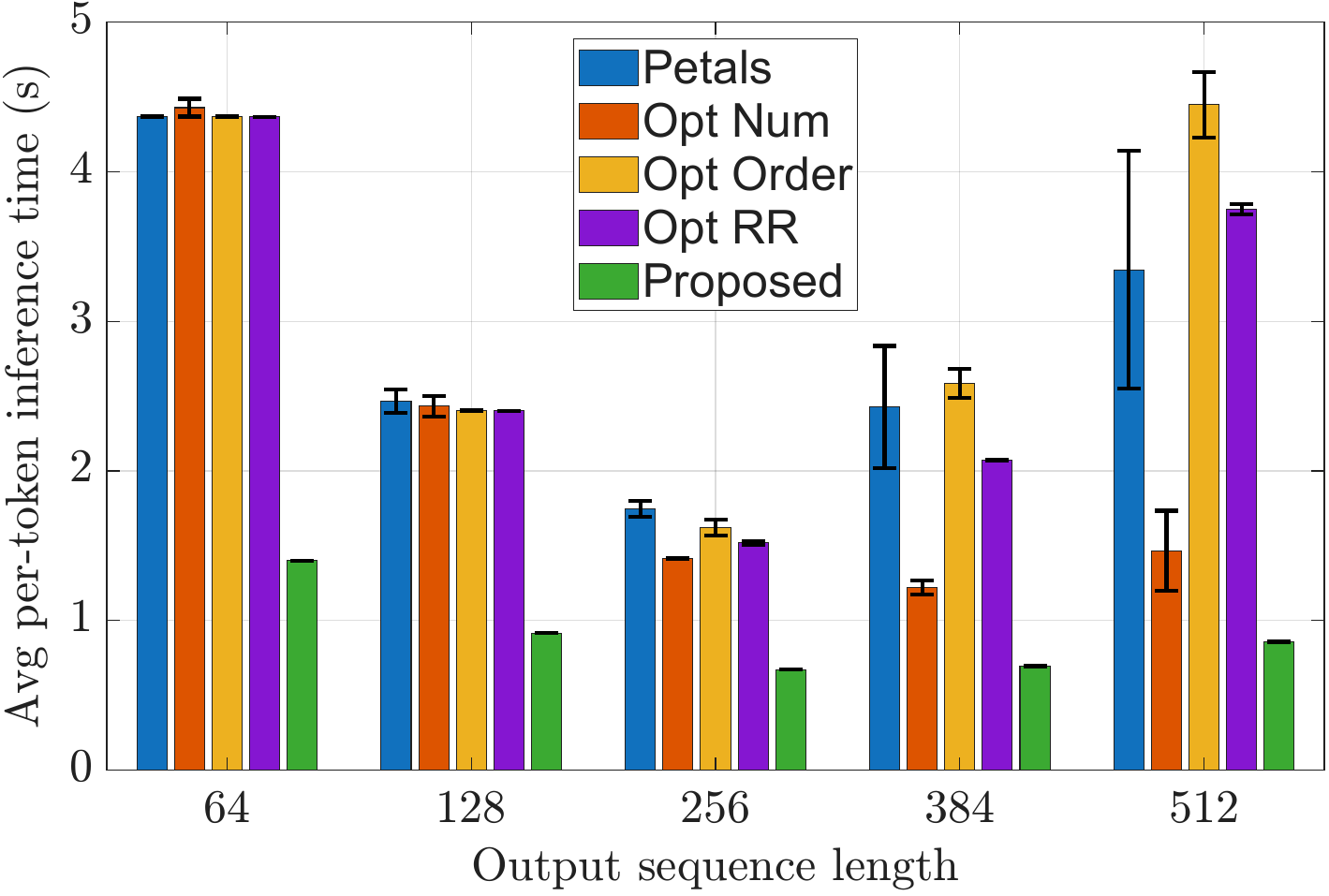}
  \caption{GTS-CE}
\end{subfigure}

\vspace{-1em}
\caption{Inference time per token when varying the sequence length $\lmax$ 
($C = 0.4 \cdot \text{\#nodes}$, $\eta = 0.2$, $\lambda = 0.5$, $N_R = 100$, $\lmax^I = 20$).}
\label{fig:inference_time_vary_lmax}
\end{figure}



We now use the validated MATLAB simulator to evaluate a wider range of scenarios based on the topologies in Table~\ref{tab:topo_info}. 
Fig.~\ref{fig:inference_time_vary_C}--\ref{fig:inference_time_vary_eta} show the average inference time per token (over all the tokens) as we vary the available resources in terms of either \#servers or the fraction of high-performance servers. 
Fig.~\ref{fig:inference_time_vary_lambda}--\ref{fig:inference_time_vary_lmax} show the corresponding results as we vary the demands in terms of the request rate or the sequence length.   
As additional benchmarks, we simulate {three} variations of PETALS' algorithm in addition to the original version in \cite{Borzunov23NeurIPS}: (1) applying PETALS' block placement algorithm to servers in an optimized order computed by line~\ref{Online CG:3} of our Alg.~\ref{Alg:Online BPRR} (`Optimized Order'), (2) applying PETALS' block placement algorithm but placing the same number of blocks per server as our algorithm (`Optimized Number'), 
and (3) optimizing the request routing according to the MILP \eqref{eq:online RR} under the block placement given by PETALS (`Optimized RR'). Besides additional benchmarks, these variations also serve the purpose of an ablation study as they each contain one aspect of our proposed solution.

\emph{First of all}, all the simulations confirm that the proposed algorithm can significantly accelerate the inference in comparison to the original algorithm in PETALS. 

\emph{Moreover}, these results suggest that: (i) optimizing the allocation of GPU memory between model blocks and attention caches as in `Optimized Number' can improve the inference time in most cases, particularly under relatively high demands; (ii) optimizing the order of block placement across servers as in `Optimized Order' may help in some cases, but does not always improve the inference time; (iii) similarly, optimizing request routing as in `Optimized RR' may help in some cases, but can worsen the performance for long sequences (Fig.~\ref{fig:inference_time_vary_lmax}). In contrast, our proposed solution that combines all these ideas is able to improve the performance in all the tested cases. This observation highlights the importance of \emph{systematically formulating and solving  the resource allocation problem in distributed LLM inference systems}. Meanwhile, the fact that `Optimized Order' and `Optimized RR' can sometimes underperform the original algorithm in PETALS suggests the suboptimality of the greedy block placement and myopic request scheduling strategy, which leaves potential room for improvement in future work. 

\emph{Furthermore}, comparing across the simulated cases indicates a trend that \emph{the performance improvement achieved by our proposed solution is larger in more resource-constrained scenarios} (e.g., fewer servers/high-performance servers or higher request rate), in which case good resource allocation is more important. We note that in practice, the demands usually grow with the system size. Thus, we further simulate a case of proportionally increasing \#servers and request rate in Fig.~\ref{fig:inference_time_vary_Clamda}, which shows a clear trend of widening performance gap between our solution and PETALS, indicating the potential for our solution to achieve even  greater {performance improvements in larger deployments}. 
We also conduct a sensitivity analysis with respect to the load parameter $|\mathcal{R}|$ as shown in Fig.~\ref{fig:inference_time_vary_actual_lambda}, which shows that while the configuration of this parameter affects the performance of our solution, its impact is mild and our solution remains superior to the benchmarks over a wide range of actual loads.


Meanwhile, our algorithm has slightly higher running times than the original algorithm in PETALS in some cases (see Fig.~\ref{fig:running_time_vary_C}--\ref{fig:running_time_vary_Clamda} in \ref{appendix:Additional Simulation Results}), but the difference is negligible compared to the actual inference time. 





\section{Conclusion}\label{sec:Conclusion}

In this work, we systematically studied the problem of performance optimization for geographically-distributed LLM inference, using PETALS as a concrete example. Using experimentally validated performance models, we formulated the optimal offline block placement and request routing problem as a MILP, proved its NP-hardness, and developed a polynomial-complexity algorithm with guaranteed performance. We then adapted our algorithm into a two-time-scale solution for the online setting with a guaranteed completion time under bounded load. Our experiments and cross-validated simulations in diverse settings not only confirmed the capability of the proposed algorithm in significantly reducing the inference times, but also provided insights on the key factors driving such improvement. 
In addition to the developed algorithm, this work also produced a light-weighted CPU-only simulator capable of predicting the performance of distributed LLM inference on profiled GPU servers, which will be open-sourced to facilitate future research on the performance of LLM for researchers with limited GPU access.

\bibliographystyle{unsrt}
\bibliography{references}

\appendix
\section{Supporting Proofs}\label{appendix:Proofs}

\begin{proof}[Proof of Lemma~\ref{lem:chain feasibility constraint}]
Denote the sequence of servers on $p$ by $i_1,\ldots,i_n$. 
For the first hop $(c,i_1)$, $a_{i_1}\leq a_c+m_c=1\leq a_{i_1}+m_{i_1}-1$ implies that block $1$ is hosted by server $i_1$, i.e., the first $a_{i_1}+m_{i_1}-1$ blocks can be found in order by traversing $p$ up to $i_1$. 
Similarly, if the first $a_{i_k}+m_{i_k}-1$ blocks ($k\in [n-1]$) can be found in order up to $i_k$, then $a_{i_{k+1}}\leq a_{i_k}+m_{i_k}\leq a_{i_{k+1}} + m_{i_{k+1}}-1$ implies that the first $a_{i_{k+1}}+m_{i_{k+1}}-1$ blocks can be found in order up to $i_{k+1}$. Thus, the first $a_{i_n}+m_{i_n}-1$ blocks can be found in order up to $i_n$. Moreover, since $L+1=a_{c'}\leq a_{i_n}+m_{i_n} \leq a_{c'}+m_{c'}-1 = L+1$, $a_{i_n}+m_{i_n}-1 = L$. Thus, all the blocks can be found in order along path $p$, i.e., $p$ is feasible. 
\end{proof}

\begin{proof}[Proof of Theorem~\ref{thm:NP-hardness of BPRR}]
We prove this theorem by a reduction from the optimization version of the \emph{partition problem}~\cite{Hayes02AS}. Given a set of positive integers $W$ with $\Delta:= (\sum_{w\in W} w)/2$, the optimization version of the partition problem aims at finding a partition of $W$ into $W_1$ and $W_2$ such that $|\sum_{w\in W_1} w - \sum_{w\in W_2}w|$ is minimized. Without loss of generality, assume all the numbers in $W$ to be even and upper-bounded by\footnote{The first assumption is because we can scale all the numbers by $2$ without changing the solution. The second assumption is because if $\exists w^*\in W$ such that $w^*> \Delta$, then the optimal solution is simply $W_1=\{w^*\}$ and $W_2=W\setminus \{w^*\}$, and thus it suffices to consider instances where $w\leq \Delta$ for all $w\in W$.} $\Delta$. We construct an instance of the BPRR problem as follows: set the total number of blocks to $L=2$; construct a single client $V_c:=\{c\}$ with $|\mathcal{R}_c|=\Delta$; construct a server $j$ for each $w_j\in W$, and set $s_m$, $s_c$, and $M_j$ such that $s_m+s_c w_j = M_j < 2 s_m$; set $t_j:= t_{cj} + \tau_j = 1$ for each $w_j\in W$; construct another server $0$ (assuming $w_0\not\in W$) with $s_m+s_c \Delta = M_0 < 2 s_m$ and $t_0:= t_{c0}+\tau_0 = 2$. This construction is possible as long as $s_m/s_c> \Delta$ (which implies $s_m/s_c > w_j$, $\forall w_j\in W$). For any partition $(W_1, W_2)$ with $\sum_{w\in W_1}w \geq \sum_{w\in W_2}w$, let $\epsilon := {1\over 2} (\sum_{w\in W_1}w - \sum_{w\in W_2}w)$, i.e., $\sum_{w\in W_1}w = \Delta+\epsilon$ and $\sum_{w\in W_2}w = \Delta-\epsilon$. The assumption of $W$ containing only even numbers ensures that $\Delta$ and $\epsilon$ are both nonnegative integers. 
By construction, every server can only store one block, server $j$ for each $w_j\in W$ can serve $w_j$ parallel sessions with a per-token inference time of $t_j=1$, and server $0$ can serve $\Delta$ parallel sessions with a per-token inference time of $t_0=2$. It is easy to see that the optimal objective value of the constructed instance of BPRR conditioned on placing block $1$ at the servers in $S_1:=\{j: w_j\in W_1\}$ and block $2$ at the servers in $S_2:=\{j: w_j\in W_2\}\cup \{0\}$ is $2\Delta + \epsilon$, achieved by routing the maximum \#requests to the faster servers before using server $0$. Thus, minimizing the subset sum difference for the partition problem is equivalent to minimizing the objective value of the corresponding BPRR problem. The proof completes by noting that the optimization version of the partition problem is NP-hard~\cite{Hayes02AS}. 
\end{proof}

\begin{proof}[Proof of Lemma~\ref{lem:minimize inference time under relaxed routing}]
\begin{figure}[!t]
   \centerline{\includegraphics[width=0.4\linewidth]{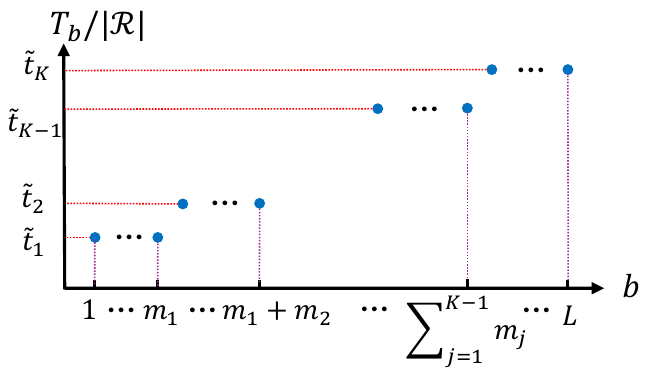}}
   \vspace{-1em}
    \caption{Illustration of $T_b$ in Alg.~\ref{Alg:CG-BPRR} after placing all the blocks. }
    \label{fig:CG_proof}
    \vspace{-.05em}
\end{figure}
As servers are considered in the increasing order of $\widetilde{t}_j$ (line~\ref{CG:3}), the value of $T_b$ for each block $b$ only changes once from $\widetilde{t}_0 |\mathcal{R}|$ to $\widetilde{t}_{j_b} |\mathcal{R}|$, when block $b$ is placed for the first time on a server $j_b\in V_s$. Therefore, with ``$\argmax$'' breaking ties in favor of smaller index, line~\ref{CG:5} will select continuous blocks for each server before all the blocks are placed, i.e., server $1$ will host blocks $1,\ldots,m_1$, server $2$ will host blocks $m_1+1,\ldots,m_1+m_2$, etc. This continues until server $K:= \min\{k: \sum_{j=1}^k m_j \geq L\}$, for which the ``$\argmax$'' in line~\ref{CG:5} will be achieved at the last possible index $a_K = L-m_K+1$. The resulting $T_b$'s form a monotone increasing piece-wise constant series as shown in Fig.~\ref{fig:CG_proof}. 
Under the conservative setting of $m_j$ in line~\ref{CG:1} and this block placement, the relaxed request routing will simply route all the $|\mathcal{R}|$ requests through the first $K$ servers, achieving an average per-token inference time of 
\begin{align}\label{eq:proof CG-BP, min per-token time}
\sum_{j=1}^K m_j \widetilde{t}_j - \widetilde{t}_K(\sum_{j=1}^K m_j - L),
\end{align}
where the second term is to adjust for the $\sum_{j=1}^K m_j - L$ blocks hosted on both server $K-1$ and server $K$, as they will only be processed at server $K-1$. 
Meanwhile, since these are the fastest servers (in terms of $\widetilde{t}_j$) that can host the entire model, the minimum per-token inference time under the relaxed request routing is lower-bounded by \eqref{eq:proof CG-BP, min per-token time}. 
This completes the proof. 
\end{proof}

\begin{proof}[Proof of Lemma~\ref{lem:optimality of shortest path routing - CG-BPRR}]
Due to the conservative calculation of the number of blocks per server (line~\ref{CG:1}), the GPU memory capacity will always be satisfied, i.e., \eqref{old RR:memory} can be ignored. This decouples \eqref{eq:RR - given BP} into $|V_c|$ independent routing problems for each client. 

For each client $c$, the route feasibility constraints \eqref{eq:bilinear - 1}\mbox{--}\eqref{eq:bilinear - 2} can be enforced by limiting its request routing to a subgraph $G^c_{\bm{a},\bm{m}}=(V^c, E^c_{\bm{a},\bm{m}})$ of $G$, where $V^c$ only contains the servers $V_s$ and the S-client/D-client of $c$, and $E^c_{\bm{a},\bm{m}}$ only contains feasible routing links, i.e., $(i,j)\in E^c_{\bm{a},\bm{m}}$ if and only if \eqref{eq:chain feasibility} is satisfied. Each link $(i,j)\in E^c_{\bm{a},\bm{m}}$ has a routing cost of $t^c_{ij}$. It is easy to see that the subproblem of \eqref{eq:RR - given BP} for client $c$ is equivalent to the problem of finding the shortest (i.e., least-cost) path from the S-client to the D-client in $G^c_{\bm{a},\bm{m}}$ (line~\ref{CG:12}). In absence of the capacity constraint \eqref{old RR:memory}, the optimal paths for all the requests from the same client are the same (line~\ref{CG:13}). This completes the proof. 
\end{proof}

\begin{proof}[Proof of Theorem~\ref{thm:CG-BPRR}]
Recall that $t_{*j} := \max_{c\in V_c} t_{cj}$. Since by Lemma~\ref{lem:optimality of shortest path routing - CG-BPRR} the shortest-path request routing is optimal under the block placement by CG-BPRR, the average per-token inference time $T^g$ achieved by CG-BPRR is upper-bounded by the average per-token inference time under any request routing that is feasible under the block placement given by CG-BPRR. Specifically, consider a solution that routes all the requests through the chain of servers $1,\ldots,K$. By the proof of Lemma~\ref{lem:minimize inference time under relaxed routing}, server $1$ will host the first $m_1$ blocks, server $2$ will host the next $m_2$ blocks, etc., and server $K$ will host the last $m_K$ blocks. Thus, servers $1,\ldots,K$ collectively host all the blocks. Moreover, due to the conservative computation of $m_j$ in line~\ref{CG:1} of Alg.~\ref{Alg:CG-BPRR}, each server has enough capacity to serve all the requests concurrently. Thus, the above routing is feasible. 
Under this solution, server $j$ ($j=1,\ldots,K-1$) processes $m_j$ blocks, yielding a per-token inference time of 
\begin{align}
t_{cj}+\tau_j m_j\leq t_{*j}+\tau_j m_j = m_j \widetilde{t}_j,~~~\forall c\in V_c.
\end{align}
Meanwhile, server $K$ only processes the last $L-\sum_{j=1}^{K-1}m_j$ blocks, yielding a per-token inference time of
\begin{align}
t_{cK} + \tau_K (L- \hspace{-.25em}\sum_{j=1}^{K-1}m_j) \leq t_{*K} + \tau_K (L- \hspace{-.25em}\sum_{j=1}^{K-1}m_j) = t_{*K}\left({\sum_{j=1}^K m_j-L\over m_K} \right)+(L- \hspace{-.25em}\sum_{j=1}^{K-1}m_j ) \widetilde{t}_K,~\forall c\in V_c. 
\end{align}
The average per-token inference time $\overline{T}^g$ under the above routing solution is thus upper-bounded by
\begin{align}
\overline{T}^g &\leq \sum_{j=1}^{K-1} m_j \widetilde{t}_j + t_{*K}\left({\sum_{j=1}^K m_j-L\over m_K} \right) + (L- \sum_{j=1}^{K-1}m_j ) \widetilde{t}_K  \nonumber\\
&= \sum_{j=1}^K \widetilde{t}_j m_j - \left(\sum_{j=1}^K m_j -L\right)\left(\widetilde{t}_K-{t_{*K}\over m_K} \right) \nonumber\\
&=  \sum_{j=1}^K \widetilde{t}_j m_j - \tau_K\left(\sum_{j=1}^K m_j -L\right), \label{eq:CG-proof-1}
\end{align}
where \eqref{eq:CG-proof-1} is because $\widetilde{t}_K-{t_{*K}/ m_K} = \tau_K$ by the definition in \eqref{eq:amortized inference time}. 
The proof is completed by noting that $T^g \leq \overline{T}^g$. 
\end{proof}

\begin{proof}[Proof of Corollary~\ref{cor:CG-BPRR}]
The result is directly applied by Theorem~\ref{thm:CG-BPRR}. Specifically, if the number of concurrent requests is no more than $|\mathcal{R}|$, then the block placement by CG-BP ensures that there is no memory contention between requests, i.e., each request can be routed independently of the others. Since the average per-token inference time under the feasible but possibly suboptimal request routing through servers $1,\ldots,K$ is already bounded by \eqref{eq:CG-BPRR bound} as shown in the proof of Theorem~\ref{thm:CG-BPRR}, the average per-token inference time under the optimal request routing must be bounded by \eqref{eq:CG-BPRR bound}. 
\end{proof}

\begin{proof}[Proof of Corollary~\ref{cor:Online RR}]
The first claim follows from the fact that the completion time of $r^*$ (i.e., the time from its arrival to its completion) if scheduled to path $p_c(t)$ is
\begin{align}\label{eq:completion time bound}
\max_{(i,j)\in p_c(t)}t^A_{ij} + \lmax \sum_{(i,j)\in p_c(t)}t^c_{ij} \leq \sum_{(i,j)\in p_c(t)} \left(t^A_{ij}(t) + \lmax t^c_{ij}\right). 
\end{align}

The second claim follows from the fact that there is no waiting time if the number of concurrent requests is within $|\mathcal{R}|$, i.e., $t^A_{ij}(t)\equiv 0$ $\forall (i,j)$. In this case, the bound \eqref{eq:completion time bound} is tight, i.e., the cost of a path equals the completion time on this path. By definition, $p_c(t)$ minimizes the path cost, and thus it achieves the minimum completion time. 
\end{proof}

\section{Other Supporting Materials}\label{appendix:Other Materials}

\subsection{Additional Results on Performance Model Validation}\label{appendix:Additional Model Validation}

\begin{figure}[t!]
\begin{minipage}{.495\linewidth}
\centerline{
\includegraphics[width=1\linewidth]{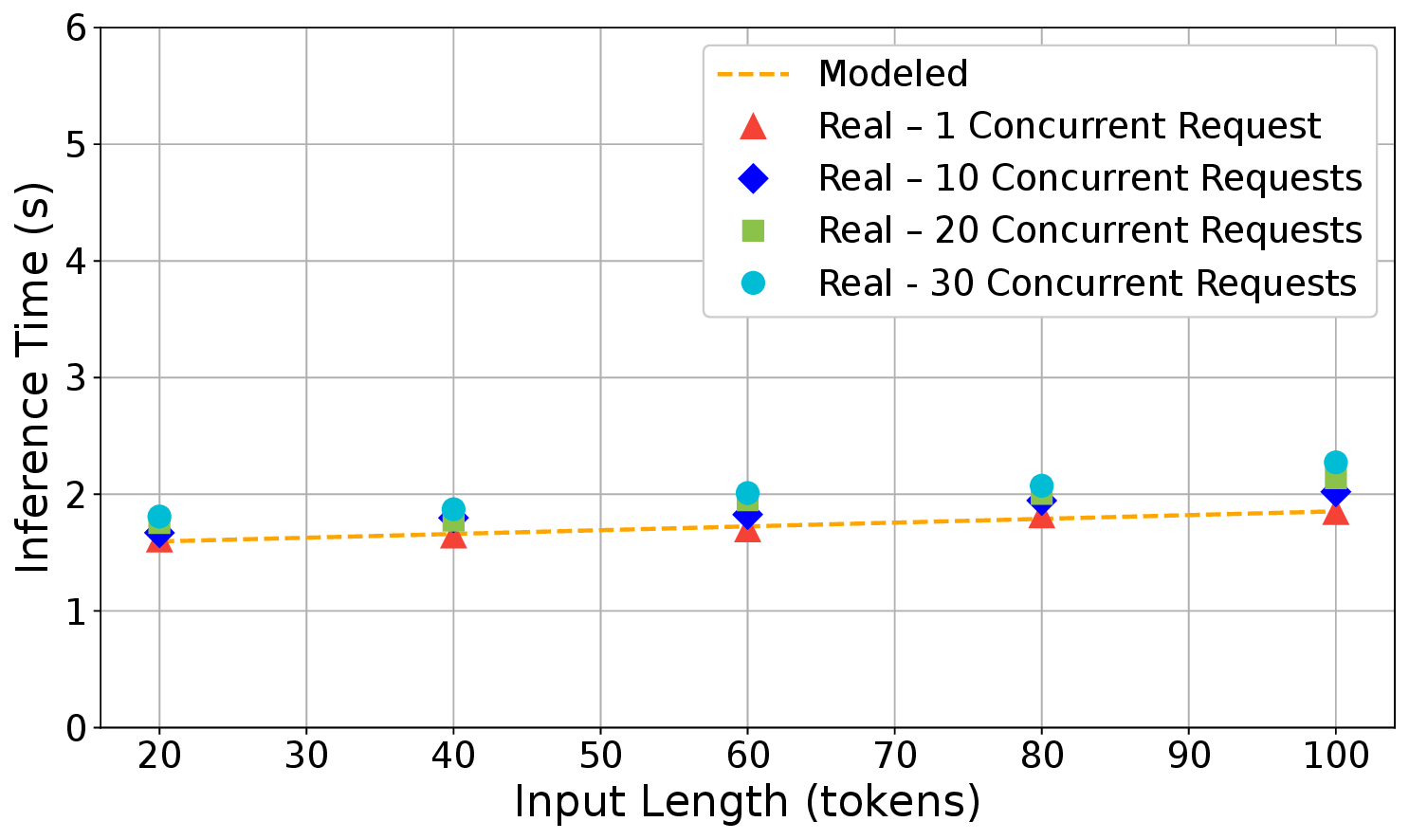}}
\centerline{\scriptsize (a) first token}
\end{minipage}
\begin{minipage}{.495\linewidth}
\centerline{
\includegraphics[width=1\linewidth]{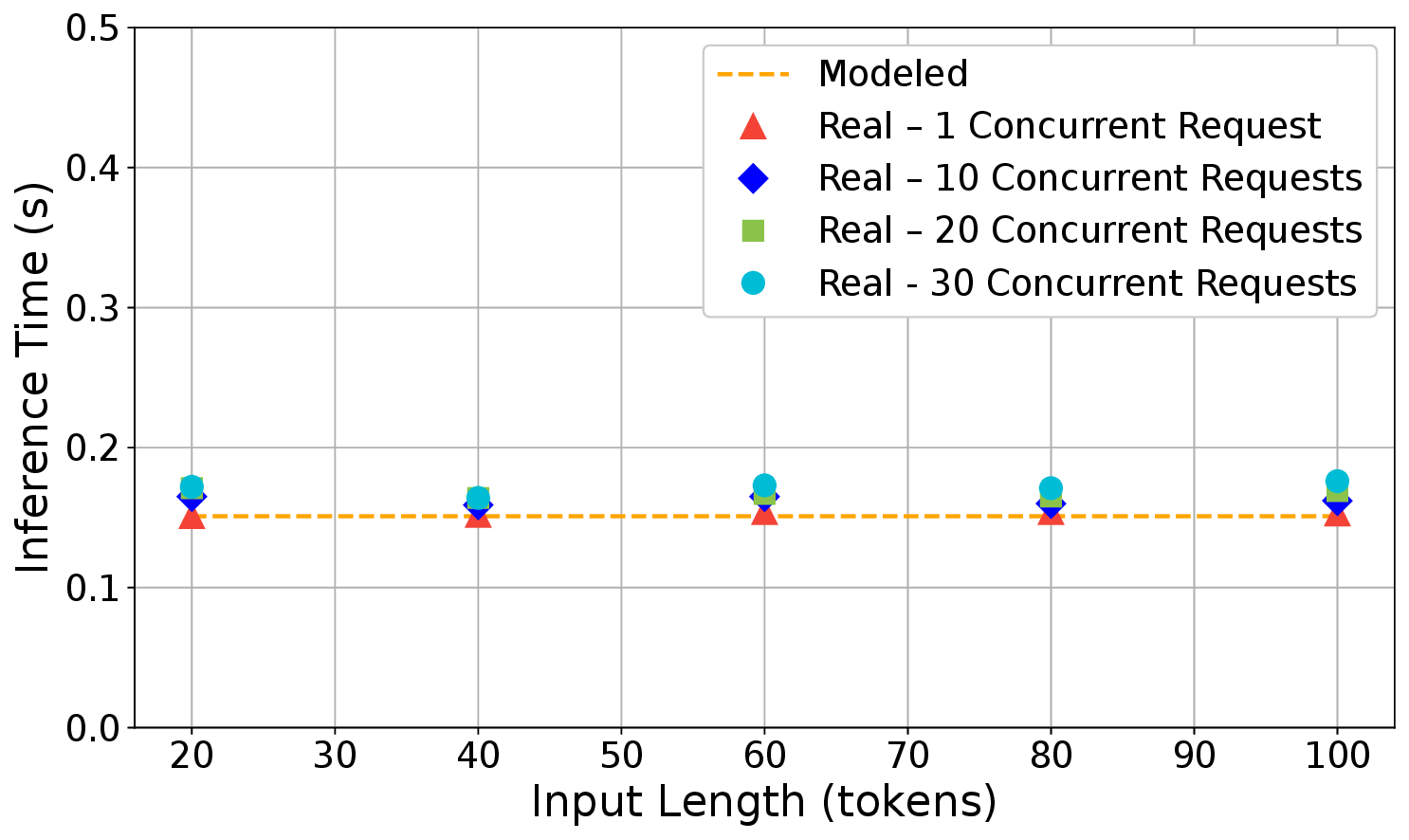}}
\centerline{\scriptsize (b) each of remaining tokens}
\end{minipage}
\vspace{-.5em}
\caption{Inference time vs. input length on A100: (a) for first token generation (b) for rest of token generation (40 blocks, $\lmax=128$). }\label{fig:time_input_length}
\vspace{-.05em}
\end{figure}

\begin{figure}[t!]
\begin{minipage}{.495\linewidth}
\centerline{
\includegraphics[width=1\linewidth]{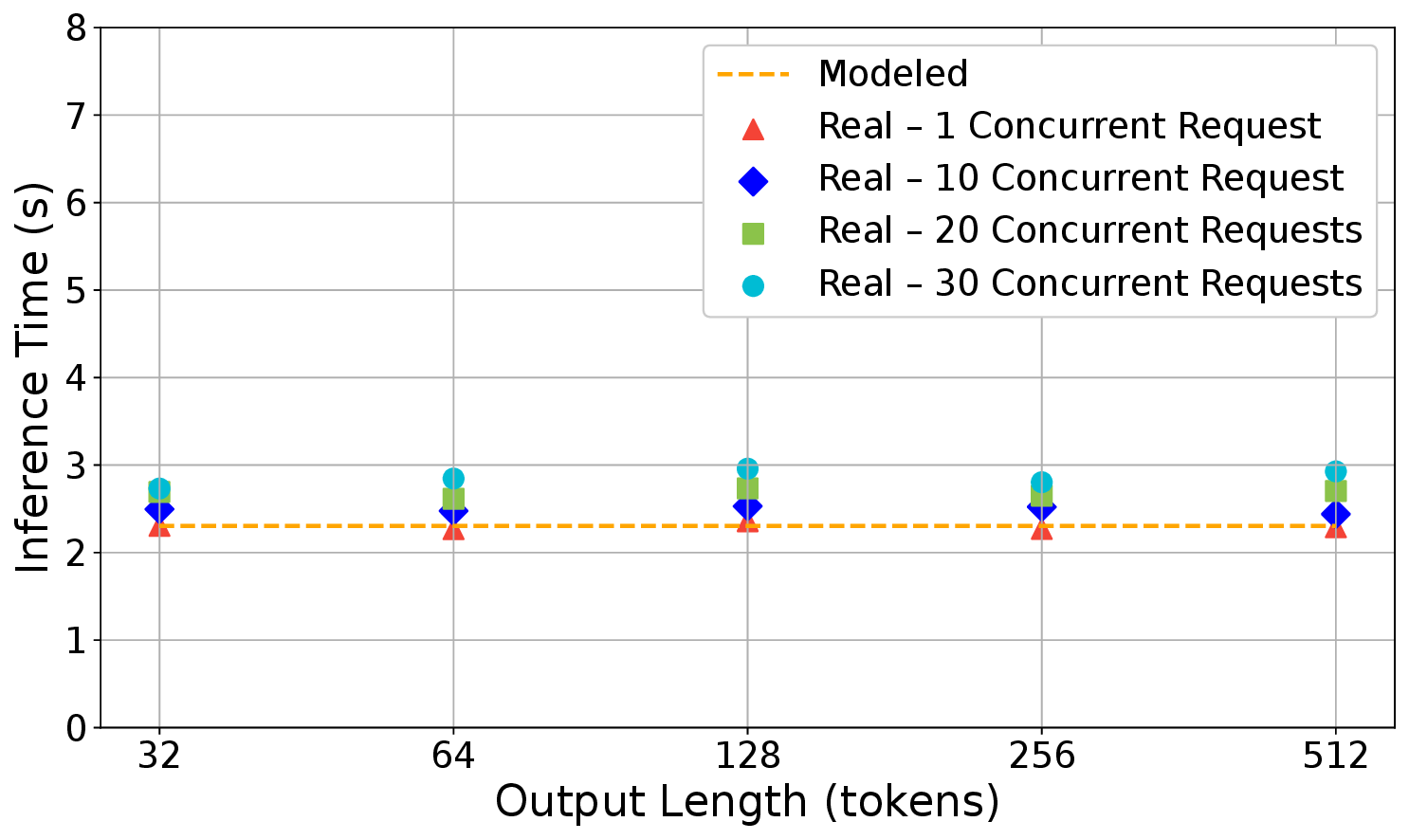}}
\centerline{\scriptsize (a) first token}
\end{minipage}
\begin{minipage}{.495\linewidth}
\centerline{
\includegraphics[width=1\linewidth]{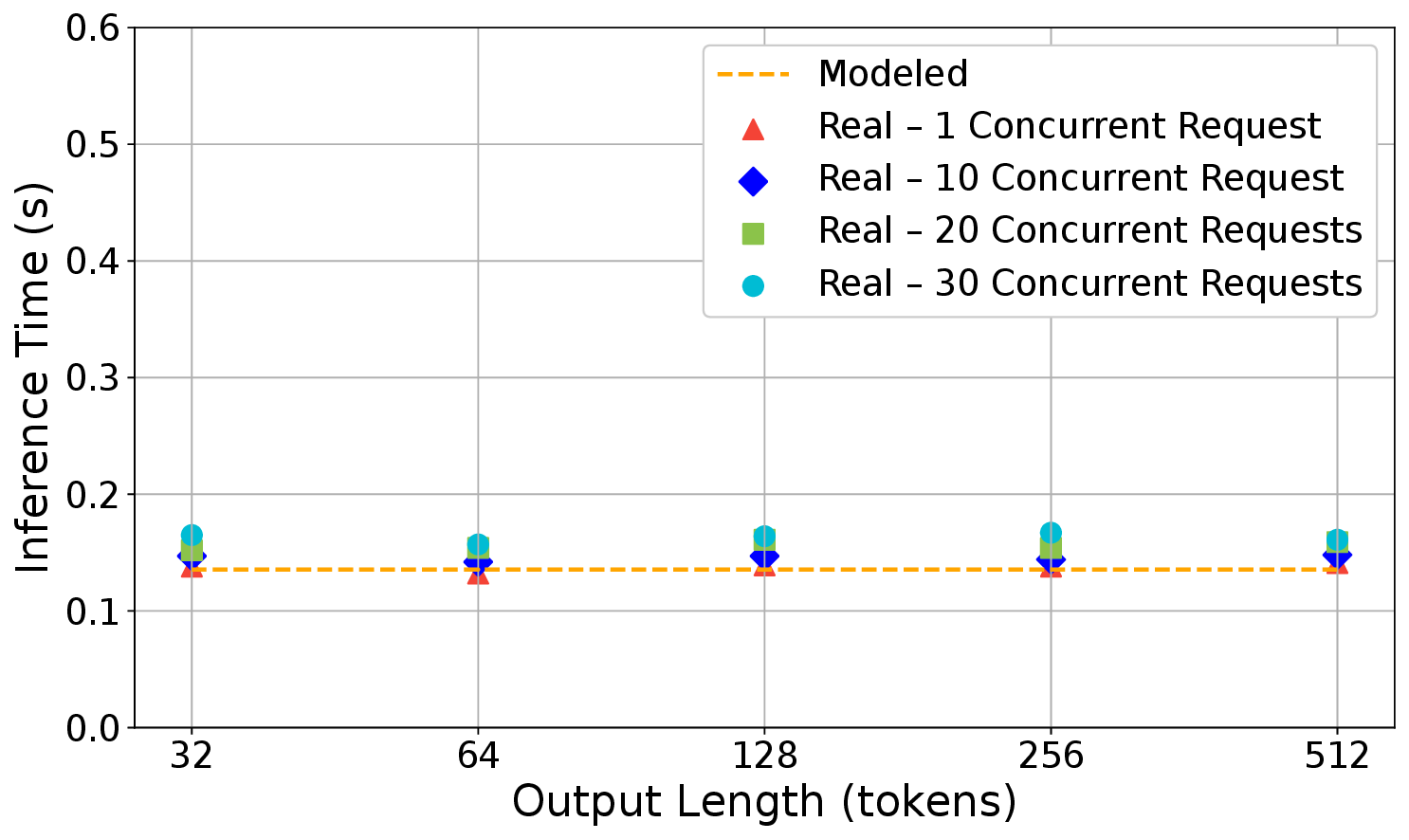}}
\centerline{\scriptsize (b) each of remaining tokens}
\end{minipage}
\vspace{-.5em}
\caption{Inference time vs. output length on A100: (a) for first token generation (b) for rest of token generation (40 blocks, $\lmax^I=20$). }\label{fig:time_output_length}
\vspace{-.05em}
\end{figure}

In addition to the number of processed blocks as in Fig.~\ref{fig:time_blocks}, we have also evaluated the inference time incurred at a given server under varying input/output length as shown in Fig.~\ref{fig:time_input_length}--\ref{fig:time_output_length}. 
These results have validated our inference time model by confirming that the inference time for the first token only depends on the input length $\lmax^I$ but not the output length $\lmax$, and the inference time for each of the subsequent tokens does not depend on either the input or the output length. 
Besides A100, we have also validated our performance models on an MIG, which shows similar results. \looseness=-1



\subsection{Extension to Heterogeneous Input/Output Lengths}
\label{appendix:Heterogeneous Lengths}

In the heterogeneous case where each request $r\in \mathcal{R}$ has an input length $l^I_r$ and an output length $l_r$, the attention cache size becomes heterogeneous $s^r_c:=2\dmodel\cdot (l^I_r+l_r) \cdot \dtypebytes$ (bytes). Our formulation \eqref{eq:BPRR - direct} can be easily extended by replacing the objective function \eqref{direct:obj} with 
\begin{align}
\sum_{c\in V_c}  \sum_{r\in \mathcal{R}_c} \sum_{p\in P_c(\bm{a},\bm{m})} f^r_p l_r \sum_{(i,j)\in p} t^r_{ij}, 
\end{align}
where $t^r_{ij}$ denotes the per-token inference time incurred by request $r$ at link $(i,j)$ averaged over all the tokens generated for this request, and the constraint \eqref{direct:memory} with 
\begin{align}
s_m m_j + \sum_{c\in V_c}\sum_{r\in \mathcal{R}_c}s^r_c \mathop{\sum_{p\in P_c(\bm{a},\bm{m}):}}_{(i,j)\in p, \exists i} \hspace{-1em} f^r_p (a_j+m_j-a_i-m_i)  \leq M_j,~~\forall j\in V_s. 
\end{align}
Similar to \eqref{eq:true avg per-token time}, $t^r_{ij}$ is defined as:
\begin{align}
\left({1\over l_r}t^I_{cj}(l^I_r)+{l_r-1\over l_r}t_{cj}\right) + \left( {1\over l_r}\tau^I_j(l^I_r)+{l_r-1\over l_r}\tau_j\right) (a_j+m_j-a_i-m_i).\label{eq:heterogeneous - original t}
\end{align}
When $l^I_r\ll l_r$ for all $r\in \mathcal{R}$, \eqref{eq:heterogeneous - original t} can be simplified to
\eqref{eq:per-token inference time}, where $c$ is the client sending request $r$. 

Since this change only affects some constant scaling factors, we can still convert the resulting optimization into a MILP as in Section~\ref{subsubsec:MILP Formulation}. 
With $s_c:= \max_{r} s^r_c$, CG-BPRR can still be used to obtain a feasible solution in the offline setting. 
We note that since in practice the requests arrive dynamically, the precise input/output length for each request is unknown ahead of time. In this sense, our proposed solution based on \eqref{eq:BPRR - direct} allocates resources according to the maximum input and output lengths to ensure feasibility while trying to optimize the worst-case performance, where the maximum input and output lengths are system parameters announced to the clients.

\subsection{Linearization of Bilinear Terms}\label{appendix:Linearization}

To linearize the bilinear terms $a_j f^r_{ij}$, $a_i f^r_{ij}$, $m_j f^r_{ij}$, and $m_i f^r_{ij}$, we introduce auxiliary variables $\alpha^r_{ij}, \beta^r_{ij}, \gamma^r_{ij}, \delta^r_{ij} \geq 0$ as well as the following linear constraints. Specifically, using the constraints
\begin{subequations}\label{eq:define alpha}
\begin{align}
&-(L+1) f^r_{ij} + \alpha^r_{ij} \leq 0,\\
&-a_j + \alpha^r_{ij} \leq 0,\\
&a_j + (L+1) f^r_{ij} - \alpha^r_{ij} \leq L+1,
\end{align}
\end{subequations}
we can ensure that $\alpha^r_{ij} = a_j f^r_{ij}$ for both $f^r_{ij}=0$ and $f^r_{ij}=1$ (the `$L+1$' is to cover the case of $a_j=L+1$ for $j\in V_c^D$). 
Similarly, we can ensure that $\beta^r_{ij} = a_i f^r_{ij}$ by the constraints
\begin{subequations}\label{eq:define beta}
\begin{align}
&-L f^r_{ij}+\beta^r_{ij}\leq 0, \\
&-a_i + \beta^r_{ij}\leq 0, \\
&a_i + L f^r_{ij} - \beta^r_{ij}\leq L, 
\end{align}
\end{subequations}
$\gamma^r_{ij} = m_j f^r_{ij}$ by the constraints
\begin{subequations}\label{eq:define gamma}
\begin{align}
&-L f^r_{ij}+\gamma^r_{ij}\leq 0, \\
&-m_j + \gamma^r_{ij}\leq 0, \\
&m_j + L f^r_{ij} - \gamma^r_{ij}\leq L, 
\end{align}
\end{subequations}
and $\delta^r_{ij} = m_i f^r_{ij}$ by the constraints
\begin{subequations}\label{eq:define delta}
\begin{align}
&-L f^r_{ij}+\delta^r_{ij}\leq 0, \\
&-m_i + \delta^r_{ij}\leq 0, \\
&m_i + L f^r_{ij} - \delta^r_{ij}\leq L. 
\end{align}
\end{subequations}

\begin{algorithm}[tb]
\small
\SetKwInOut{Input}{input}\SetKwInOut{Output}{output}
\Input{set of clients $V_c$, target \#requests $|\mathcal{R}|$, \#blocks $L$, size per block $s_m$, size per cache $s_c$, set of servers $V_s$, parameters for each $j\in V_s$ including GPU memory $M_j$, processing time $\tau_j$, and per-token RTTs $(t_{cj})_{c\in V_c}$}
\Output{Block placement $(\bm{a},\bm{m})$ and request routing $\bm{f}$}
\tcp{Conservative Greedy Block Placement (CG-BP):}
$m_j \leftarrow \min(\lfloor M_j/(s_m+s_c |\mathcal{R}|) \rfloor,\: L)$, $\forall j\in V_s$\; \label{Online CG:1}
$C_b\leftarrow 0,\: T_b\leftarrow \widetilde{t}_0|\mathcal{R}|$, $\forall b\in [L]$\; \label{Online CG:2}
\For{each server $j\in V_s$ in increasing order of $\widetilde{t}_j$ \nl\label{Online CG:3}}
{
\If{$\exists b\in [L]$ with $C_b<|\mathcal{R}|$ \label{Online CG:4}}
{$a_j \leftarrow \argmax_{a\in [L-m_j+1]:\:C_b<|\mathcal{R}|,\: \exists b\in \{a,\ldots,a+m_j-1\}} \sum_{b'=a}^{a+m_j-1} T_{b'}$\; \label{Online CG:5}}
\Else
{$a_j\leftarrow \argmin_{a\in [L-m_j+1]} (C_a,\ldots,C_{a+m_j-1})$\; \label{Online CG:7}}
$T_b\leftarrow T_b - (\widetilde{t}_0-\widetilde{t}_j) \min\left(\max(|\mathcal{R}|-C_b, 0), \overline{f}_j\right)$, $\forall b\in \{a_j,\ldots,a_j+m_j-1\}$\; \label{Online CG:8}
$C_b\leftarrow C_b+\overline{f}_j$, $\forall b\in \{a_j,\ldots,a_j+m_j-1\}$\; \label{Online CG:9}
}
\tcp{Waiting-penalized Shortest-path Request Routing (WS-RR):}
\For{each new request $r$ arriving from client $c$ at time $t$ \label{Online CG:10}}
{
$G^c_{\bm{a},\bm{m}}(t)\leftarrow$ the feasible routing topology for client $c$ under block placement $(\bm{a}, \bm{m})$, with a node/link set $(V^c, E^c_{\bm{a},\bm{m}})$ and a waiting-penalized cost of $t^W_{ij}(t) + \lmax t^c_{ij}$ for each $(i,j)\in E^c_{\bm{a},\bm{m}}$\; \label{Online CG:11}
$p_c(t)\leftarrow$ shortest path from the S-client to the D-client in $G^c_{\bm{a},\bm{m}}(t)$\; \label{Online CG:12}
$f^r_{ij}\leftarrow \mathbb{1}((i,j)\in p_c(t))$, $\forall (i,j)\in E$\; \label{Online CG:13}
}
\caption{Two-time-scale BPRR for Online Setting}
\vspace{-.0em}
\label{Alg:Online BPRR}
\end{algorithm}
\normalsize

\subsection{Approximation Ratio of CG-BPRR}\label{appendix:CG-BPRR Analysis}

In addition to the upper bound in Theorem~\ref{thm:CG-BPRR}, we can also lower-bound the inference time as follows. Let $\overline{m}_j:=\min(\lfloor M_j/(s_m+s_c)\rfloor,L)$ denote the maximum number of blocks that can be placed at server $j$ (while still able to serve at least one request), and $\widetilde{t}_{j,c}:= \tau_j + t_{cj}/\overline{m}_j$ denote the \emph{minimum amortized inference time} that server $j$ can provide for generating a token for client $c$. Let $j^{(c)}_k$ denote a server index such that  $\widetilde{t}_{j^{(c)}_1,c}\leq\cdots\leq \widetilde{t}_{j^{(c)}_{|V_s|},c}$. 


\begin{lemma}\label{lem:BPRR lower bound}
The minimum per-token inference time for client $c$ is lower-bounded by
\begin{align}\label{eq:T^o_c}
T^o_c := \sum_{k=1}^{K_c-1}\widetilde{t}_{j^{(c)}_k,c}\overline{m}_{j^{(c)}_k} + \widetilde{t}_{j^{(c)}_{K_c},c}\left(L-\sum_{k=1}^{K_c-1}\overline{m}_{j^{(c)}_k} \right),
\end{align}
where $K_c:= \min\{K:\: \sum_{k=1}^K \overline{m}_{j^{(c)}_k} \geq L\}$. Thus, the minimum average per-token inference time $T^o$ for the requests $\{\mathcal{R}_c\}_{c\in V_c}$ is lower-bounded by
\( 
T^o \geq {1\over |\mathcal{R}|}\sum_{c\in V_c}|\mathcal{R}_c| T^o_c.
\)
\end{lemma}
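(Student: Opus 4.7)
\textbf{Proof proposal for Lemma~\ref{lem:BPRR lower bound}.} The plan is to lower-bound the per-token inference time of each individual request by an LP relaxation, whose optimum has a closed-form greedy solution yielding $T^o_c$, and then average over requests to obtain the global bound. Fix any feasible block placement $(\bm{a},\bm{m})$ and any feasible request routing $\bm{f}$. Consider an arbitrary request $r\in\mathcal{R}_c$ routed through a feasible chain $p_r$, and let $k^r_j$ denote the number of blocks that server $j\in p_r$ actually processes for $r$ (so $k^r_j=0$ if $j\notin p_r$). By Lemma~\ref{lem:chain feasibility constraint} and the ``first server wins'' rule, these counts partition the model, i.e.\ $\sum_{j\in V_s} k^r_j = L$. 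By the per-token inference time model \eqref{eq:per-token inference time}, the per-token inference time for $r$ (ignoring the first token) equals $\sum_{j:k^r_j\geq 1}(t_{cj}+\tau_j k^r_j)$.

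The first key step is to amortize each server's contribution on a per-block basis. For any server $j$ with $k^r_j\geq 1$, the placement must satisfy $m_j\geq k^r_j\geq 1$, and since at least one request (namely $r$) is routed through $j$, the memory constraint \eqref{eq:GPU memory} forces $s_m m_j + s_c\cdot k^r_j \leq M_j$ with $k^r_j\geq 1$, so $m_j \leq \lfloor M_j/(s_m+s_c)\rfloor$ and hence $k^r_j \leq m_j \leq \overline{m}_j$. Consequently
\[
t_{cj}+\tau_j k^r_j \;=\; k^r_j\Bigl(\tau_j + \tfrac{t_{cj}}{k^r_j}\Bigr) \;\geq\; k^r_j\Bigl(\tau_j + \tfrac{t_{cj}}{\overline{m}_j}\Bigr) \;=\; k^r_j\,\widetilde{t}_{j,c}.
\]
Summing over $j$ and using $\sum_j k^r_j = L$ with the constraint $0\leq k^r_j\leq \overline{m}_j$, the per-token inference time of $r$ is lower-bounded by the optimal value of the LP
\begin{equation*}
\min_{\{x_j\}_{j\in V_s}} \sum_{j\in V_s} x_j\,\widetilde{t}_{j,c} \quad\mbox{s.t.}\quad \sum_{j\in V_s} x_j = L,\;\; 0\leq x_j \leq \overline{m}_j.
\end{equation*}

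The second step is to evaluate this LP in closed form. Because the objective is a nonnegative-weight linear function of $\bm{x}$ with a single equality and box constraints, it is a (continuous) bounded knapsack whose optimum is achieved by the greedy rule: list the servers in increasing order of $\widetilde{t}_{j,c}$, which is exactly the order $j^{(c)}_1,\ldots,j^{(c)}_{|V_s|}$, and saturate $x_{j^{(c)}_k}=\overline{m}_{j^{(c)}_k}$ for $k=1,\ldots,K_c-1$, set $x_{j^{(c)}_{K_c}}=L-\sum_{k=1}^{K_c-1}\overline{m}_{j^{(c)}_k}$, and $x_{j^{(c)}_k}=0$ for $k>K_c$. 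The optimal value is precisely $T^o_c$ as defined in \eqref{eq:T^o_c}. Hence every request $r\in\mathcal{R}_c$ incurs a per-token inference time of at least $T^o_c$ under any feasible $(\bm{a},\bm{m},\bm{f})$.

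The third step is to aggregate across clients. Summing the per-request lower bound over all $r\in\mathcal{R}=\bigcup_{c\in V_c}\mathcal{R}_c$ yields a total per-token inference time of at least $\sum_{c\in V_c}|\mathcal{R}_c|T^o_c$, and dividing by $|\mathcal{R}|$ gives $T^o\geq\frac{1}{|\mathcal{R}|}\sum_{c\in V_c}|\mathcal{R}_c|T^o_c$. The step I expect to require the most care is the amortization inequality: one must argue that the feasibility of $(\bm{a},\bm{m},\bm{f})$ really does force $k^r_j\leq\overline{m}_j$ for every \emph{used} server, which in turn relies on observing that hosting at least one request already consumes $s_c$ bytes of cache (so the pure placement cap $\lfloor M_j/s_m\rfloor$ cannot be exploited). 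Once that point is nailed down, the LP reduction and greedy evaluation are standard.
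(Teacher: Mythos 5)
Your proposal follows essentially the same route as the paper's proof: the paper's ``relaxation to block-by-block routing with per-block inference time $\widetilde{t}_{j,c}$'' is exactly the continuous bounded-knapsack LP you write down, and its greedy evaluation over the order $j^{(c)}_1,\ldots,j^{(c)}_{|V_s|}$ is the closed form you compute. Your version is more explicit than the paper's (which simply asserts that a request cannot be served more than $\overline{m}_j$ blocks at server $j$), and the aggregation step is identical.

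One intermediate claim in your amortization step is wrong as stated, though the conclusion you need survives. From $s_m m_j + s_c k^r_j \leq M_j$ and $k^r_j \geq 1$ you infer $m_j \leq \lfloor M_j/(s_m+s_c)\rfloor$; this does not follow. What follows is $m_j \leq (M_j-s_c)/s_m$, which exceeds $M_j/(s_m+s_c)$ whenever $M_j > s_m+s_c$ (the typical case): a server may legitimately host many more than $\overline{m}_j$ blocks, provided it does not process all of them for any single request. The bound you actually need is on $k^r_j$, not on $m_j$, and it is obtained directly by combining $k^r_j \leq m_j$ with the memory constraint: $(s_m+s_c)\,k^r_j \leq s_m m_j + s_c k^r_j \leq M_j$, hence $k^r_j \leq \lfloor M_j/(s_m+s_c)\rfloor$ and, together with $k^r_j\leq L$, $k^r_j \leq \overline{m}_j$. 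With that one-line repair the amortization inequality $t_{cj}+\tau_j k^r_j \geq k^r_j\,\widetilde{t}_{j,c}$, the LP reduction, and the greedy evaluation all go through and match the paper's argument.
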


\begin{proof}[Proof of Lemma~\ref{lem:BPRR lower bound}]
It suffices to prove that the per-token inference time for client $c$ is lower-bounded by \eqref{eq:T^o_c}. First, we note that the minimum per-token inference time is lower if we relax the request routing to block-by-block routing with a per-block inference time $\widetilde{t}_{j,c}$ at server $j$, because in reality a request cannot be served more than $\overline{m}_j$ blocks at server $j$ and has to incur the entire client-server RTT $t_{cj}$ even if being processed by only a subset of the blocks at server $j$. Then, under such relaxed request routing, the minimum inference time for a token is achieved by routing to the fastest servers that collectively hold all the blocks, which are servers $j^{(c)}_1,\ldots,j^{(c)}_{K_c}$. That is, processing the first $\sum_{k=1}^{K_c-1}\overline{m}_{j^{(c)}_k}$ blocks at servers $j^{(c)}_1,\ldots,j^{(c)}_{K_c-1}$, and the remaining $L-\sum_{k=1}^{K_c-1}\overline{m}_{j^{(c)}_k}$ blocks at server $j^{(c)}_{K_c}$. This corresponds to the per-token inference time given in \eqref{eq:T^o_c}, which then lower-bounds the actual per-token inference time achievable for client $c$ under any feasible BPRR solution. 
\end{proof}

Combining Theorem~\ref{thm:CG-BPRR} and Lemma~\ref{lem:BPRR lower bound} yields an upper bound on $T^g/T^o$, which is the approximation ratio for CG-BPRR.

\subsection{Two-time-scale Algorithm for Online BPRR}\label{appendix:Online BPRR}

For completeness, we summarize the proposed BPRR algorithm for the online setting in Alg.~\ref{Alg:Online BPRR}, which is adapted from the CG-BPRR algorithm for the offline setting (Alg.~\ref{Alg:CG-BPRR}). While Alg.~\ref{Alg:Online BPRR} only runs CG-BP once at the beginning, it can be easily extended to adapt the block placement by invoking CG-BP again when the observed number of concurrent requests deviates significantly from the previously targeted value.

\section{Additional Evaluation Results}\label{appendix:Additional Evaluation}

\subsection{Additional Experiment Results}\label{appendix:Additional Experiment Results}

\begin{table}[]
\centering
\renewcommand{\arraystretch}{1} 
\setlength{\tabcolsep}{4pt} 
\scriptsize 

\vspace{3pt}

\resizebox{\textwidth}{!}{
\begin{tabular}{
>{\centering\arraybackslash}m{2cm} 
>{\centering\arraybackslash}m{2cm} 
>{\centering\arraybackslash}m{2cm} 
>{\centering\arraybackslash}m{2cm} 
>{\centering\arraybackslash}m{2cm} 
>{\centering\arraybackslash}m{2cm} 
}
\toprule
\multirow{2}{*}{\textbf{Client Location}} 
& \multirow{2}{*}{\textbf{Algorithm}} & \multicolumn{2}{c}{\textbf{0.1 requests/s}} & \multicolumn{2}{c}{\textbf{0.5 requests/s}} \\
\cmidrule(lr){3-4} \cmidrule(lr){5-6}
& & \textbf{$\lmax=64$} & \textbf{$\lmax=128$} & \textbf{$\lmax=64$} & \textbf{$\lmax=128$} \\
\midrule
\multirow{2}{*}{Cluster0} & PETALS & $275.51\ (252.61)$ & $455.61\ (427.72)$ & $315.53\ (252.61)$ & $508.52\ (427.72)$ \\
& Proposed                         & $60.43\ (73.51)$   & $62.98\ (60.91)$   & $66.94\ (73.51)$   & $69.62\ (60.91)$ \\
\midrule
\multirow{2}{*}{Cluster1} & PETALS & $236.34\ (252.51)$ & $463.39\ (424.94)$ & $298.27\ (252.51)$ & $489.85\ (424.06)$ \\
& Proposed                         & $55.47\ (85.50)$   & $59.67\ (72.06)$   & $57.37\ (85.50)$   & $65.96\ (72.06)$ \\
\midrule
\multirow{2}{*}{Cluster2} & PETALS & $245.70\ (251.95)$ & $470.33\ (404.42)$ & $259.69\ (251.95)$ & $480.43\ (404.42)$ \\
& Proposed                         & $50.96\ (73.51)$   & $55.72\ (60.91)$   & $55.36\ (73.51)$   & $59.86\ (60.91)$ \\
\bottomrule
\end{tabular}
}
\vspace{-0.5em}
\caption{{Average inference time for the first token (s) under the  configuration in Table~\ref{tab: network_link_property} (\textit{$\lmax^I=20$}; 100 requests; MATLAB results shown in parentheses).}}
\label{tab: TTFT_clustered}
\vspace{-.0em}
\end{table}

\begin{table}[]
\centering
\renewcommand{\arraystretch}{1} 
\setlength{\tabcolsep}{4pt} 
\scriptsize 

\vspace{3pt}

\resizebox{\textwidth}{!}{%
\begin{tabular}{
 >{\centering\arraybackslash}m{2cm} 
 >{\centering\arraybackslash}m{2cm} 
 >{\centering\arraybackslash}m{2cm} 
 >{\centering\arraybackslash}m{2cm} 
 >{\centering\arraybackslash}m{2cm} 
 >{\centering\arraybackslash}m{2cm} 
}
\toprule
\multirow{2}{*}{\textbf{Client Location}} 
& \multirow{2}{*}{\textbf{Algorithm}} & \multicolumn{2}{c}{\textbf{0.1 requests/s}} & \multicolumn{2}{c}{\textbf{0.5 requests/s}} \\
\cmidrule(lr){3-4} \cmidrule(lr){5-6}
& & \textbf{$\lmax=64$} & \textbf{$\lmax=128$} & \textbf{$\lmax=64$} & \textbf{$\lmax=128$} \\
\midrule
\multirow{2}{*}{Cluster0} & PETALS & 1.96 (1.40) & 1.21 (1.41) & 1.37 (1.40) & 1.18 (1.41) \\
                        & Proposed & 0.99 (0.45) & 0.94 (0.45) & 0.96 (0.45) & 0.81 (0.45) \\
\midrule
\multirow{2}{*}{Cluster1} & PETALS   & 1.78 (1.25) & 0.99 (1.27) & 0.91 (1.25) & 0.97 (1.27) \\
& Proposed                           & 0.93 (0.32) & 0.58 (0.26) & 0.98 (0.32) & 0.60 (0.26) \\
\midrule
\multirow{2}{*}{Cluster2} & PETALS   & 1.49 (0.93) & 1.18 (0.91) & 1.29 (0.93) & 1.51 (0.91) \\
& Proposed                           & 1.01 (0.45) & 0.88 (0.45) & 1.09 (0.45) & 0.91 (0.45) \\
\bottomrule
\end{tabular}
} 
\vspace{-0.5em}
\caption{{Average inference time for the remaining token (s) under the configuration in Table~\ref{tab: network_link_property} (\textit{$\lmax^I=20$}; 100 requests; MATLAB results shown in parentheses).}}
\label{tab: time_per_remaining_token_clustered}
\vspace{-.0em}
\end{table}

\begin{table}[]
\centering
\renewcommand{\arraystretch}{1} 
\setlength{\tabcolsep}{4pt} 
\scriptsize 

\vspace{3pt}

\resizebox{\textwidth}{!}{%
\begin{tabular}{
  >{\centering\arraybackslash}m{2cm} 
  >{\centering\arraybackslash}m{2cm} 
  >{\centering\arraybackslash}m{2cm} 
  >{\centering\arraybackslash}m{2cm} 
  >{\centering\arraybackslash}m{2cm} 
  >{\centering\arraybackslash}m{2cm} 
} 
\toprule
\multirow{2}{*}{\textbf{Topology}} 
  & \multirow{2}{*}{\textbf{Algorithm}} 
    & \multicolumn{2}{c}{\textbf{0.1 requests/s}} 
    & \multicolumn{2}{c}{\textbf{0.5 requests/s}} \\ 
\cmidrule(lr){3-4} \cmidrule(lr){5-6} 
  & & \textbf{$\lmax=64$} & \textbf{$\lmax=128$} 
    & \textbf{$\lmax=64$} & \textbf{$\lmax=128$} \\
\midrule
\multirow{2}{*}{AboveNet} 
  & PETALS   & $270.96\ (254.74)$ & $408.59\ (316.21)$ & $279.38\ (264.81)$ & $486.68\ (412.72)$ \\ 
  & Proposed & $76.48\ (73.69)$ & $89.15\ (104.19)$ & $81.65\ (81.12)$ & $91.41\ (75.78)$ \\
\midrule
\multirow{2}{*}{BellCanada} 
  & PETALS   & $360.94\ (353.12)$ & $382.01\ (354.06)$ & $378.10\ (353.46)$ & $417.17\ (353.72)$ \\ 
  & Proposed & $53.42\ (62.75)$ & $59.73\ (62.68)$ & $63.78\ (62.67)$ & $66.84\ (62.70)$ \\
\midrule
\multirow{2}{*}{GTS-CE} 
  & PETALS   & $393.93\ (353.48)$ & $478.71\ (354.05)$ & $389.35\ (353.46)$ & $505.90\ (353.79)$ \\ 
  & Proposed & $55.30\ (62.84)$ & $58.59\ (62.66)$ & $58.17\ (62.65)$ & $61.22\ (62.65)$ \\
\bottomrule
\end{tabular}%
}
\vspace{-0.5em}
\caption{{Average inference time for the first token (s) under the topologies in Table~\ref{tab:topo_info} ($\lmax^I = 20$; 100 requests; MATLAB results shown in parentheses).} }  
\label{tab:TTFT_scattered}
\vspace{-.0em}
\end{table}

\begin{table}[]
\centering
\renewcommand{\arraystretch}{1} 
\setlength{\tabcolsep}{4pt} 
\scriptsize 

\vspace{3pt}

\resizebox{\textwidth}{!}{%
\begin{tabular}{
  >{\centering\arraybackslash}m{2cm} 
  >{\centering\arraybackslash}m{2cm} 
  >{\centering\arraybackslash}m{2cm} 
  >{\centering\arraybackslash}m{2cm} 
  >{\centering\arraybackslash}m{2cm} 
  >{\centering\arraybackslash}m{2cm} 
} 
\toprule
\multirow{2}{*}{\textbf{Topology}} 
  & \multirow{2}{*}{\textbf{Algorithm}} 
    & \multicolumn{2}{c}{\textbf{0.1 requests/s}} 
    & \multicolumn{2}{c}{\textbf{0.5 requests/s}} \\ 
\cmidrule(lr){3-4} \cmidrule(lr){5-6} 
  & & \textbf{$\lmax=64$} & \textbf{$\lmax=128$} 
    & \textbf{$\lmax=64$} & \textbf{$\lmax=128$} \\
\midrule
\multirow{2}{*}{AboveNet} 
  & PETALS   & $0.77\ (0.79)$ & $0.84\ (0.92)$ & $0.91\ (0.98)$ & $0.78\ (0.88)$ \\
  & Proposed & 0.67 (0.52) & 0.75 (0.55) & 0.70 (0.57) & 0.64 (0.46) \\
\midrule
\multirow{2}{*}{BellCanada} 
  & PETALS   & 0.69 (0.53) & 0.84 (0.73) & 0.80 (0.68) & 0.91 (0.66) \\
  & Proposed & 0.50 (0.44) & 0.79 (0.44) & 0.51 (0.44) & 0.59 (0.44)\\
\midrule
\multirow{2}{*}{GTS-CE} 
  & PETALS   & 0.92 (0.61) & 0.96 (0.71) & 0.83 (0.44) & 0.95 (0.67) \\
  & Proposed & 0.52 (0.43) & 0.50 (0.42) & 0.49 (0.42) & 0.59 (0.43) \\
\bottomrule
\end{tabular}%
}
\vspace{-0.5em}
\caption{{Average inference time for the remaining tokens (s) under the topologies in Table~\ref{tab:topo_info} ($\lmax^I = 20$; 100 requests; MATLAB results shown in parentheses).}}
\label{tab:time_per_remaining_token_scattered}
\vspace{-.0em}
\end{table}

Tables~\ref{tab: TTFT_clustered}--\ref{tab: time_per_remaining_token_clustered} present the breakdown of the total average inference time in Table~\ref{tab: time_per_token_clustered} into the inference time for the first token and the inference time for each of the remaining tokens. Similarly, Tables~\ref{tab:TTFT_scattered}--\ref{tab:time_per_remaining_token_scattered} provide the breakdown of the total average inference time in Table~\ref{tab:time_per_token_scattered}.

\subsection{Additional Simulation Results}\label{appendix:Additional Simulation Results}

\begin{figure}[t!]
    \centering
    \begin{subfigure}[t]{0.31\textwidth}
        \centering
        \includegraphics[width=\linewidth]{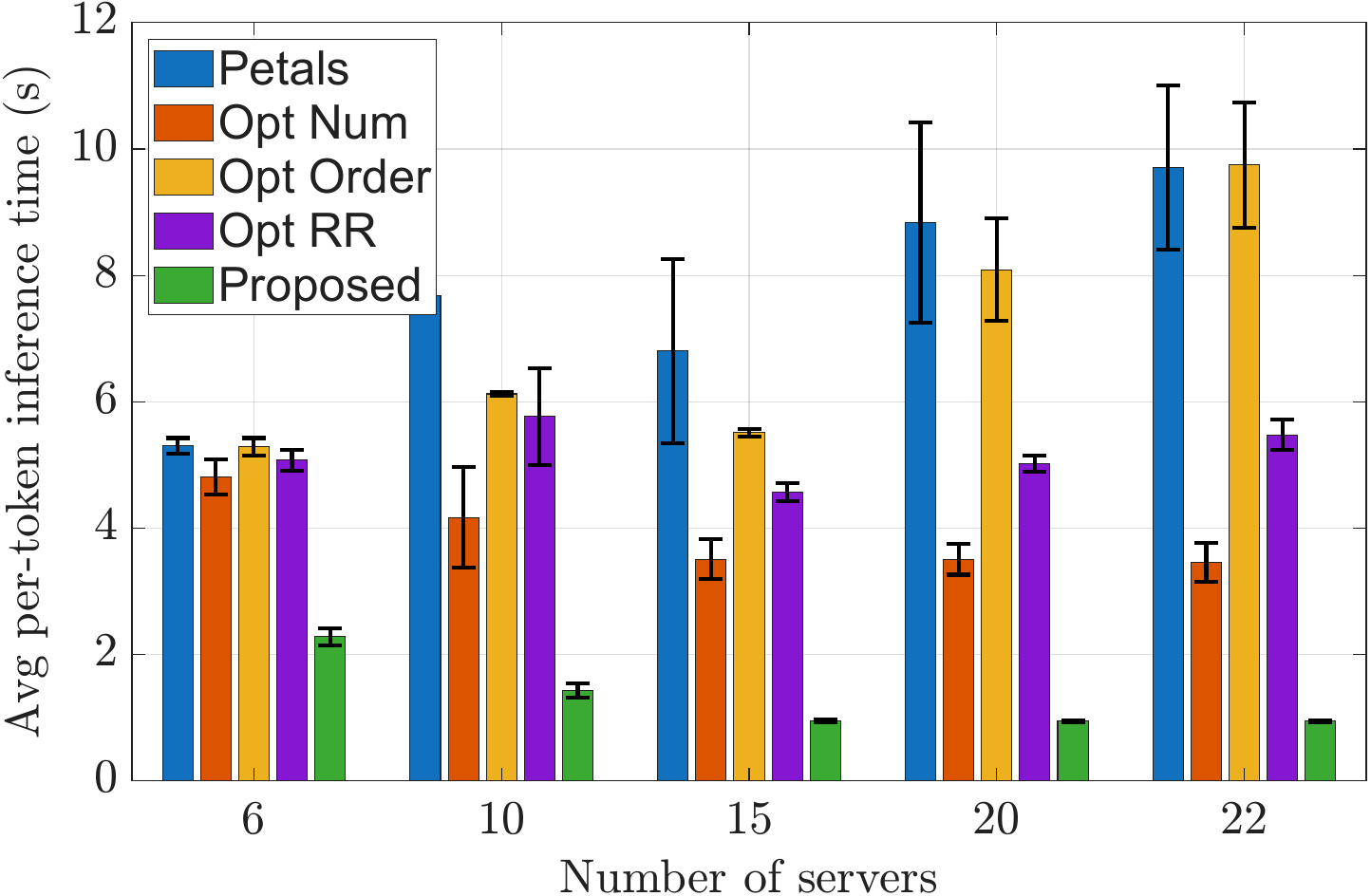}
        \caption{AboveNet}
    \end{subfigure}\hfill
    \begin{subfigure}[t]{0.31\textwidth}
        \centering
        \includegraphics[width=\linewidth]{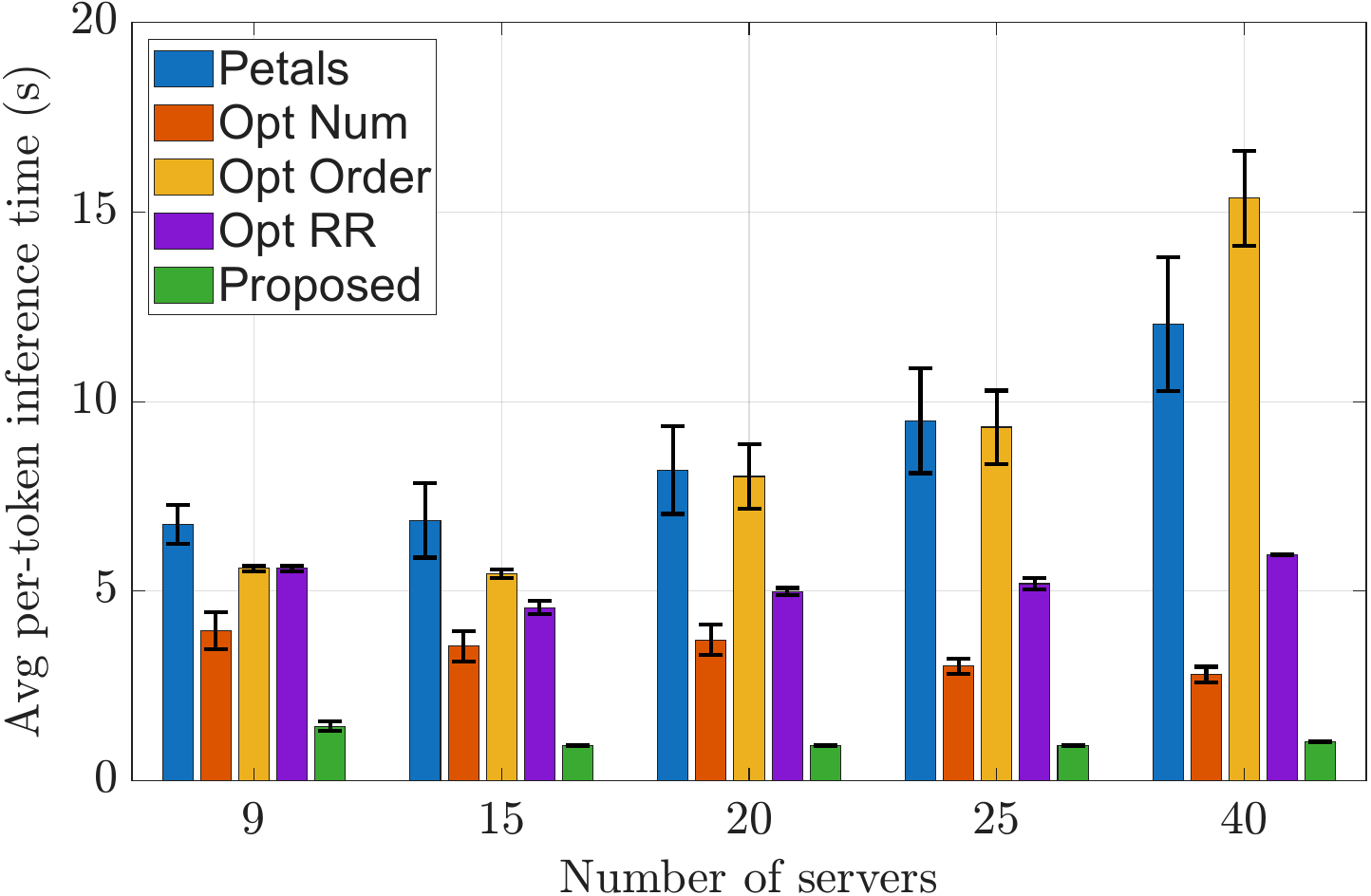}
        \caption{BellCanada}
    \end{subfigure}\hfill
    \begin{subfigure}[t]{0.31\textwidth}
        \centering
        \includegraphics[width=\linewidth]{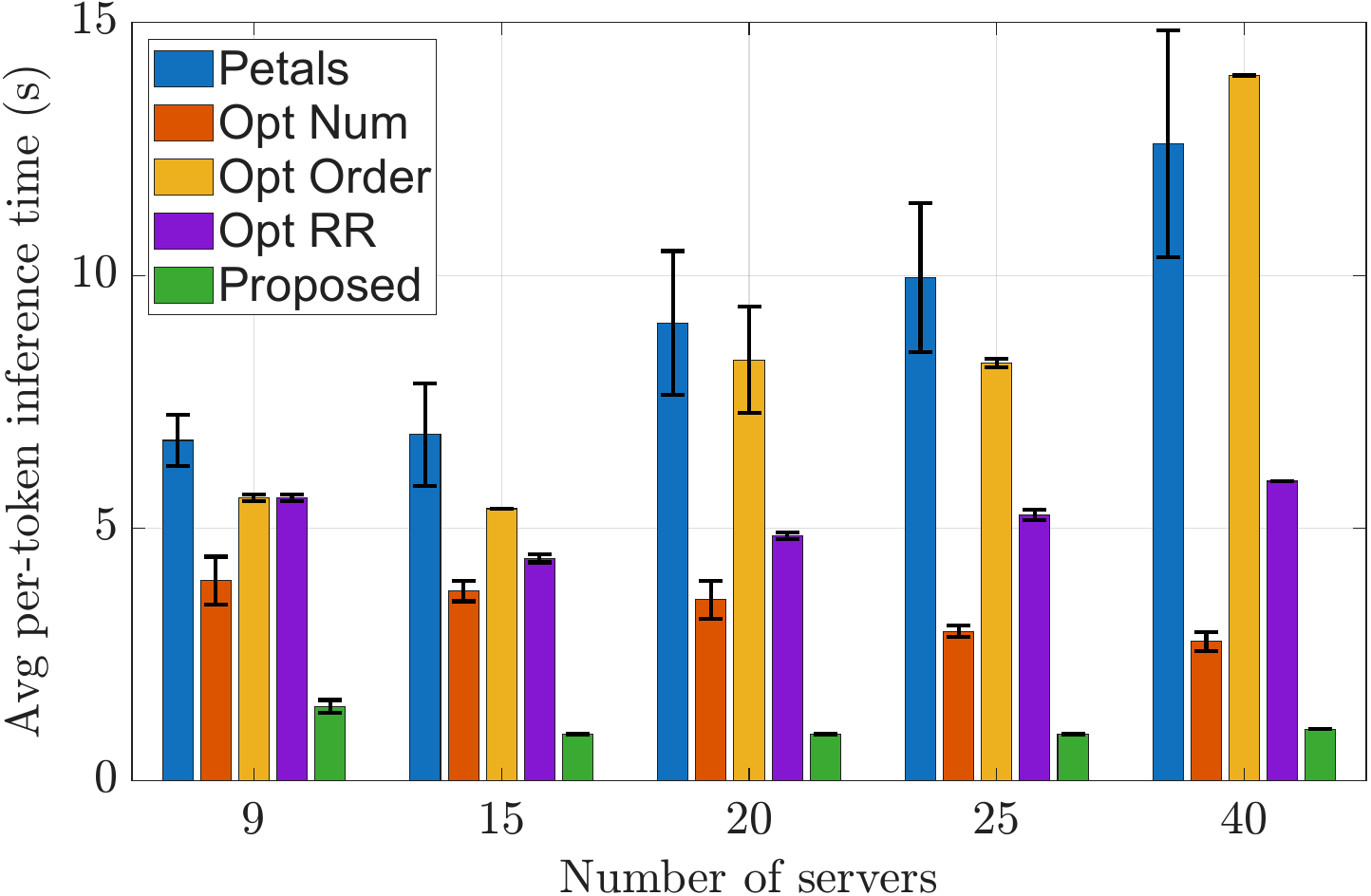}
        \caption{GTS-CE}
    \end{subfigure}

    \vspace{-0.6em}
    \caption{Inference time per token when varying the number of servers $C$ and rate $\lambda = {(0.1/9)}\cdot C$ ($N_R=100$, $\eta=0.2$, $\lmax^I=20$, $\lmax = 128$).}
    \label{fig:inference_time_vary_Clamda}
    \vspace{-0.4em}
\end{figure}

To evaluate the performance of our solution as the system scales, we proportionally increase the number of servers and the request rate as in Fig.~\ref{fig:inference_time_vary_Clamda}, and the result shows a trend of widening performance gap between our solution and PETALS.

\begin{figure}[t!]
    \centering
    \begin{subfigure}[t]{0.31\textwidth}
        \centering
        \includegraphics[width=\linewidth]{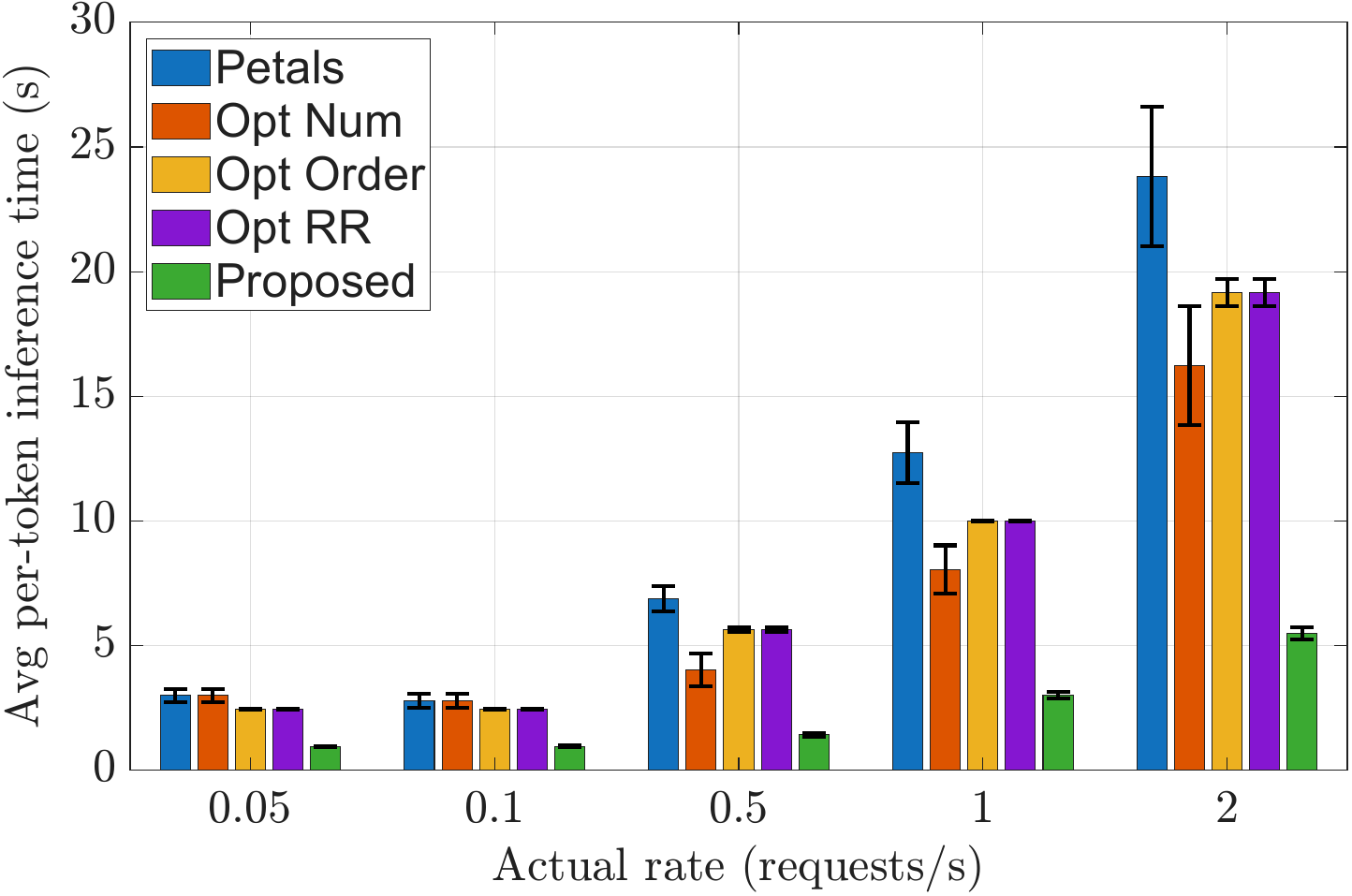}
        \caption{AboveNet}
    \end{subfigure}\hfill
    \begin{subfigure}[t]{0.31\textwidth}
        \centering
        \includegraphics[width=\linewidth]{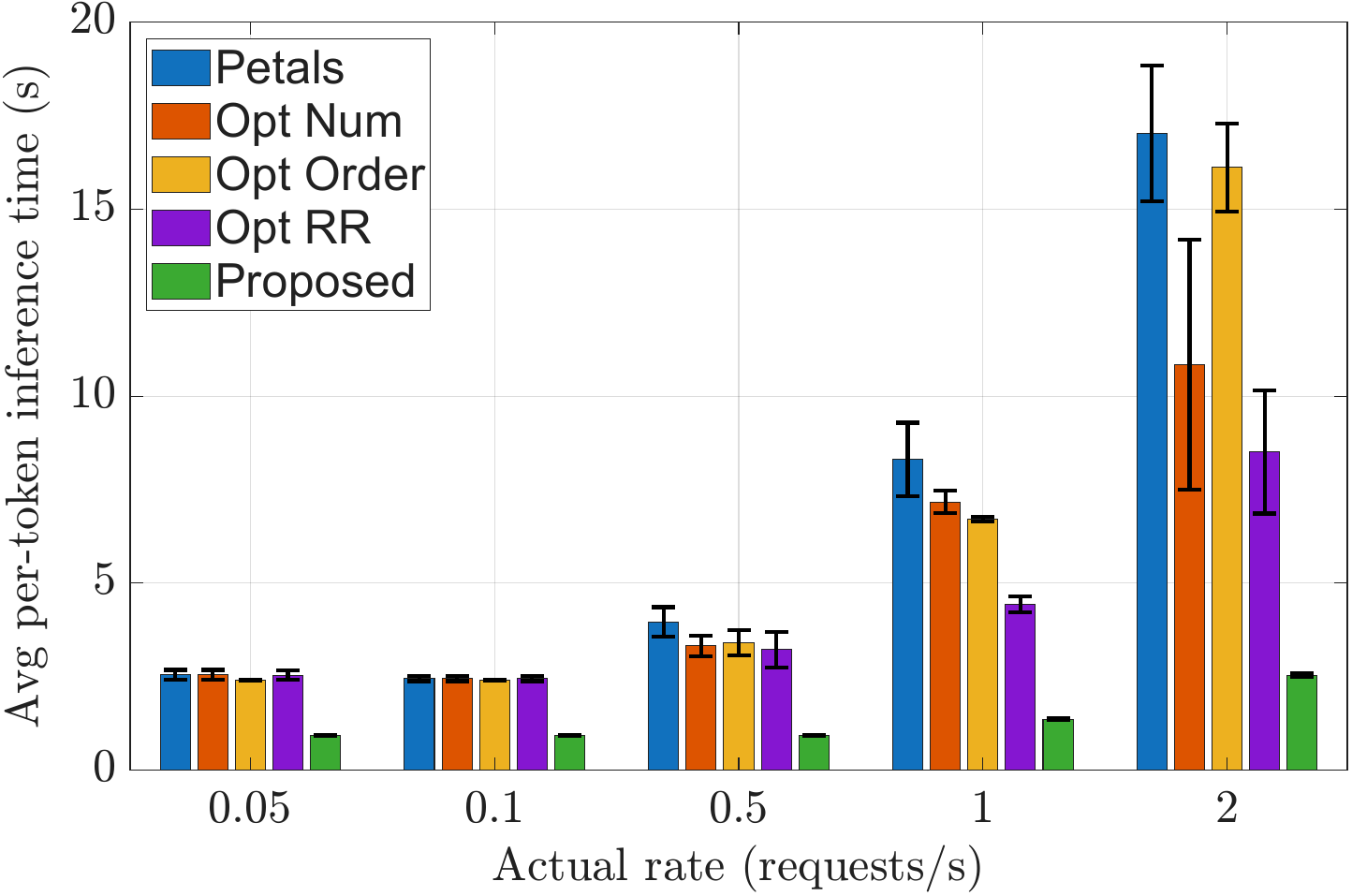}
        \caption{BellCanada}
    \end{subfigure}\hfill
    \begin{subfigure}[t]{0.31\textwidth}
        \centering
        \includegraphics[width=\linewidth]{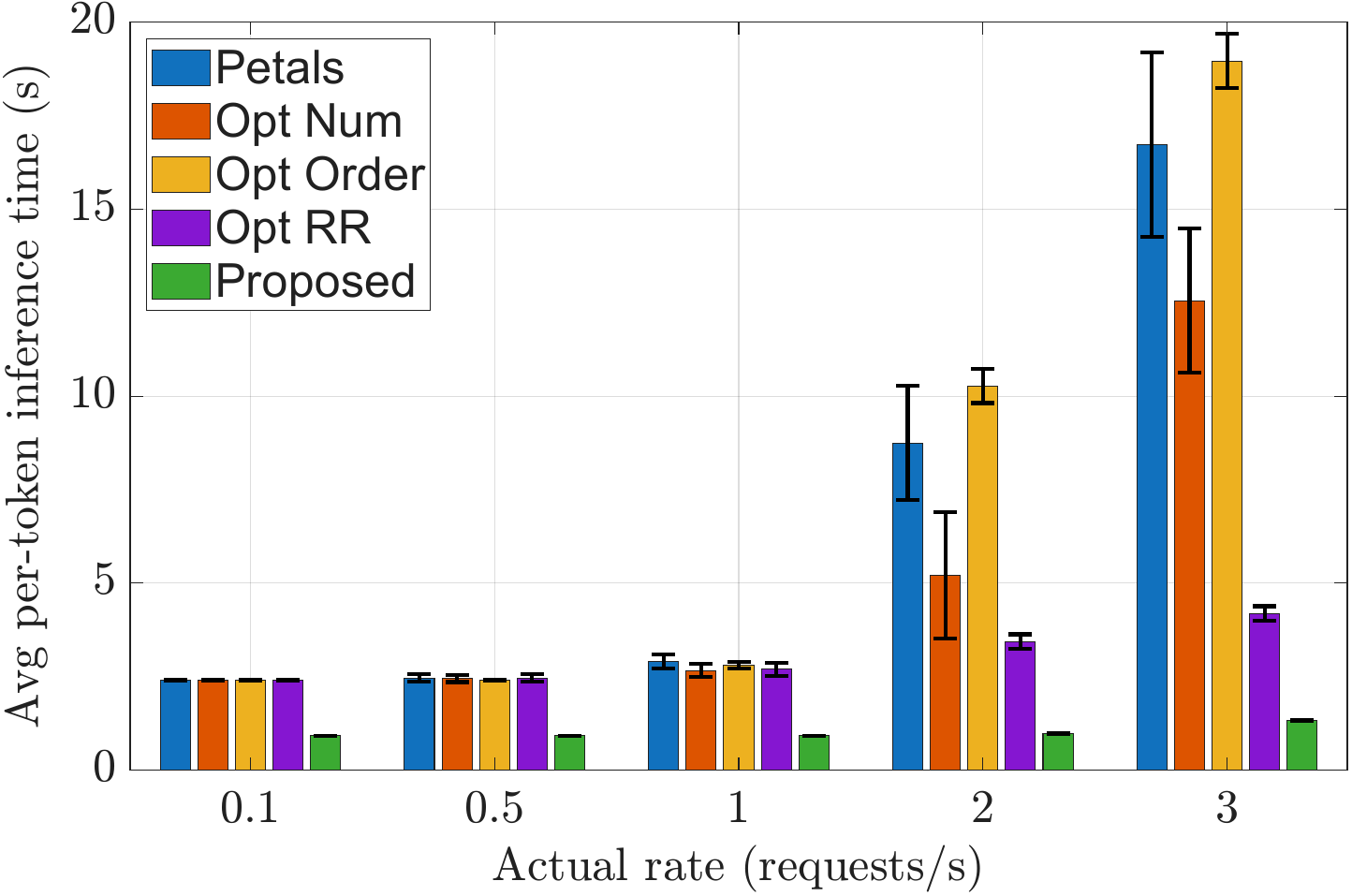}
        \caption{GTS-CE}
    \end{subfigure}

    \vspace{-0.6em}
    \caption{Inference time per token when setting $|\mathcal{R}|$ according to $\lambda_{base}$ and varying actual request rate $\lambda$ ($C = 0.4 \cdot \mbox{\#nodes}$, $\lambda_{base}=0.5$, $N_R=200\cdot \lambda$, $\eta = 0.2$, $\lmax^I=20$, $\lmax = 128$).}
    \label{fig:inference_time_vary_actual_lambda}
    \vspace{-0.4em}
\end{figure}

We further evaluate the sensitivity of our solution to the configuration of the load parameter $|\mathcal{R}|$. Under a fixed  $|\mathcal{R}|$ computed for a predicted rate $\lambda_{base}$, Fig.~\ref{fig:inference_time_vary_actual_lambda} shows the performance under different actual rates, which shows that in comparison with setting $|\mathcal{R}|$ according to the actual rates as in Fig.~\ref{fig:inference_time_vary_lambda}, using a fixed $|\mathcal{R}|$ can lead to increased inference time when the actual rate is higher than expected, due to not reserving enough memory for attention caches and thus causing some requests to incur extensive waiting. Nevertheless, the increase for our solution (`Proposed') is much smaller than that for the BPRR algorithm of PETALS (`Optimized Number'), demonstrating the robustness of our solution to load prediction.

\begin{figure}[t!]
\centering
\begin{subfigure}[t]{0.31\textwidth}
  \centering
  \includegraphics[width=\linewidth]{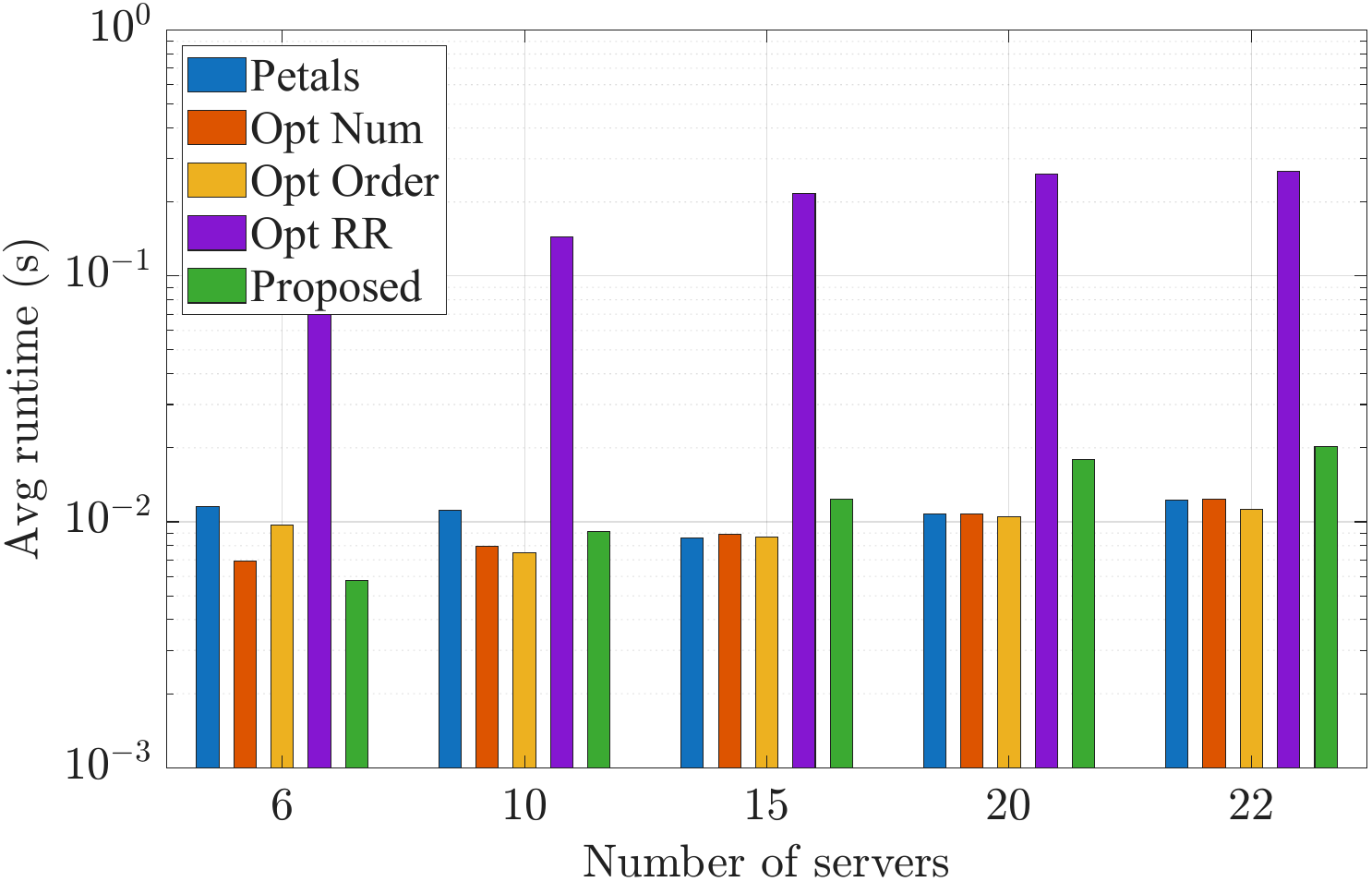}
  \caption{AboveNet}
\end{subfigure}\hfill
\begin{subfigure}[t]{0.31\textwidth}
  \centering
  \includegraphics[width=\linewidth]{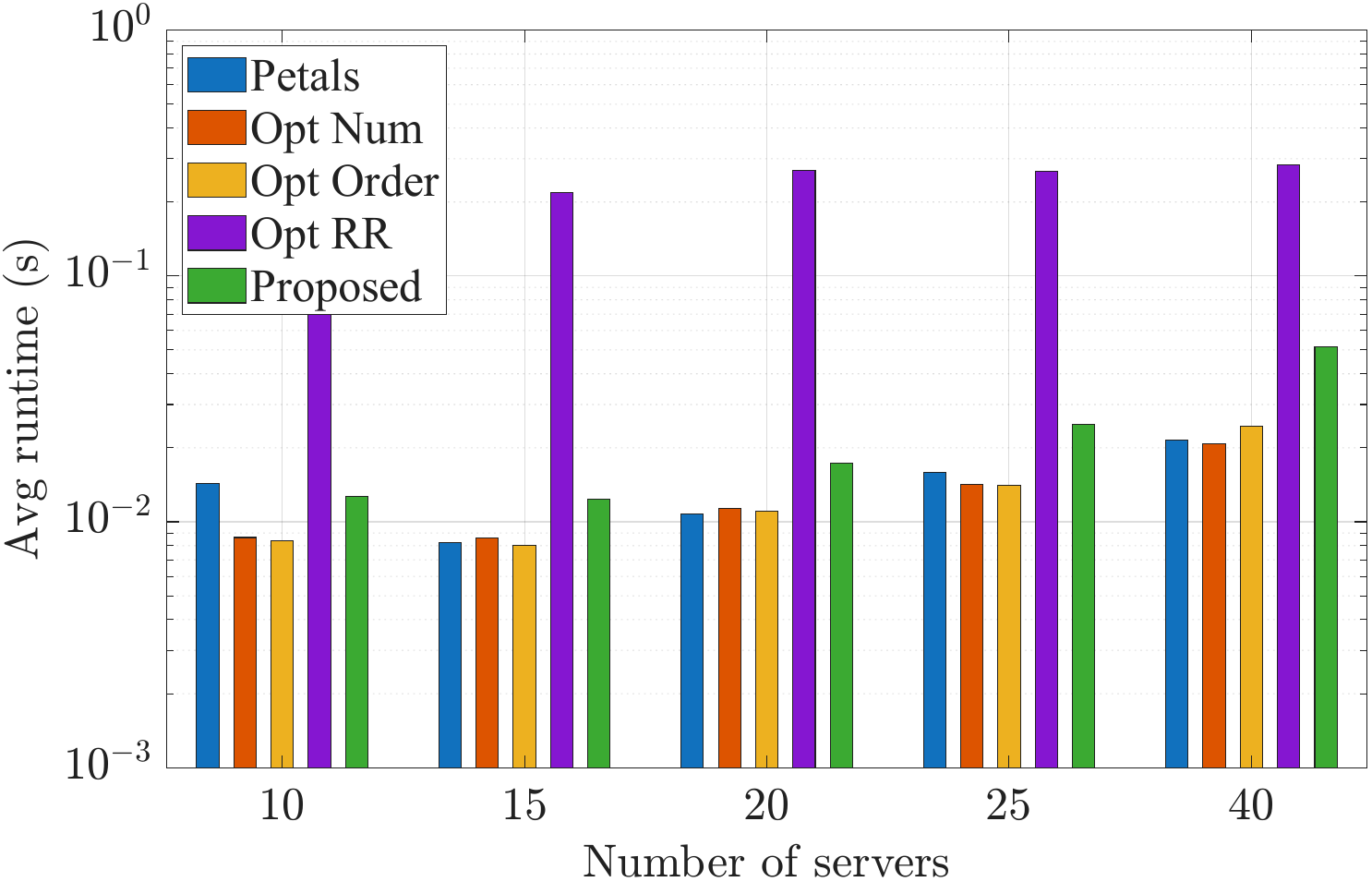}
  \caption{BellCanada}
\end{subfigure}\hfill
\begin{subfigure}[t]{0.31\textwidth}
  \centering
  \includegraphics[width=\linewidth]{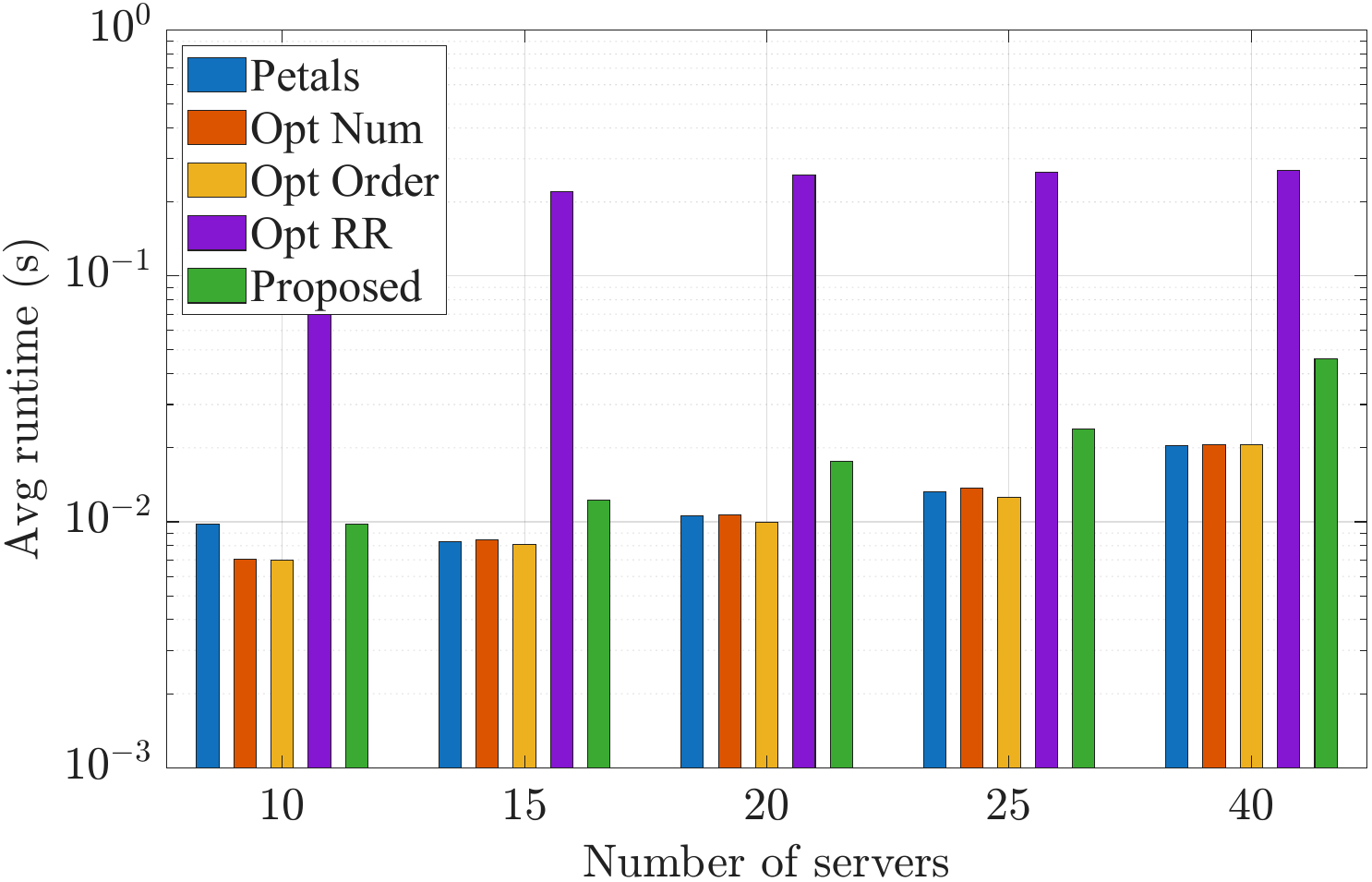}
  \caption{GTS-CE}
\end{subfigure}

\vspace{-0.6em}
\caption{Algorithm running time when varying \#servers $C$ ($\eta = 0.2$, $\lambda=0.5$, $N_R = 100$, $\lmax^I=20$, $\lmax = 128$).}
\label{fig:running_time_vary_C}
\vspace{-0.4em}
\end{figure}

\begin{figure}[t!]
\centering
\begin{subfigure}[t]{0.31\textwidth}
  \centering
  \includegraphics[width=\linewidth]{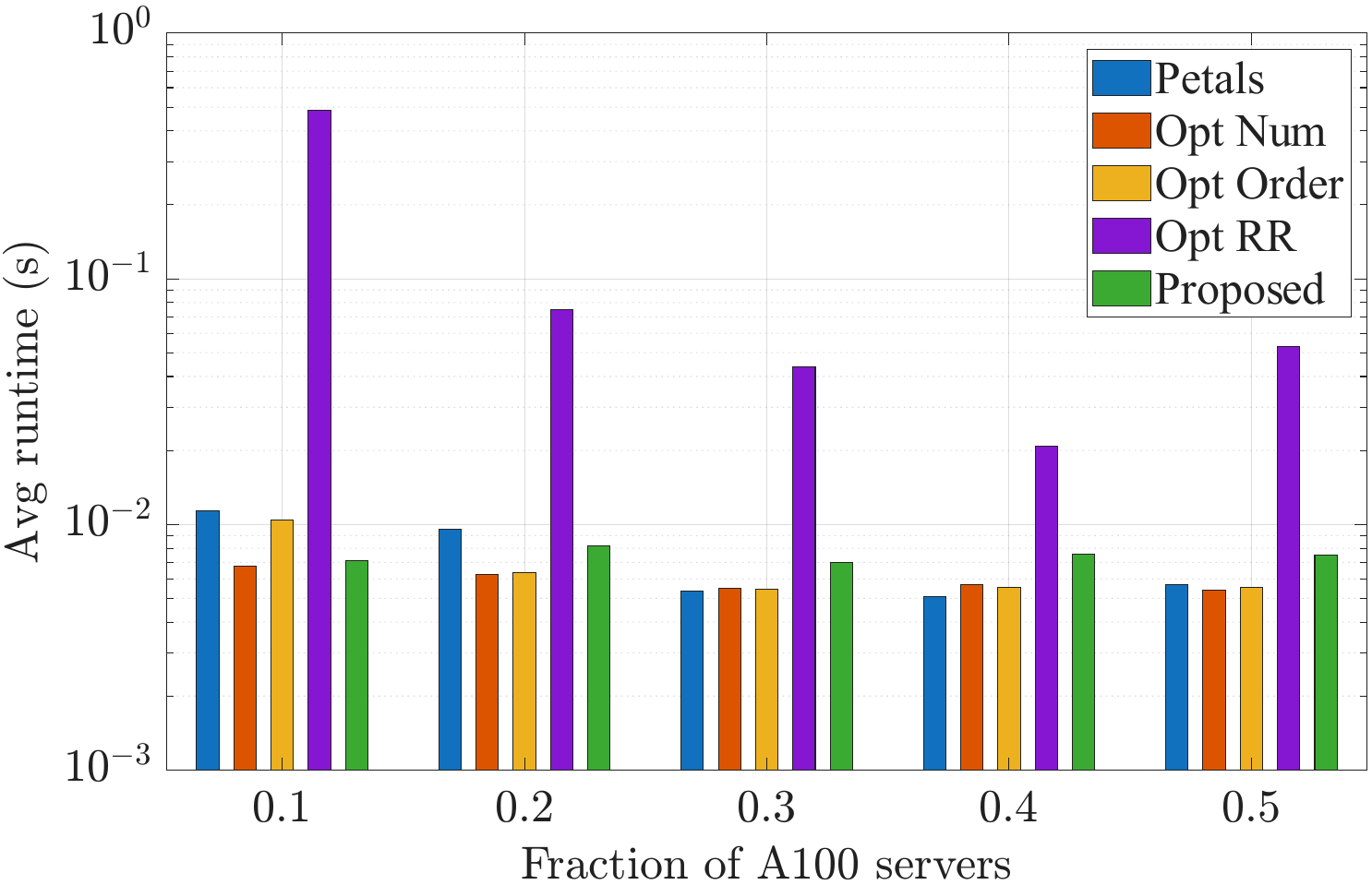}
  \caption{AboveNet}
\end{subfigure}\hfill
\begin{subfigure}[t]{0.31\textwidth}
  \centering
  \includegraphics[width=\linewidth]{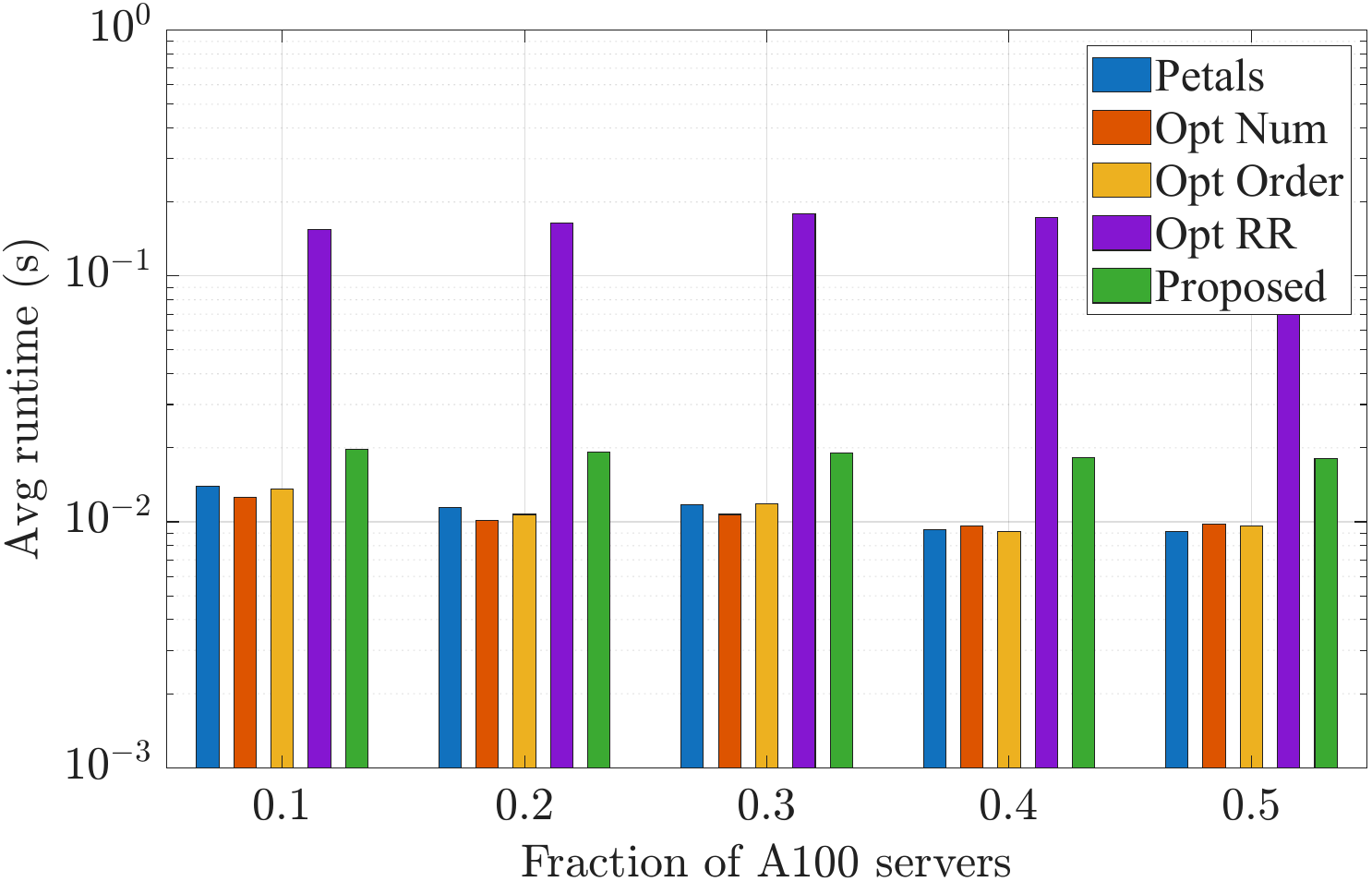}
  \caption{BellCanada}
\end{subfigure}\hfill
\begin{subfigure}[t]{0.31\textwidth}
  \centering
  \includegraphics[width=\linewidth]{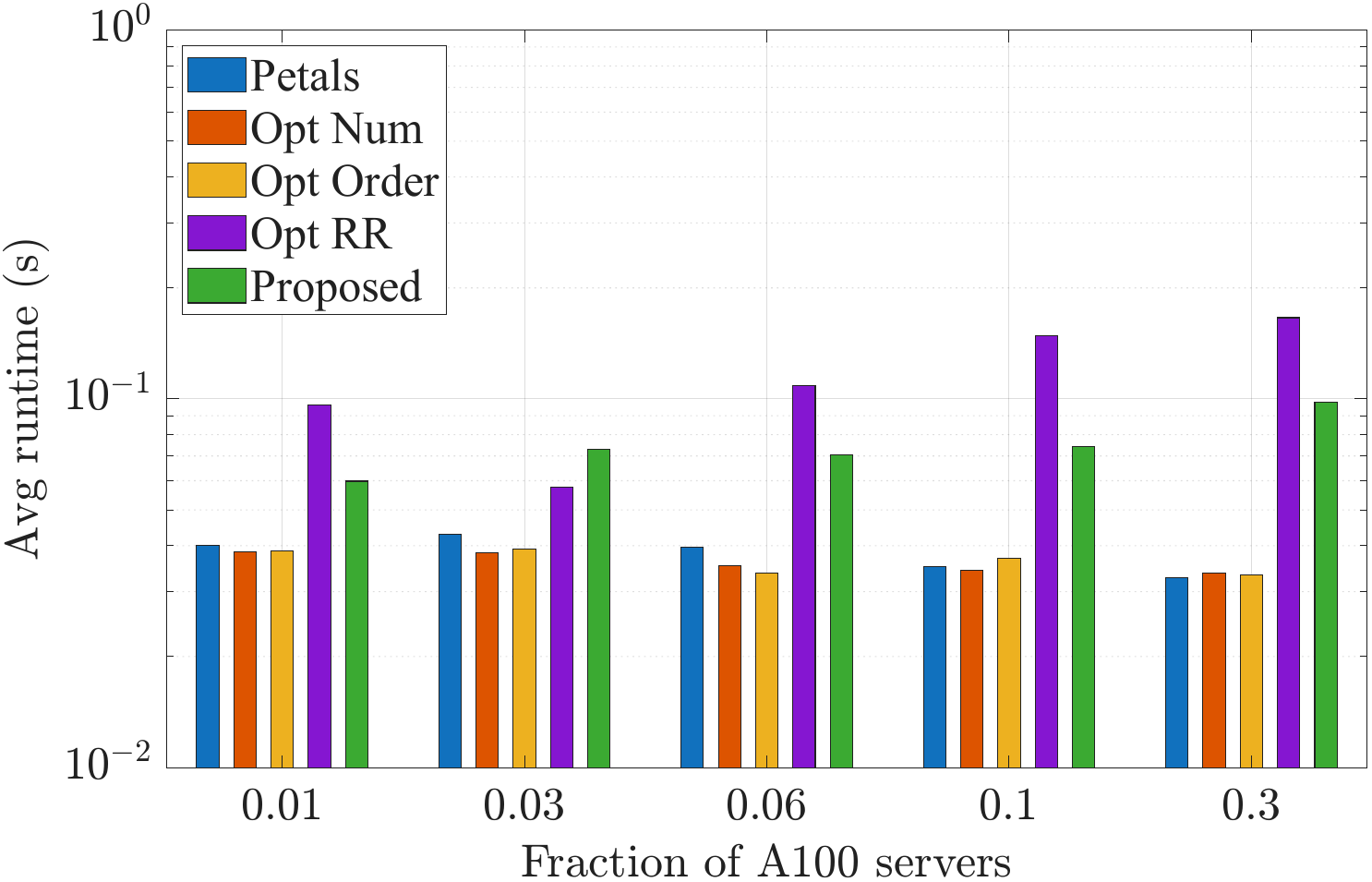}
  \caption{GTS-CE}
\end{subfigure}

\vspace{-0.6em}
\caption{Algorithm running time when varying frac. of high-performance servers $\eta$ ($C = 0.4 \cdot \mbox{\#nodes}$, $\lambda=0.5$, $N_R=100$, $\lmax^I=20$, $\lmax = 128$).}
\label{fig:running_time_vary_eta}
\vspace{-0.4em}
\end{figure}

\begin{figure}[t!]
\centering
\begin{subfigure}[t]{0.31\textwidth}
  \centering
  \includegraphics[width=\linewidth]{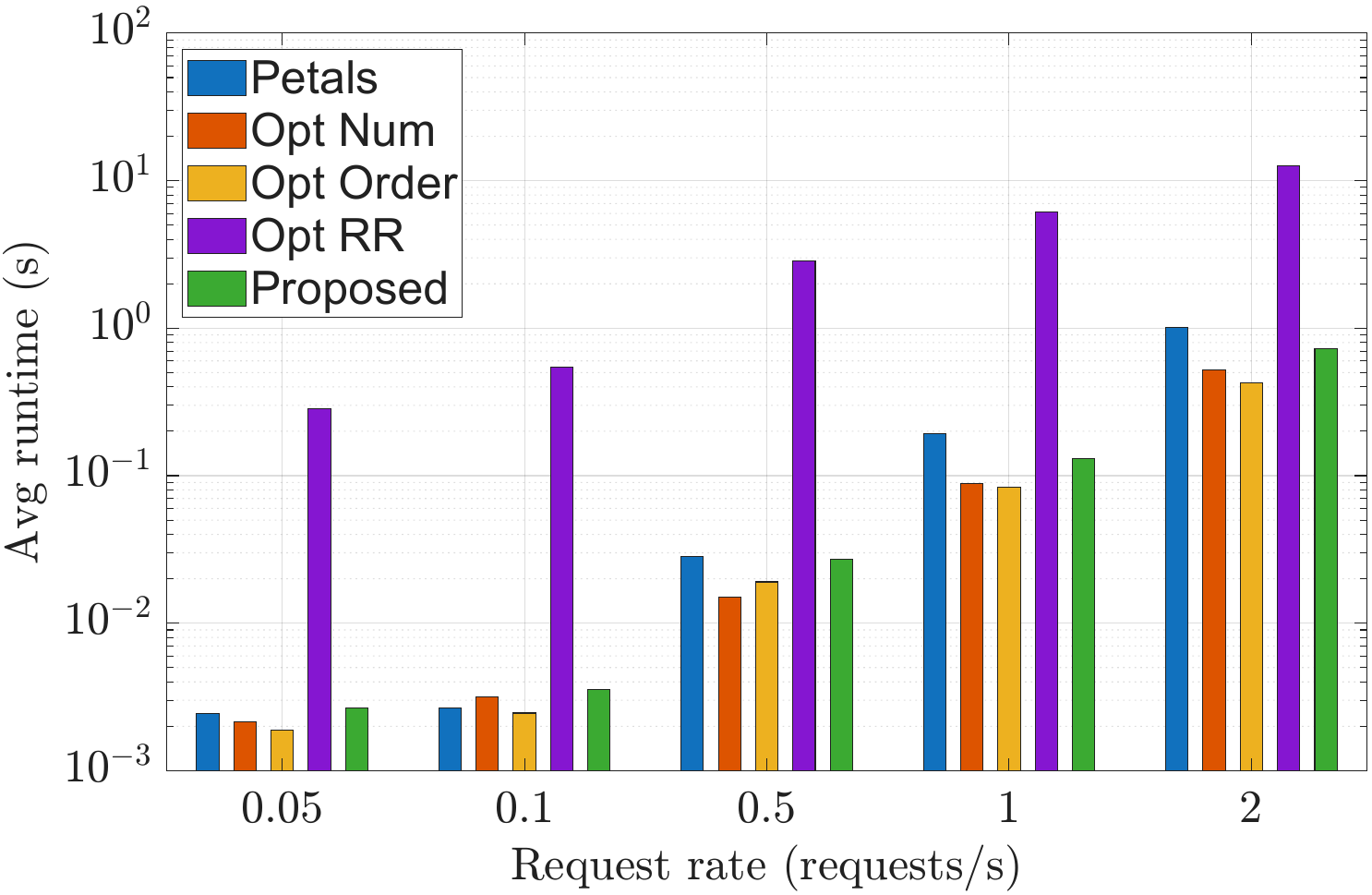}
  \caption{AboveNet}
\end{subfigure}\hfill
\begin{subfigure}[t]{0.31\textwidth}
  \centering
  \includegraphics[width=\linewidth]{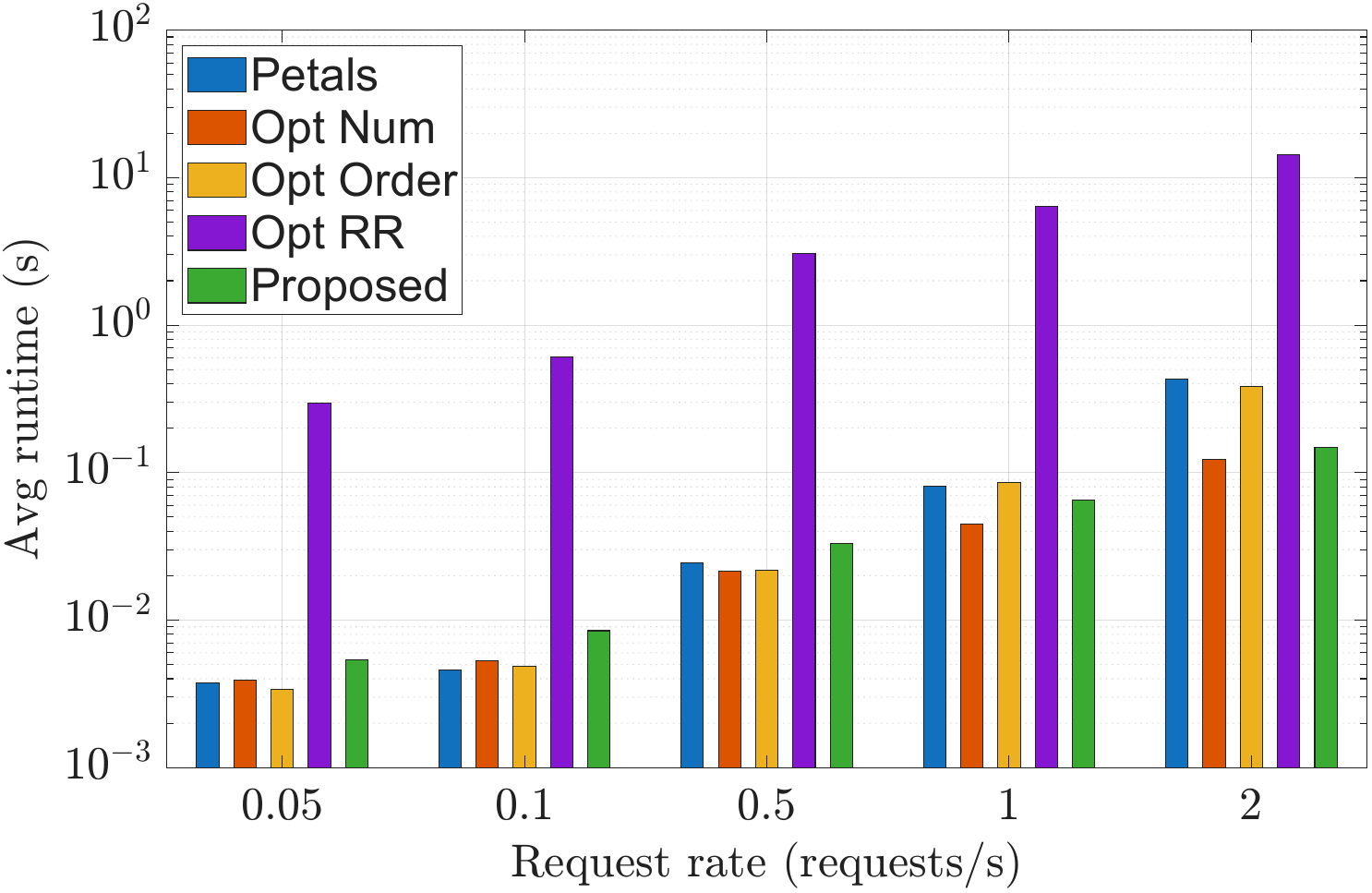}
  \caption{BellCanada}
\end{subfigure}\hfill
\begin{subfigure}[t]{0.31\textwidth}
  \centering
  \includegraphics[width=\linewidth]{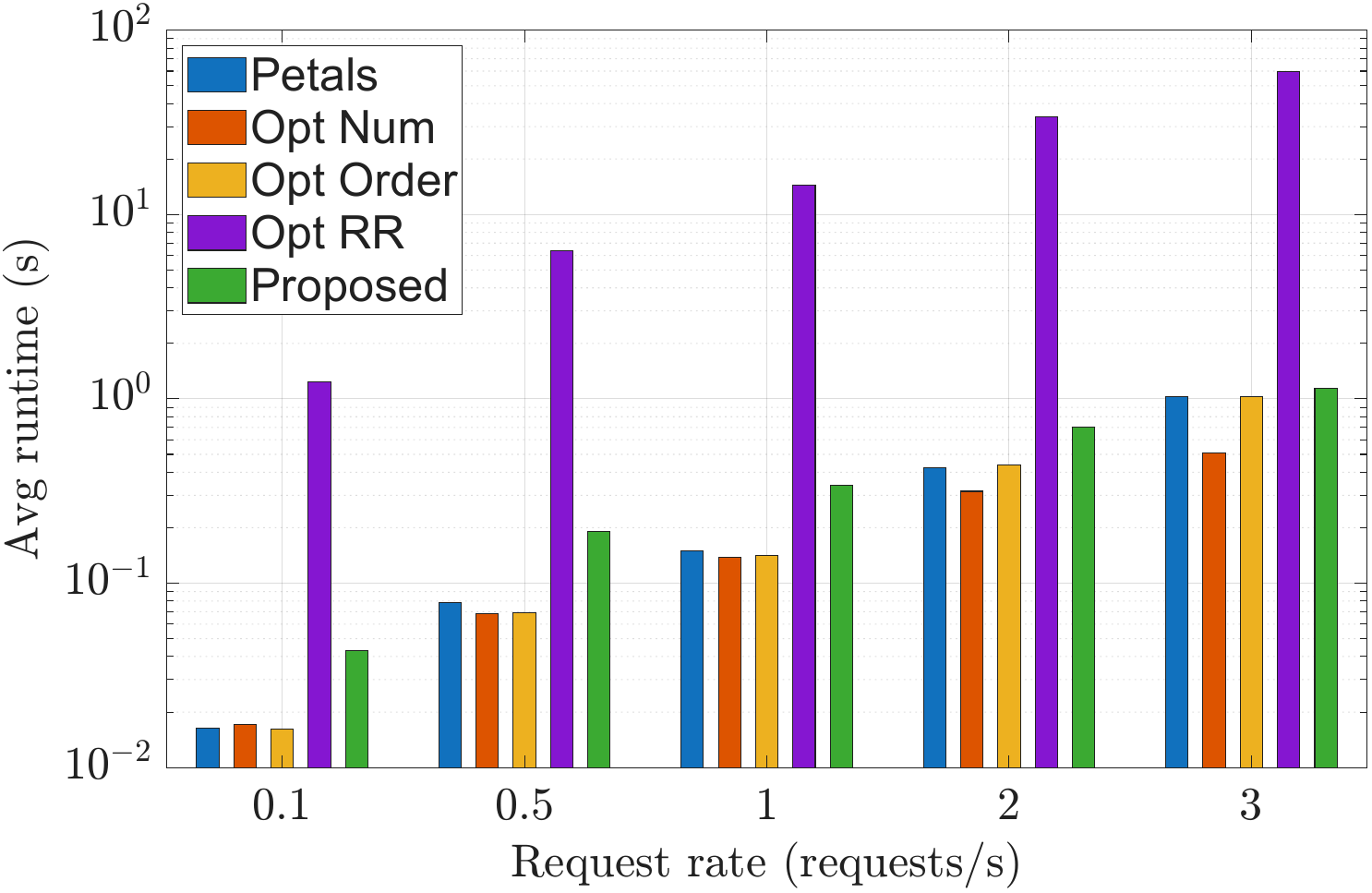}
  \caption{GTS-CE}
\end{subfigure}

\vspace{-0.6em}
\caption{Algorithm running time when varying request rate $\lambda$ with $N_R=200\cdot \lambda$ ($C = 0.4 \cdot \mbox{\#nodes}$, $\eta = 0.2$, $\lmax^I=20$, $\lmax = 128$).}
\label{fig:running_time_vary_lambda}
\vspace{-0.4em}
\end{figure}

\begin{figure}[t!]
\centering
\begin{subfigure}[t]{0.31\textwidth}
  \centering
  \includegraphics[width=\linewidth]{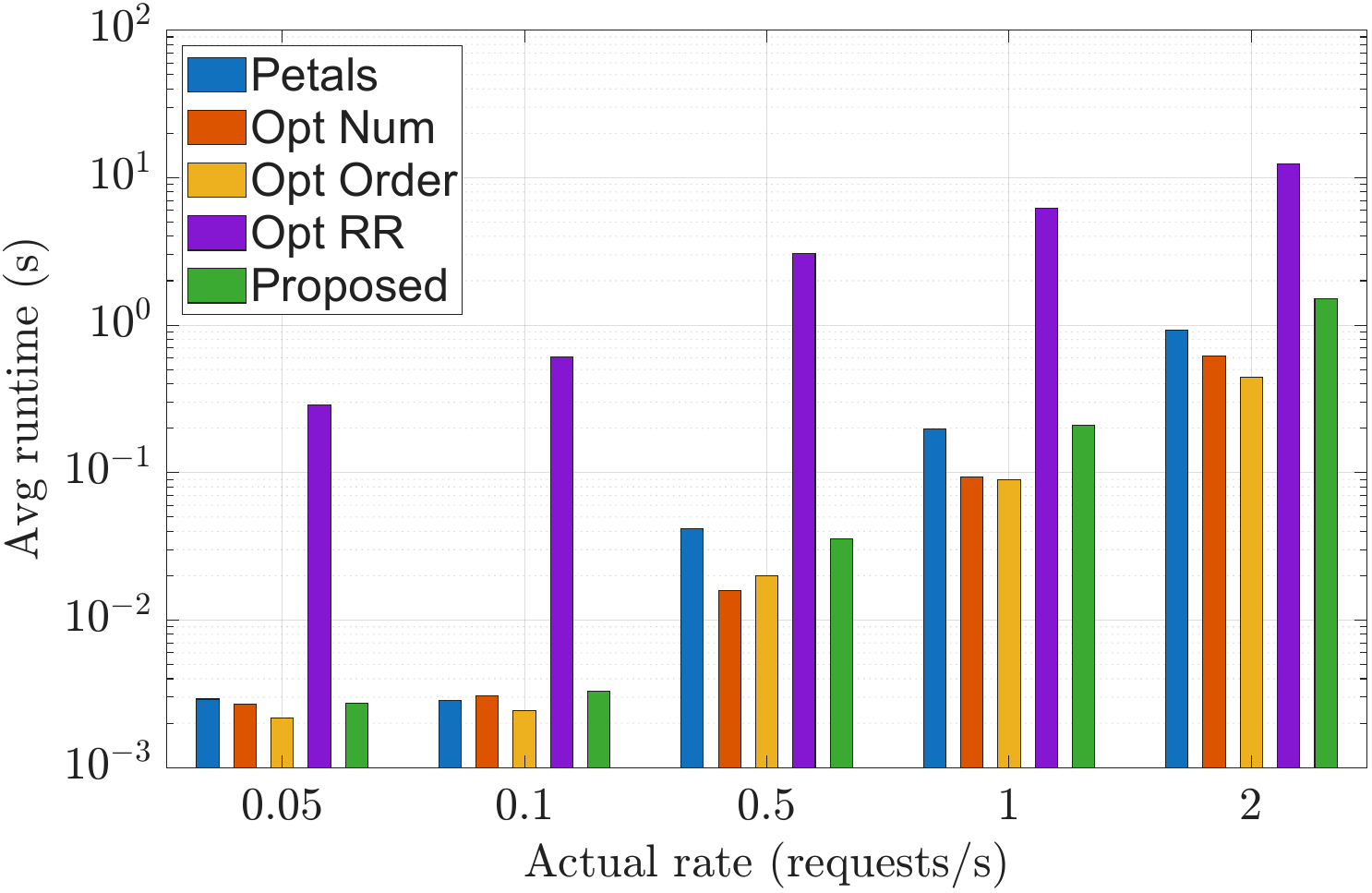}
  \caption{AboveNet}
\end{subfigure}\hfill
\begin{subfigure}[t]{0.31\textwidth}
  \centering
  \includegraphics[width=\linewidth]{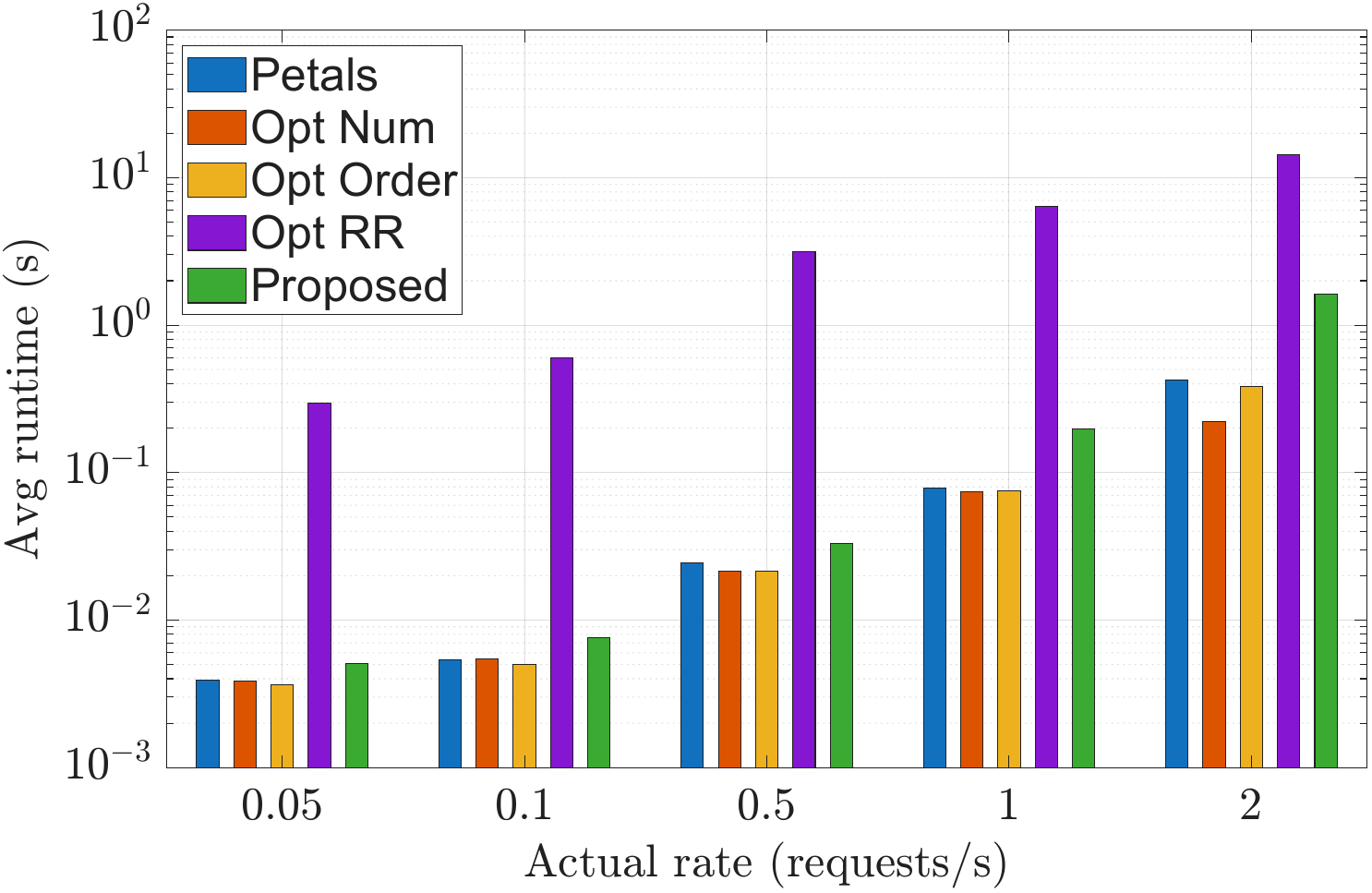}
  \caption{BellCanada}
\end{subfigure}\hfill
\begin{subfigure}[t]{0.31\textwidth}
  \centering
  \includegraphics[width=\linewidth]{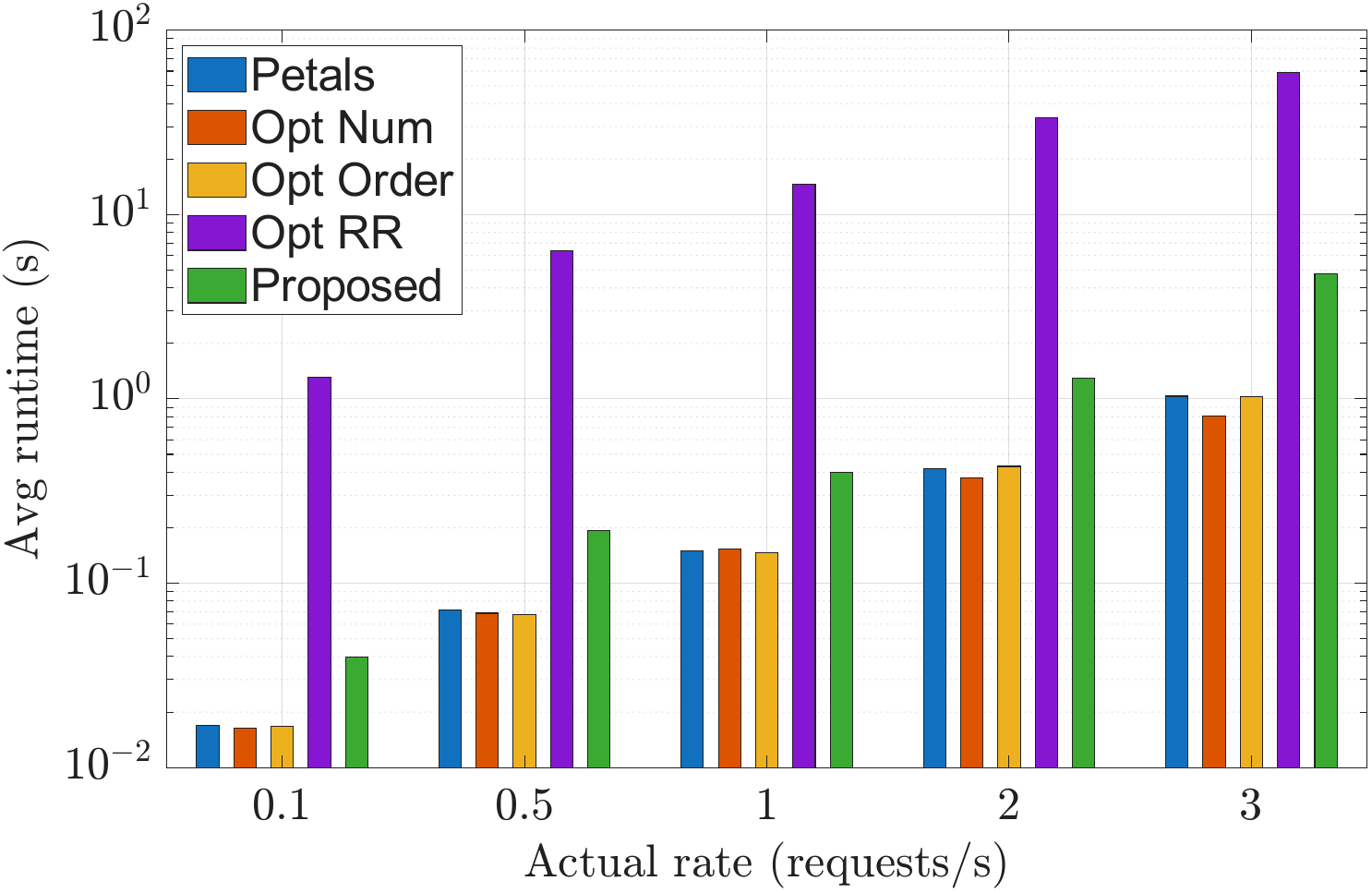}
  \caption{GTS-CE}
\end{subfigure}

\vspace{-0.6em}
\caption{Algorithm running time when varying actual rate $\lambda$ ($C = 0.4 \cdot \mbox{\#nodes}$, $\lambda_{base}=0.5$, $N_R=200\cdot \lambda$, $\eta = 0.2$, $\lmax^I=20$, $\lmax = 128$).}
\label{fig:running_time_vary_actual_lambda}
\vspace{-0.4em}
\end{figure}

\begin{figure}[t!]
\centering
\begin{subfigure}[t]{0.31\textwidth}
  \centering
  \includegraphics[width=\linewidth]{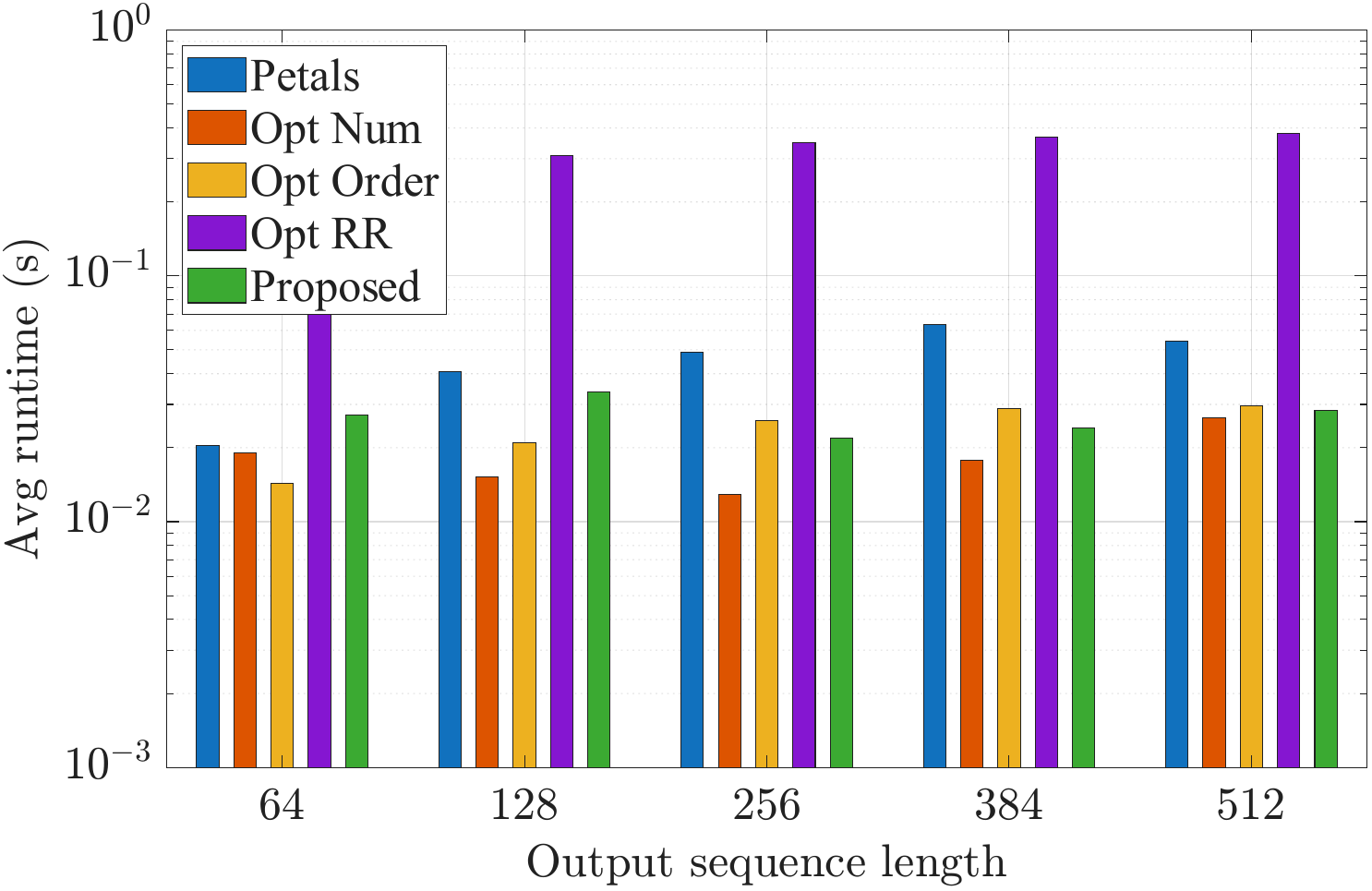}
  \caption{AboveNet}
\end{subfigure}\hfill
\begin{subfigure}[t]{0.31\textwidth}
  \centering
  \includegraphics[width=\linewidth]{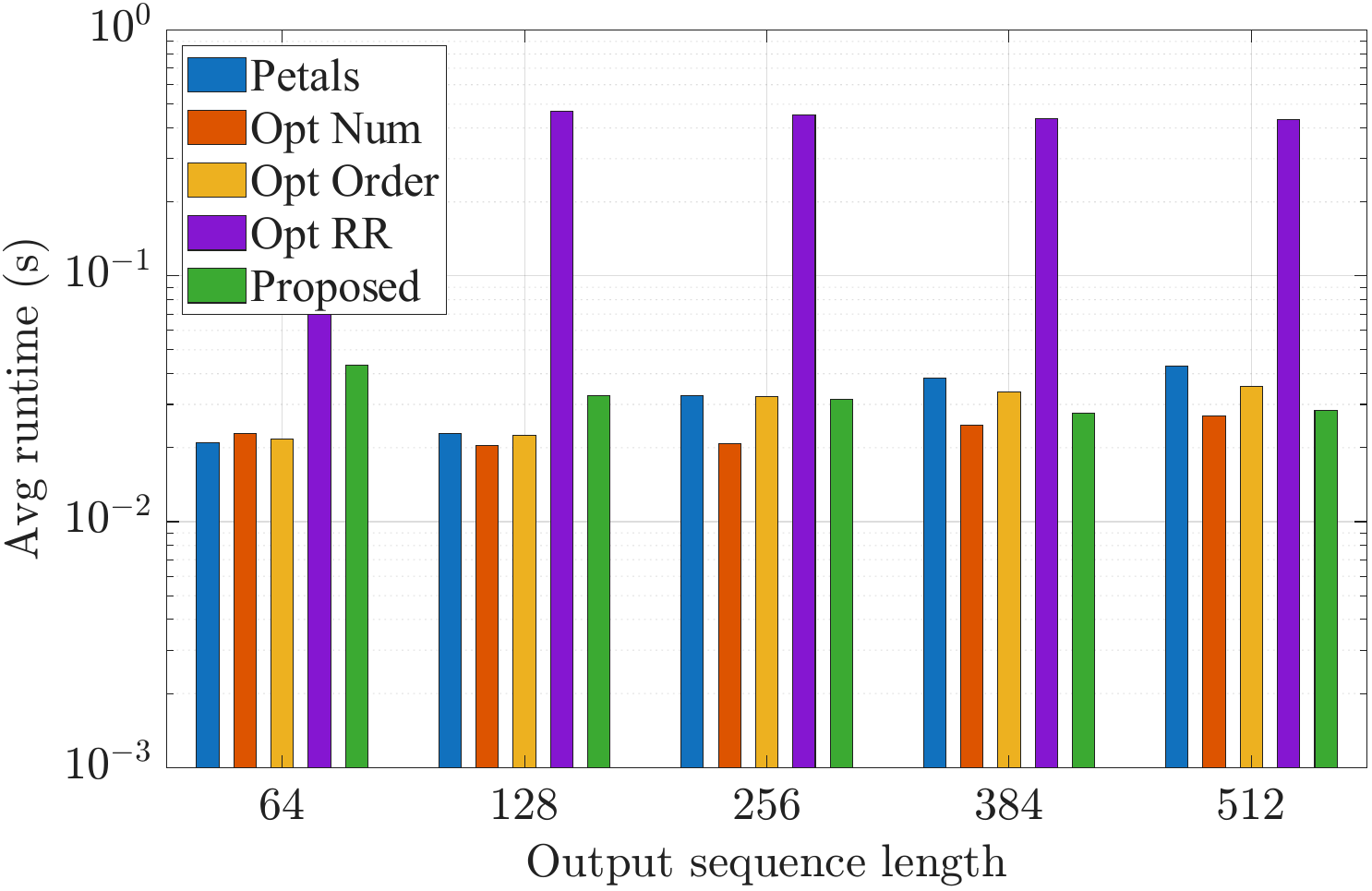}
  \caption{BellCanada}
\end{subfigure}\hfill
\begin{subfigure}[t]{0.31\textwidth}
  \centering
  \includegraphics[width=\linewidth]{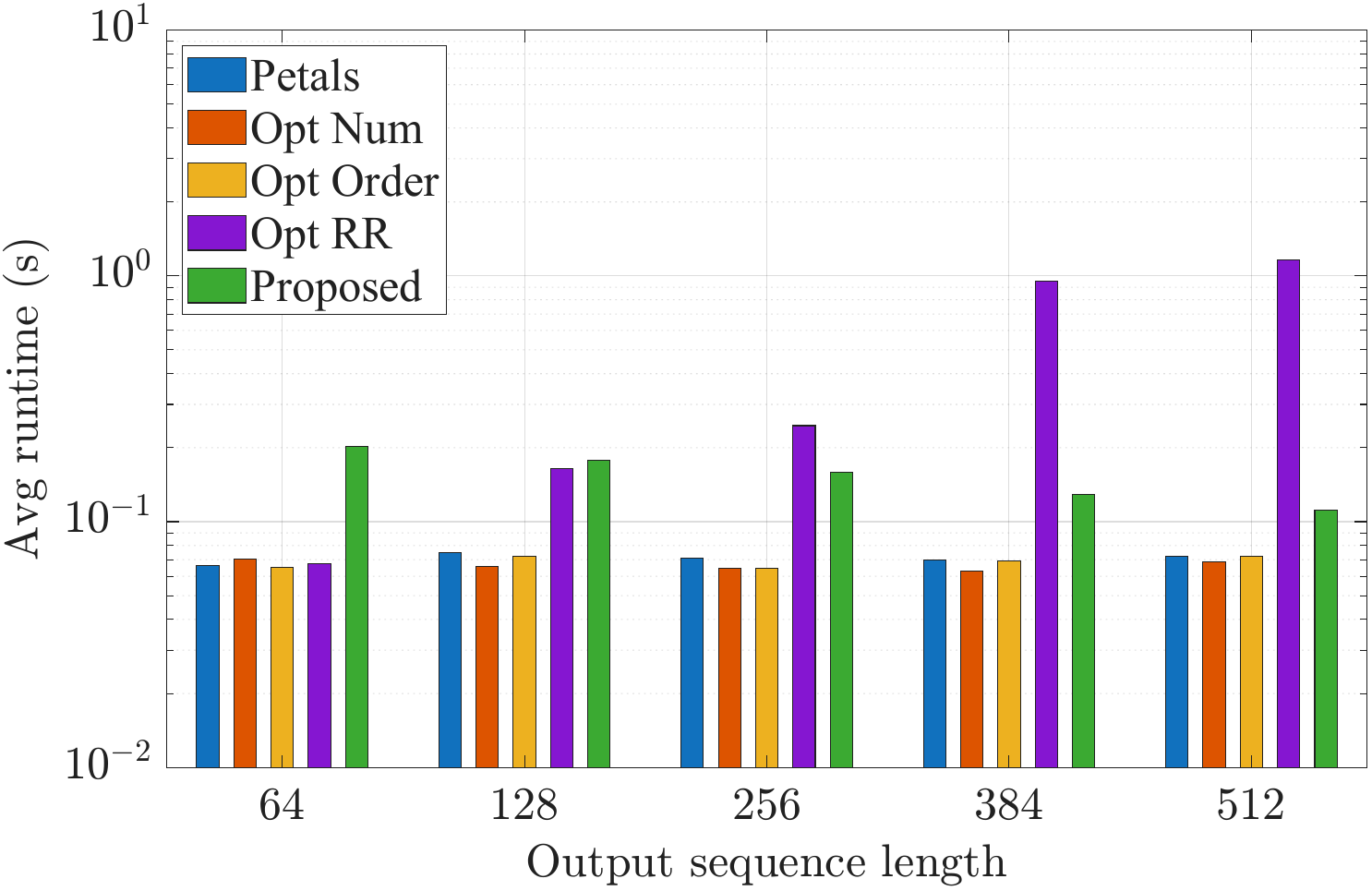}
  \caption{GTS-CE}
\end{subfigure}

\vspace{-0.6em}
\caption{Algorithm running time when varying the sequence length $\lmax$ ($C = 0.4 \cdot \mbox{\#nodes}$, $\eta=0.2$, $\lambda=0.5$, $N_R=100$, $\lmax^I=20$).}
\label{fig:running_time_vary_lmax}
\vspace{-0.4em}
\end{figure}

\begin{figure}[t!]
\centering
\begin{subfigure}[t]{0.31\textwidth}
  \centering
  \includegraphics[width=\linewidth]{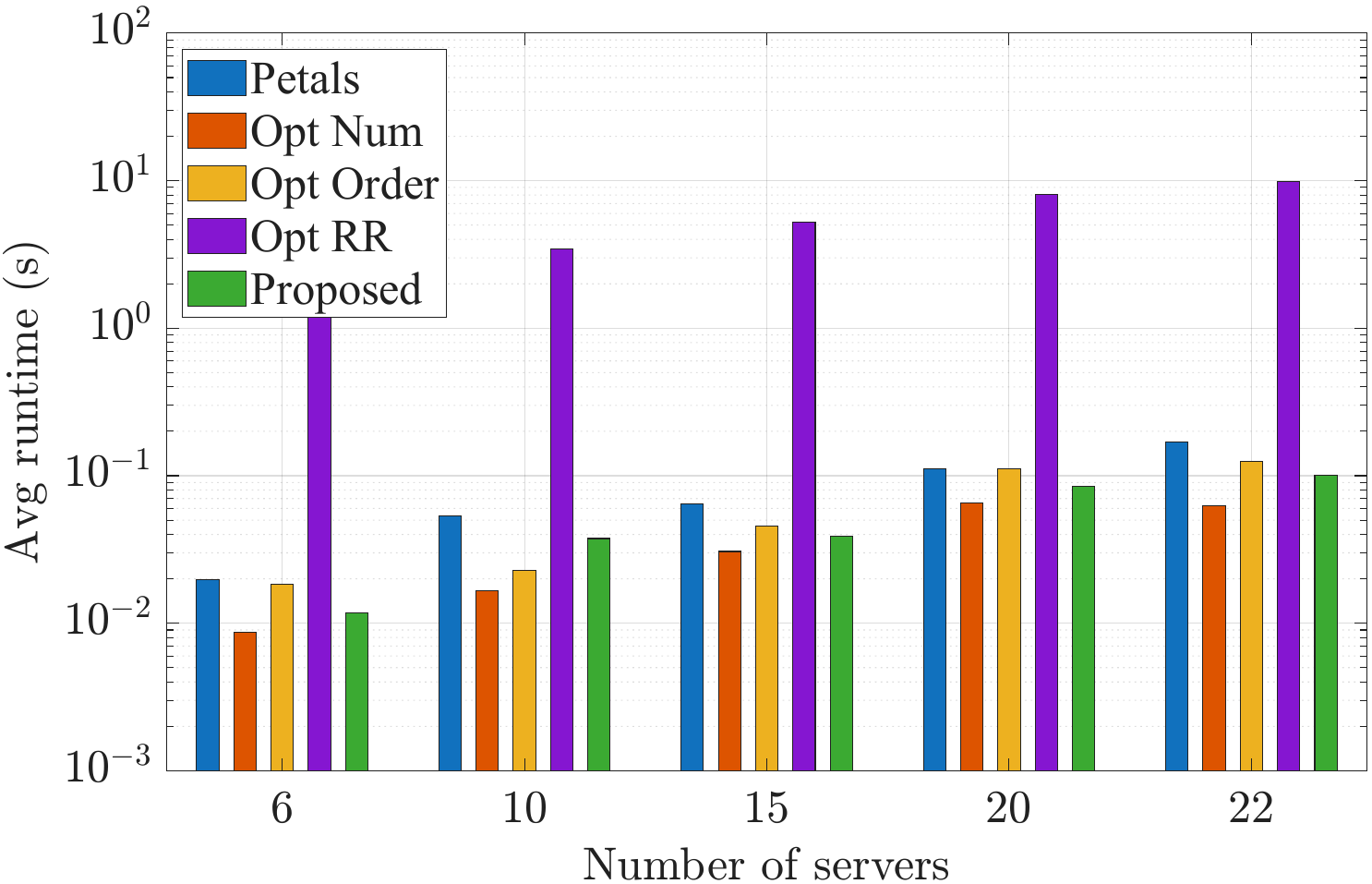}
  \caption{AboveNet}
\end{subfigure}\hfill
\begin{subfigure}[t]{0.31\textwidth}
  \centering
  \includegraphics[width=\linewidth]{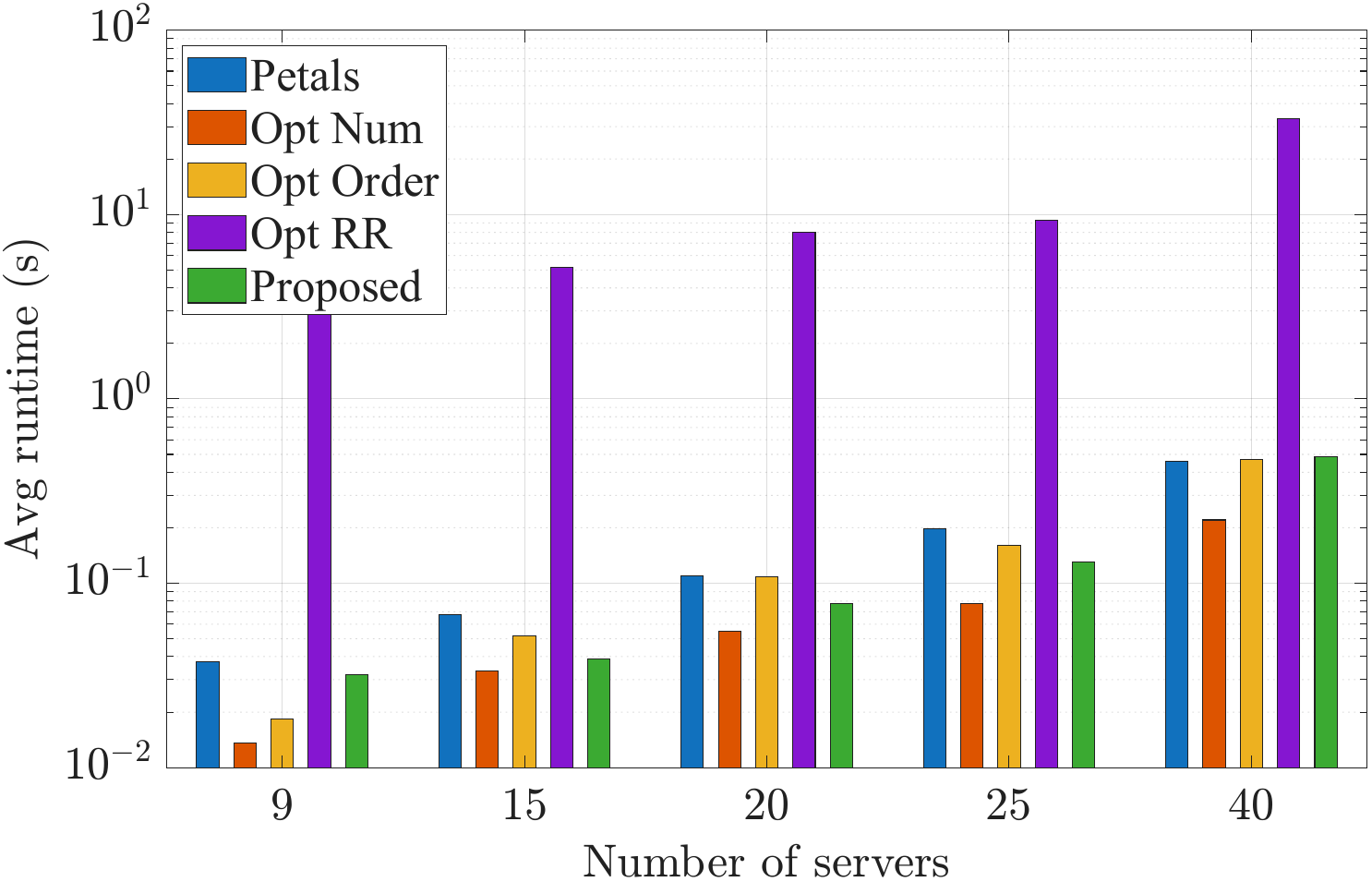}
  \caption{BellCanada}
\end{subfigure}\hfill
\begin{subfigure}[t]{0.31\textwidth}
  \centering
  \includegraphics[width=\linewidth]{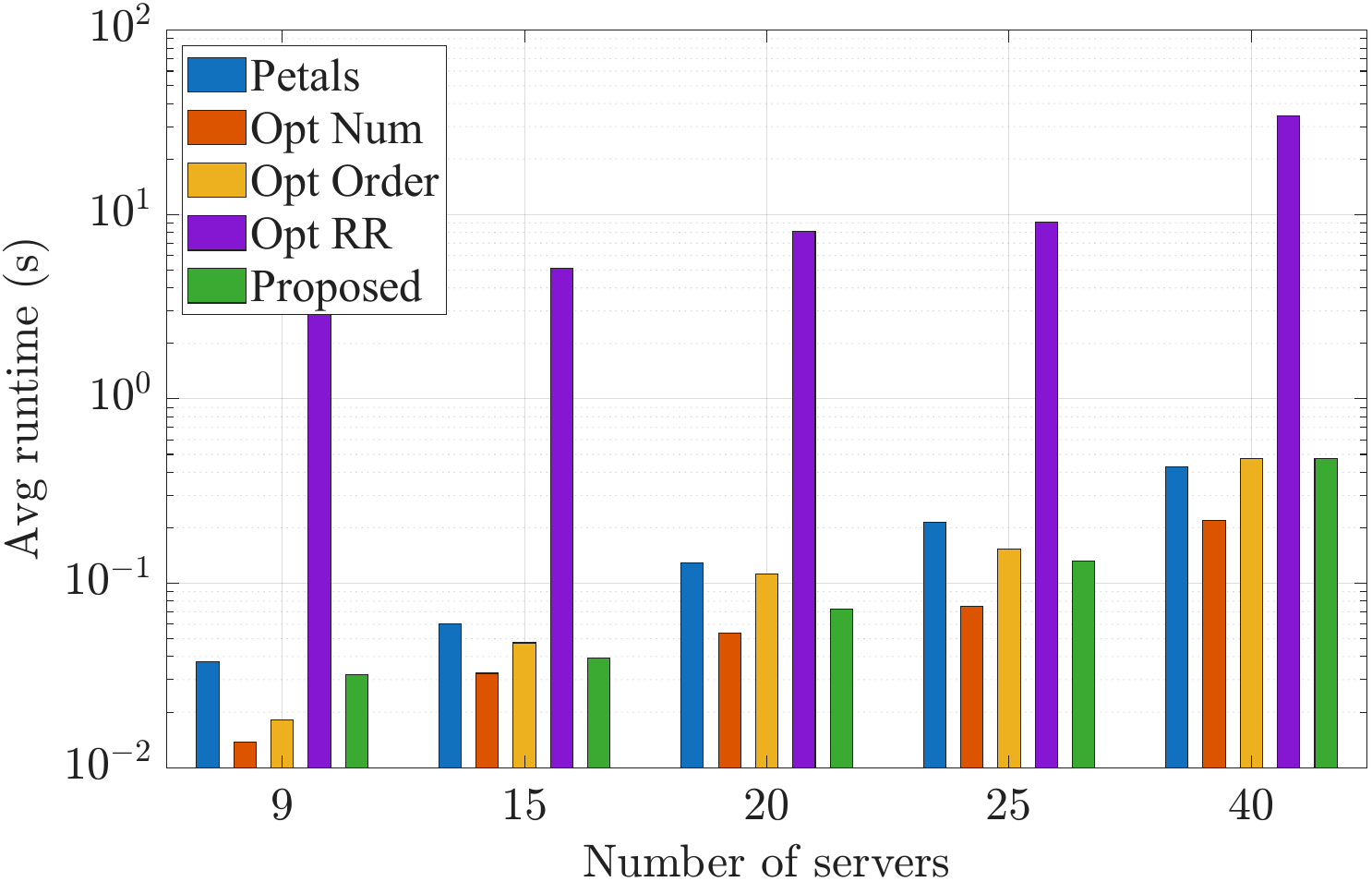}
  \caption{GTS-CE}
\end{subfigure}

\vspace{-0.6em}
\caption{Algorithm running time when varying the number of servers $C$ and rate $\lambda = {(0.1/9)}\cdot C$ ($N_R=100$, $\eta=0.2$, $\lmax^I=20$, $\lmax = 128$).}
\label{fig:running_time_vary_Clamda}
\vspace{-0.4em}
\end{figure}



Fig.~\ref{fig:running_time_vary_C}--\ref{fig:running_time_vary_lmax} show the running time of each algorithm in the simulations of Fig.~\ref{fig:inference_time_vary_C}--\ref{fig:inference_time_vary_lmax}. The results show that both the proposed algorithm (`Proposed') and the original algorithm in \cite{Borzunov23NeurIPS} (`Petals') are fast enough so that their execution costs negligible time compared to the actual inference time. The running time for `Optimized RR' is substantially higher than the others as in this case we directly solve the MILP \eqref{eq:online RR} by the Gurobi optimizer \cite{Gurobi}. The same observation holds in Fig.~\ref{fig:running_time_vary_Clamda}, which corresponds to the case in Fig.~\ref{fig:inference_time_vary_Clamda} as we proportionally increase both the number of servers and the request rate.

\end{document}